\colorlet{darkred}{red!70!black}
\colorlet{darkblue}{blue!70!black}
\numberwithin{equation}{section}
\numberwithin{figure}{section}
\theoremstyle{plain}
\newtheorem{theorem}{Theorem}
\newtheorem{corollary}{Corollary}
\newtheorem{lemma}{\protect\lemmaname}
\newtheorem*{lemma*}{Lemma}
\newtheorem{proposition}{Proposition}
\newtheorem{definition}{Definition}
\theoremstyle{definition}
\newtheorem{remark}{Remark}
\newdimen\dummy
  \providecommand{\lemmaname}{Lemma}
\pgfplotsset{compat=1.18}
\DeclareMathOperator{\br}{br}
\title{Strategic Learning with Asymmetric Rationality\\[1ex]
  }
\author{Qingmin Liu\footnote{ql2177@columbia.edu, Department of Economics, Columbia University.}  \quad \quad  Yuyang Miao\footnote{ym2921@columbia.edu, Department of Economics, Columbia University. }}
\date{November 10, 2025}
\begin{document}

\pagenumbering{roman}
\setcounter{page}{1}

\maketitle

\begin{abstract}
This paper analyzes a dynamic interaction between a fully rational, privately informed sender and a boundedly rational, uninformed receiver with memory constraints. The sender controls the flow of information, while the receiver designs a decision-making protocol that uses a finite state space to learn and to provide incentives. We characterize optimal protocols and quantify the scope for manipulation and the incentive cost of guarding against it. We show that distinctive behavioral patterns that might otherwise appear erratic or psychologically driven---such as information disengagement, opinion polarization conditional on the same information, and indecision near the decision point---emerge as systematic equilibrium responses to asymmetric rationality and information. The model provides an expressive framework for procedural rationality in strategic settings.

\end{abstract}

\newpage

{\small
\tableofcontents
}

\newpage

\pagenumbering{arabic}
\setcounter{page}{1}

\setlength{\textheight}{8.7in}

\section{Introduction}
This paper examines the strategic interaction between a boundedly rational decision-maker (receiver, he) and a fully rational but biased expert (sender, she). The receiver faces a dual challenge: extracting useful information under asymmetric information and guarding against strategic manipulation rooted in asymmetric rationality. These tensions arise in many settings---such as when policymakers rely on research reports from think tanks or interest groups, regulators evaluate corporate disclosures, or voters interpret streams of media narratives. In these contexts, the rationality asymmetry between boundedly rational decision-makers and far more sophisticated experts may reflect either cognitive limitations or, perhaps more often, disparities in expertise, resources, bandwidth, or motivation.

The preliminaries of our model are familiar.\footnote{See, e.g., \citet{glazer2004optimal} and \citet{kamenica2011bayesian}.} The sender privately observes a binary state of the world, while the uninformed receiver starts with a prior belief. Their incentives are misaligned: the receiver aims to match his decision to the true state, whereas the sender always prefers a particular action, regardless of the truth.

This canonical setup is enriched along two dimensions. First, we model a dynamic process of information solicitation and provision in which both parties are active: the receiver can request additional evidence, the sender may voluntarily provide further information over time, and either party may end the process. Such dynamics are common in settings such as pharmaceutical firms seeking FDA approval, interest groups lobbying policymakers, or media outlets attempting to sway voter opinions. In the model, the sender is constrained by a fixed signal-generating process that depends on the state that the sender privately knows: she decides whether to generate a signal but cannot fabricate, conceal, or freely design it. This reflects contexts in which, for example, the FDA imposes experimental guidelines or journalists are bound by verifiable evidence.

Second---and central to our model---is the formulation of the receiver’s bounded rationality. Rather than assuming perfect Bayesian updating, we model the receiver as a finite-state automaton (or ``protocol'')---subject only to the constraint of a finite state space for encoding information and making decisions, but otherwise free to optimize how information is interpreted, how memory states evolve, and when and what decisions are made.\footnote{There is a fundamental question of whether bounded rationality can be modeled as rational choice over decision protocols subject to constraints, and in what sense maximization over rules entails no loss of generality. We refer the reader to \citet{lipman1991decide} for a resolution of this question.} The automaton model is a canonical representation of constrained information processing and computational complexity, familiar in computer science \citep{hellman1970learning, hopcroft2006introduction}, repeated games \citep{rubinstein1986finite, abreu1988structure}, and economics \citep{wilson2014bounded}.\footnote{There are many economic applications of finite automaton models; see, for example, \citet{rubinstein1998modeling}, 
\citet{kocer2010endogenous}, 
\citet{salant2011procedural}, \citet{compte2015plausible}, 
and \citet{safonov2024slow}. Finite automata differ from finite-memory formulations in games with learning, as in \citet{liu2014limited} and \citet{pei2024reputation}. They also differ from perceptron models with finitely many classifiers; see \citet{cho1996neural} for discussion. \citet{anderson2025} consider a Bayesian decision maker who partitions the belief space and optimally resets beliefs after random memory shocks.} Our notion of bounded rationality differs from the approach of \citet{rubinstein1986finite}, which models the complexity of strategies, but aligns with the approach of \citet{hellman1970learning} and \citet{wilson2014bounded}, which concerns constrained information processing.\footnote{A notable model in the tradition of \citet{rubinstein1986finite}, in contrast to our interpretation of asymmetric rationality, is \citet{gilboa1989bounded}, who study a repeated game between a deterministic finite automaton and a fully rational player, in which, for certain games, the boundedly rational player can exploit its own limitations to act as if committed.}  


The automaton model of learning is straightforward to formulate and intuitive to interpret. However, it is not without limitations: the analysis can be technically demanding, particularly when compared with more behaviorally motivated models. 
Yet, the distinctive strength of \citet{hellman1970learning} within the computer science literature lies in its characterization of the optimal attainable payoff and the corresponding optimal automata for a fixed memory size, rather than the asymptotic properties of simple, non-optimal finite-state algorithms as the memory constraint vanishes. The central insight of \citet{wilson2014bounded} within economics lies in its behavioral predictions, derived as outcomes of constrained optimization rather than posited as assumptions. Both papers and their follow-ups can be formulated as single-agent decision problems with non-strategic information. We study a richer game with strategic learning and information provision, analyzing both optimal payoffs and strategies. 

One of our main aims is to identify anomalous behavioral patterns or decision protocols as equilibrium responses to asymmetric rationality---whether arising from (institutional or individual) design or \textit{as if} shaped by evolutionary forces---and to understand the mechanisms that drive them.\footnote{Our agenda of understanding behavioral implications of bounded rationality thus differs from \citet{glazer2012model}, who study persuasion by a boundedly rational sender constrained to construct messages that satisfy the listener’s feasibility requirements, in a setting without learning or dynamic behavior.} For applications, the assumption that an institutional player such as the FDA or the SEC is boundedly rational, while its counterpart---a pharmaceutical company or an investment bank---is fully rational, is quite plausible. One need only compare the resources and human capital a pharmaceutical company devotes to developing a drug with those the FDA can allocate to its approval, or contrast a policymaker juggling multiple issues with an interest group focused on advancing a specific policy agenda. We may interpret the finite automaton as a rule or protocol for learning and decision-making, which could shed light on rule-based behavior and the optimal design of institutions under strategic persuasive situations.\footnote{This procedural rationality perspective on organizational decision-making is emphasized in \citet{simon1976substantive}. Rather than viewing organizations as compensatory devices for individual limitations, a growing literature on organizational behavior treats them as inherently constrained in their information-processing capacity; see, e.g., \citet{shannon2019bounded} and \citet{levitt1988organizational}.} The protocol governs how observed information is translated into an information-decision state within the protocol, and how that state evolves with new observations. The institution optimizes these rules ex ante, anticipating the sender’s incentives to manipulate. The sender, in turn, understands the protocol and dynamically decides whether to continue providing information.\footnote{In this vein, the model differs from \citet{ekmekci2011sustainable}, who models a third party’s rating as an automaton in a reputation game---there are no exogenous memory limits, but the optimal design that sustains cooperation conceals information.}

The central tradeoff in the design of a protocol is thus between learning (for the designer) and incentive provision (to the other party). Both tasks draw on states in the state space, which are scarce resources. We identify a simple class of receiver protocols, termed \textit{parsimonious}, that are optimal among all protocols. A parsimonious protocol features two absorbing memory states corresponding to distinct final decisions, while all other states are transient and prescribe actions unfavorable to the sender. Absorbing states entail an informational loss---halting further learning---but an incentive gain: if the sender ceases to provide information before absorption, an unfavorable action follows with probability one, which is strictly worse for the sender than the lottery over the two absorbing states. Through this design, the receiver grasps full control over the information flow, compelling the sender to act as if non-strategic. Despite the intricate transition rules of optimal or near-optimal parsimonious protocols, their design allows the receiver to achieve the same payoff as against a truly non-strategic sender, but with one fewer memory state, precisely quantifying the cost of guarding against strategic persuasion and the scope of manipulation.

The architecture of optimal parsimonious protocols mirrors behavior that might otherwise be interpreted as erratic, irrational, or psychologically driven. When an absorbing memory state is reached, which occurs almost surely, the decision-maker fully commits to a position and ceases to attend to further evidence---a behavior often understood as strategic ignorance or information disengagement. More importantly, this behavior arises as a rational adaptation to strategic considerations under memory constraints: remaining perpetually open to new information would effectively cede decision-making power to the sender, resulting in full manipulability. Therefore, this parsimonious protocol, along with the consequent behavioral implications and, crucially, their underlying mechanisms, stand in contrast to those established in \citet{wilson2014bounded}, \citet{hellman1970learning}, and subsequent work with non-strategic information, which feature recurrent memory states with varying likelihoods. In fact, their optimal or near-optimal automata are fully manipulable by a strategic sender, who can induce her preferred decision with probability one.\footnote{\label{footnote:hellman-wilson}There is a subtle but important distinction between \citet{wilson2014bounded} and \citet{hellman1970learning}. \citet{wilson2014bounded} assumes a positive exogenous stopping rate $\eta$ in each period, and shows that exact optimal automata exist and converge to one with absorbing states as $\eta\rightarrow 0$. However, this limiting automaton is not near-optimal for the underlying decision problem in the limit; it can also be shown that away from the limit, the optimal automata in \citet{wilson2014bounded} remain manipulable by a strategic sender who discounts at the same rate $\eta$, since the transition probability out of the extreme states, on the order of $O(\eta^{1/2})$, is still too large.} It is important to note that stopping in this mechanism is not driven by time preferences, as the model eliminates discounting and all other frictions aside from asymmetric information and rationality, thereby highlighting a new incentive for stopping.

According to this prediction of the model, voters who ``make up their minds'' and tune out further media coverage, and policymakers who disregard additional scientific data after reaching a judgment---even when information remains freely provided or costless to acquire---may both be acting optimally, favoring early commitment over continued susceptibility to influence. Of course, ex post, their decisions may turn out to be mistaken, and additional information could still be valuable. 

In general, an optimal parsimonious protocol features stochastic transitions to the absorbing states. Consequently, both absorbing states can be reached with positive probability conditional on the same sequence of realized signals. Thus, voters with the same level of sophistication may become convinced of opposing policy views while watching the same media, and neither would relinquish their view for fear of being manipulated by the strategic media.

The model also predicts interesting behavior near the decision point. As the receiver's memory state approaches an absorbing state---indicating an imminent final opinion or decision---his behavior becomes increasingly conservative. The probability of transitioning to the absorbing state becomes very small: a large volume of confirmatory signals is needed to convince him to make a decision, while even a single disconfirmatory signal can trigger regression. Thus, the model predicts a distinctive pattern of behavior characterized by hesitation near the decision point and information shutoff once an opinion is formed. 

Behaviors near the decision point---such as last-minute doubt, decision-closure anxiety, negativity bias, and elevated risk aversion near commitment---have been extensively discussed in behavioral science and psychology. A variety of explanations for such forms of ``indecision'' have been proposed and experimentally tested; however, to our knowledge, none distinguishes between strategic and non-strategic forms of information transmission. The model developed here provides a framework for interpreting these behavioral regularities as manifestations of equilibrium reasoning rather than as imposed psychological postulates. If one takes seriously the evolutionary shaping of the human brain, evolution itself can be viewed as the force driving this equilibration. A plausible hypothesis is that strategic information transmission has shaped the same neural architecture that is also used for decision-making with non-strategic senders.  More systematic experimental research is needed to investigate these behaviors further.


\section{Model}\label{sec:model}

\subsection{Basic Setup}

The sender (she) privately observes the state of nature $\theta \in \Theta=\{H,L\}$, with the prior belief $\Pr(\theta=H)=p\in (0,1)$. The receiver (he) faces a single decision problem in which he takes an action in $\{H,L\}$. His payoff is 1 if the action matches the state of nature and 0 otherwise. The sender's payoff is 1 if the action is $H$ and 0 if the action is $L$, regardless of the state of nature.

Before making his decision, the receiver can learn from the sender over time. In each period $t=0,1,...$, the sender decides whether to generate a public imperfect signal $s_t \in S$, but cannot fabricate the signal. The signal-generating process is i.i.d. conditional on the true state. We assume $S$ is finite and the signal distribution has full support: $\pi_\theta(s):=\textup{Pr}(s|\theta)>0$ for all $s\in S$ and $\theta\in \Theta$. Assume also $\pi_H \not\equiv \pi_L$. 

The receiver is boundedly rational with a finite set of ``memory'' states \( M \). We write \(M= \{1, 2, \dots, |M|\}\). His strategy is modeled as an \textbf{automaton} or a \textbf{protocol} on \( M \),  denoted by $\Pi=(f,g,a)$, where:
\begin{itemize}
    \item \( f: M \times S \rightarrow \Delta(M) \) is the \textbf{transition function}: given the current memory state \( i \) and signal \( s \), \( f(i, s)(j) \) specifies the probability of transitioning to memory state \(j \).
    \item \( g \in \Delta(M) \) is the \textbf{initial distribution}, defining the probability of starting in each memory state at period 0.
    \item \( a: M \rightarrow [0, 1] \) is the \textbf{action rule}, determining the probability of taking action \( H \) in a memory state when the decision is called for.
\end{itemize}


The timing of the game is as follows. At the beginning of the game, the receiver selects a protocol \( \Pi = (f, g, a) \). Upon observing the receiver’s choice of \( \Pi \), the sender then chooses a signal-generating strategy
    $$\sigma: \mathcal{I} \times \Theta \rightarrow [0,1],$$
which specifies the probability of stopping the signal process for each information set $I \in \mathcal{I}$ and state of nature $\theta \in \Theta$. In each period $t$ when the protocol's current memory state $m_t$ and a signal $s_t$ is generated, the memory state is updated from $m_t$ to $m_{t+1}$ with probability $f(m_t, s_t)(m_{t+1})$. The game \textit{ends} in memory state $m_t$ in period $t$ if either the sender stops the signal process in that period, prompting the receiver to make a decision,\footnote{In this stopping problem, we must distinguish between a null signal, which is simply another signal, and stopping the game to allow the receiver to make a decision.} or if the receiver's memory state ceases to update in response to any signal, i.e., $f(m_t, s_t)(m_t) = 1$ for all $s_t \in S$. When the game ends, the receiver implements action $H$ with probability $a(m_t)$. If the game never ends, no decision is ever made, and both players receive a payoff of zero. There is no discounting or other explicit cost of waiting that drives either side to end the game. 

A protocol \( \Pi \) induces a Markov process on the set of memory states \( M \) for each $\theta \in \Theta$. For clarity and brevity, we adopt the terminology of Markov processes---such as absorbing memory state, transient memory state, and communicating class---as referring directly to the protocol \( \Pi \). Thus, we say ``an absorbing memory state of \( \Pi \)'' conditional on $\theta$ rather than ``an absorbing state of the Markov process on \( M \) induced by \( \Pi \)'' conditional on $\theta$.

\subsection{Expected Payoffs}

Given the receiver's protocol $\Pi$  and the sender's strategy $\sigma,$ denote by  $U^S(\Pi,\sigma)$ and $U^R(\Pi,\sigma)$ the sender's and receiver's expected payoffs, respectively. 
For each $\Pi$, let 
\begin{equation}
    \br(\Pi):=\arg\sup\nolimits_{\sigma} U^S(\Pi,\sigma)
\end{equation}  
be the sender's best responses to $\Pi.$ 
The objectives of interest to the receiver are 
\begin{equation}\label{eq:objective1}
    \sup_{\Pi}\sup_{\sigma \in \br(\Pi) } U^R(\Pi,\sigma),
\end{equation}
\begin{equation}\label{eq:objective2}
    \sup_{\Pi}\inf_{\sigma \in \br(\Pi) } U^R(\Pi,\sigma).
\end{equation}
But the two are the same as long as the sender's best response exists:

\begin{lemma}\label{lem:payoff-equivalence}
    If $\sigma,\sigma' \in \textup{br}(\Pi)$  and the game ends with probability 1, then $U^R(\Pi, \sigma)=U^R(\Pi,  \sigma').$
\end{lemma}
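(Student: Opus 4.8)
The plan is to reduce both players' payoffs to a pair of scalars and then use the sender's private knowledge of $\theta$ to pin those scalars down at every best response. For a strategy $\sigma$ and a state $\theta$, let $\mu^\theta_\sigma$ denote the distribution over the terminal memory state induced by $\Pi$ and $\sigma$ conditional on $\theta$, and set $V^\theta_\sigma := \sum_{m} \mu^\theta_\sigma(m)\, a(m)$, the probability that the implemented action is $H$ conditional on $\theta$. Since the sender is paid $1$ exactly when $H$ is taken, $U^S(\Pi,\sigma) = p\,V^H_\sigma + (1-p)\,V^L_\sigma$. The receiver is paid $1$ when the action matches the state, so his expected payoff conditional on $\theta=H$ is $V^H_\sigma$, and conditional on $\theta=L$ it is the probability of action $L$; invoking the hypothesis that the game ends with probability $1$ (so the total terminal mass conditional on each state is $1$), the latter equals $1-V^L_\sigma$, giving $U^R(\Pi,\sigma) = p\,V^H_\sigma + (1-p)\,(1 - V^L_\sigma)$. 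Thus both payoffs are functions of the single pair $(V^H_\sigma, V^L_\sigma)$.

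Next I would establish the separation of the sender's problem. Because the sender privately observes $\theta$, her strategy splits into two independently chosen components $\sigma(\cdot,H)$ and $\sigma(\cdot,L)$, and the law of the process conditional on $\theta$ is governed by $\pi_\theta$, $f$, and only the component $\sigma(\cdot,\theta)$; hence $\mu^\theta_\sigma$, and therefore $V^\theta_\sigma$, depends on $\sigma$ only through $\sigma(\cdot,\theta)$. Since $U^S = p\,V^H_\sigma + (1-p)\,V^L_\sigma$ is additively separable across these two components over a product domain, maximizing it is equivalent to maximizing $V^H$ over $\sigma(\cdot,H)$ and $V^L$ over $\sigma(\cdot,L)$ separately. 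Consequently $\sigma \in \br(\Pi)$ forces $V^\theta_\sigma = v^\theta := \sup V^\theta$ for each $\theta$; that is, the pair $(V^H_\sigma, V^L_\sigma)$ takes the common value $(v^H, v^L)$ at every best response.

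The conclusion is then immediate: for any $\sigma, \sigma' \in \br(\Pi)$ under which the game ends with probability $1$, the previous step gives $V^H_\sigma = V^H_{\sigma'} = v^H$ and $V^L_\sigma = V^L_{\sigma'} = v^L$, so the payoff reduction yields $U^R(\Pi,\sigma) = p\,v^H + (1-p)(1 - v^L) = U^R(\Pi,\sigma')$.

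The two delicate points I expect to be the main obstacle are, first, rigorously verifying that $\mu^\theta_\sigma$ is insensitive to $\sigma(\cdot,\theta')$ for $\theta'\neq\theta$, which rests on the sender conditioning on the privately known state together with the conditional independence of the signal process; and second, recognizing that the hypothesis ``the game ends with probability $1$'' is precisely what eliminates the residual dependence of $U^R$ on the terminal mass in state $L$. Without that hypothesis the separability alone does not suffice: two best responses that are indifferent between terminating at an action-$L$ state and never terminating would give the sender identical payoffs but the receiver different ones, so the equivalence would genuinely fail.
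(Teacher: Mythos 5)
Your proof is correct and follows essentially the same route as the paper: the paper's argument also reduces both payoffs to the pair of conditional probabilities of action $H$ (its $\bar{a}_\theta(\Pi,\sigma)$ is your $V^\theta_\sigma$) and asserts that these coincide across best responses, for exactly the separability-across-$\theta$ reason you spell out. You simply make explicit the two steps the paper leaves implicit — that $V^\theta_\sigma$ depends only on $\sigma(\cdot,\theta)$ so each conditional value is maximized separately, and that the ends-with-probability-one hypothesis is what licenses writing the receiver's conditional-$L$ payoff as $1-V^L_\sigma$.
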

\begin{proof}
    Let $\bar{a}_{\theta}(\Pi,\sigma)$ be the total probability that the high action is induced from the receiver conditional on state $\theta$ if the strategy profile $(\Pi,\sigma)$ is carried out. It follows from $\sigma,\sigma' \in \textup{br}(\Pi)$ that $\bar{a}_{\theta}(\Pi,\sigma)=\bar{a}_{\theta}(\Pi,\sigma')$ and hence
\begin{equation*}
        U^R(\Pi, \sigma)= p\bar{a}_H(\Pi, \sigma)+(1-p)(1-\bar{a}_L(\Pi, \sigma))=U^R(\Pi, \sigma')
\end{equation*} as desired.
\end{proof}

\subsection{Information Scenarios}\label{sec:information}

There are at least three possible scenarios for the sender's information:

\begin{itemize}
    \item \(\mathcal{I} = \bigcup_{t \geq 0} S^{t}\), where $S^0=\{\varnothing\} $. This is the case where the sender at the beginning of each period $t\geq 1$ observes the complete history of signals $(s_0,...,s_{t-1}) \in S^t$.\footnote{The sender observes the null history $\varnothing\in S^0$ in period $t=0$.} 
    \item \(\mathcal{I} = M\). This is the case where the sender at the beginning of each period $t\geq 0$ observes the receiver's active memory state $m_t$. For instance, the media or a lobbyist observes the audience’s current state of thinking, and a regulator or organization maintains transparency in its decision-making process. 
    \item \(\mathcal{I} = \bigcup_{t \geq 0} (S^t \times M^{t+1}) \). The sender observes the complete history of past signals and past and current memory states before making a decision. In period $t=0$, she observes the initial state $m_0$. In period $t\geq 1$, she observes $(s_0,...,s_{t-1},m_0,...,m_t).$
\end{itemize}

We prove in \Cref{lem:existence-stationary} that the current memory state $m_t$ is a sufficient statistic for the complete history $(s_0,...,s_{t-1},m_0,...,m_t)$, and hence the second and third cases coincide in terms of both strategies and payoffs. The first case is largely intractable, but we show that the optimal protocols in the second case induce the same sender strategies in both cases and guarantee a tight lower bound on the receiver’s payoff attainable in the first case. Thus, the second case can be interpreted not only as an ``omniscient'' benchmark but also as the basis for robust behavioral predictions. Independently, the scenario in which the state of the receiver's decision-making protocol is transparent is applicable and interesting in its own right. We therefore focus on the second case: 
\begin{equation*}
    \sigma: M \times \Theta\rightarrow [0,1]. 
\end{equation*}


\begin{lemma}\label{lem:existence-stationary}
Given any receiver's protocol \( \Pi = (f, g, a) \), suppose the sender chooses a behavior strategy that is a function mapping the complete history \( \bigcup_{t \geq 0} (S^t \times M^{t+1}) \times \Theta \) to stopping probabilities in \( [0,1] \). Then \( \br(\Pi) \neq \emptyset \) and there exists a best response in the form of a stationary pure strategy \( \sigma: M \times \Theta \rightarrow \{0,1\} \), which depends only on the current memory state \( m_t \in M \) and the state of nature \( \theta \in \Theta \).

\end{lemma}

All omitted proofs can be found in Appendix.

\subsection{Research Questions}

By virtue of \Cref{lem:payoff-equivalence} and \Cref{lem:existence-stationary}, we define the receiver’s and the sender’s payoffs from protocol~$\Pi$—when the sender plays a best response $\sigma \in \br(\Pi)$—as
\[
\begin{aligned}
U^R(\Pi) &:= U^R(\Pi, \sigma), \\
U^S(\Pi) &:= U^S(\Pi, \sigma).
\end{aligned}
\]

The research questions to be addressed in this paper can be summarized as follows.

\begin{itemize}
    \item In \Cref{sec:results}, we identify the structure of optimal protocols; in particular, \Cref{thm:parsimonious} shows that the class of parsimonious protocols is ``dominant'' among all protocols for the receiver. 

   \item In \Cref{sec:receiver optimal value}, \Cref{thm:supUR} characterizes the receiver's optimal payoff:
\begin{equation}\label{obj:sender-sup-payoff}
    \sup_{\Pi} U^R(\Pi)
\end{equation}
The result precisely quantifies how memory capacity is effectively divided between incentive provision and learning. We identify the sequence of parsimonious protocols $\{\Pi^n\}$ that achieve the supremum in (\ref{obj:sender-sup-payoff}). We show that the supremum need not coincide with the maximum and provide conditions for when it does not. Accordingly, by an optimal protocol we mean a sequence that attains the supremum payoff in the limit.

 \item In \Cref{sec:sender optimal value}, \Cref{thm:sender_payoff} shows that the sender's optimal payoff \[\lim_{n\rightarrow \infty} U^S(\Pi^n)\] exists for any sequence of parsimonious $\{\Pi^n\}$ that achieves the optimal payoff of the receiver and characterizes this limit. 

 \item In \Cref{sec:behavior}, \Cref{thm:behavioral_implication} studies transition rules in optimal parsimonious protocols and identifies their behavioral implications. 
 \end{itemize}

\subsection{Discussion of Assumptions}
The framework is flexible and can accommodate many configurations, each meriting independent investigation. We now turn to alternative assumptions.

 We have assumed that the sender observes the receiver’s protocol. If, instead, the receiver could keep the protocol secret, the receiver’s strategy space would consist of lotteries over automata. For a fixed set of memory states, such randomization cannot, in general, be replicated by stochastic transitions within a single automaton. This strategy space is difficult to analyze, and the interpretation of automata as decision protocols would become problematic. The contrast between the two cases is analogous to that between Stackelberg and Nash equilibria.

We have also assumed that the receiver can commit to the protocol rather than modify it freely as the game unfolds. This assumption is not critical when the protocol is publicly observable, but it introduces new challenges once modifications become private. More importantly, private adjustments raise a conceptual question of time consistency in the presence of imperfect recall (see \citet{piccione1997interpretation} and the references therein).

We have further assumed that the signal distribution is exogenous. If the sender could adjust the signal distribution arbitrarily after the receiver commits to a protocol, the problem would become uninteresting. One could instead study the receiver’s and sender’s preferences over spreads of distributions as the prior changes. We show that the sender prefers more informative signal distributions when the prior is favorable to her, and less informative ones otherwise; the receiver, by contrast, always prefers more informative distributions. This becomes clear only when optimal payoffs are characterized for both parties (\Cref{cor:cs-receiver} and \Cref{cor:cs-sender}).

Finally, we have assumed that there is no discounting, so the learning aspect of the model is closer to \citet{hellman1970learning} than to \citet{wilson2014bounded}. Introducing discounting has the benefit of making the resulting stopping problem continuous, with all its natural implications.\footnote{See \Cref{footnote:hellman-wilson} for further discussion.} However, discounting and other forms of time cost introduce an additional incentive to stop early, while making the analysis substantially more complex. A model without discounting therefore highlights stopping as the equilibrium outcome of long-run manipulation and counter-manipulation incentives, while also offering the cleanest formulation for a game with a strategic sender; however, payoff continuity must be sacrificed and dealt with separately.


\section{Examples}\label{sec:eg}

    A leading example has unbiased prior and binary symmetric signals: $p=\frac{1}{2}$, $S=\{h,l\}$, and $\pi_H(h)=\pi_L(l)=q\in (\frac{1}{2},1).$ 
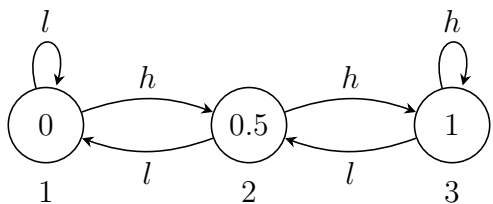
\begin{figure}[ht]
\centering
 \begin{tikzpicture}[->, >={Stealth[length=1.5mm, width=1.5mm]}, semithick, node distance=2.7cm]

  \node[circle, draw, minimum size=1cm] (S1) {$0$};
  \node[circle, draw, minimum size=1cm, right of=S1] (S2) {$0.5$};
  \node[circle, draw, minimum size=1cm, right of=S2] (S3) {$1$};

  \node[below=1mm of S1] {1};
  \node[below=1mm of S2] {2};
  \node[below=1mm of S3] {3};

  \path
    (S1) edge[bend left=20] node[above] {$h$} (S2)
    (S2) edge[bend left=20] node[above] {$h$} (S3)
    (S2) edge[bend left=20] node[below] {$l$} (S1)
    (S3) edge[bend left=20] node[below] {$l$} (S2)
    (S1) edge[loop above] node {$l$} (S1)
    (S3) edge[loop above] node {$h$} (S3);

\end{tikzpicture}
\caption{A three-state protocol: arrows represent transitions triggered by signals; the numbers inside the circle indicate the probability that action $H$ is taken upon stopping in each memory state; the initial memory state is 2.}
\label{fig:3-state}
\end{figure}
The protocol in \Cref{fig:3-state} is fully manipulable by the sender in the following sense: the sender can keep generating signals until the receiver's protocol hits state 3, where the sender's preferred action is taken. The receiver's payoff is $\frac{1}{2}$, the payoff under the prior without learning. To avoid such manipulation, the receiver can reduce the probability with which $H$ is played in state 3, but this does not affect the sender's ability to reach state 3, which hurts the sender without benefiting the receiver himself. The receiver can also use stochastic transitions, but as long as his protocol is responsive to signal $h$, the sender can almost surely generate a long streak of $h$ to manipulate the receiver. 

There are many other modifications the receiver can make to the protocol. \Cref{fig:3-state-2-absorbing} describes a protocol that is immune to the aforementioned manipulation. 

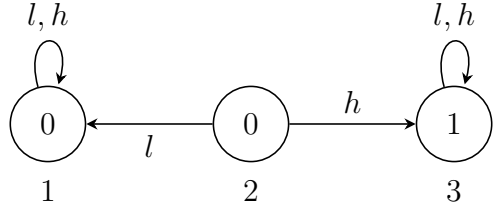
\begin{figure}[ht]
\centering
\begin{tikzpicture}[->, >={Stealth[length=1.5mm, width=1.5mm]}, semithick, node distance=2.7cm]

  \node[circle, draw, minimum size=1cm] (S1) {$0$};
  \node[circle, draw, minimum size=1cm, right of=S1] (S2) {$0$};
  \node[circle, draw, minimum size=1cm, right of=S2] (S3) {$1$};

  \node[below=1mm of S1] {1};
  \node[below=1mm of S2] {2};
  \node[below=1mm of S3] {3};

  \path
    (S2) edge[bend left=0] node[above] {$h$} (S3)
    (S2) edge[bend left=0] node[below] {$l$} (S1)
    (S1) edge[loop above] node {$l, h$} (S1)
    (S3) edge[loop above] node {$l,h$} (S3);

\end{tikzpicture}
\caption{A three-state protocol with two absorbing states; the initial state is 2.}
\label{fig:3-state-2-absorbing}
\end{figure}

Facing this protocol, the sender's best response is to continue sending signals (until an absorbing state is reached). However, this protocol only uses one signal. Starting from state 2, under $\theta = H$ (or $L$), the probability of reaching the absorbing state 3 is $q$ (or $1 - q$). Thus, the receiver’s expected payoff is $q$, and the sender’s expected payoff is $\frac{1}{2}$. Is this the best the receiver can do? This protocol turns out to be the optimal 3-state protocol, as confirmed by \Cref{cor: receiver-value-sym} of \Cref{thm:supUR}.

The 3-state protocol described in \Cref{fig:3-state-2-absorbing} belongs to a particularly simple class of protocols that we call \textit{parsimonious} protocols, whose induced Markov chain features two absorbing states and all other states are transient, and whose action rule prescribes the sender’s preferred action only in one absorbing state, with the alternative action taken in all others. \Cref{thm:parsimonious} shows that any non-parsimonious protocol is dominated by some parsimonious one for the receiver. 

However, the optimality of the deterministic transition turns out to be a special feature of the 3-state protocol, as formally confirmed in \Cref{thm:supUR}. In general, random transition to the absorbing states necessarily outperforms deterministic protocols. Random transitions also imply that two independent receivers, conditional on observing the same sequence of signals, have a positive probability of being absorbed into different states committed to different decisions---a phenomenon of polarization.

\begin{figure}[ht]
\centering
\begin{tikzpicture}[->, >={Stealth[length=1.5mm, width=1.5mm]}, semithick, node distance=2.1cm]

  \node[circle, draw, minimum size=1cm] (S1) {$0$};
  \node[circle, draw, minimum size=1cm, right of=S1] (S2) {$0$};
  \node[circle, draw, minimum size=1cm, right of=S2] (S3) {$0$};
  \node[circle, draw, minimum size=1cm, right of=S3] (S4) {$0$};
  \node[circle, draw, minimum size=1cm, right of=S4] (S5) {$1$};

  \node[below=1mm of S1] {1};
  \node[below=1mm of S2] {2};
  \node[below=1mm of S3] {3};
  \node[below=1mm of S4] {4};
  \node[below=1mm of S5] {5};

  \path
    (S1) edge[loop above] node {$h, l$} (S1)
    (S5) edge[loop above] node {$h, l$} (S5)

    (S2) edge[loop above] node {$(1\! \!- \! \epsilon)l$} (S2)
    (S4) edge[loop above] node {$(1\! \!- \! \epsilon)h$} (S4)
    (S2) edge[bend left=20] node[above] {$h$} (S3)
    (S3) edge[bend left=20] node[above] {$h$} (S4)
     (S4) edge[bend left=0] node[above] {$\epsilon h$} (S5)
    
    (S4) edge[bend left=20] node[below] {$l$} (S3)
    (S3) edge[bend left=20] node[below] {$l$} (S2)
    (S2) edge[bend left=0] node[above] {$\epsilon l$} (S1);

\end{tikzpicture}
\caption{A five-state protocol with two absorbing states. The transitions in states 2 (upon signal $l$) and 4 (upon signal $h$)  are random.}
\label{fig:5-state-random-chain}
\end{figure}
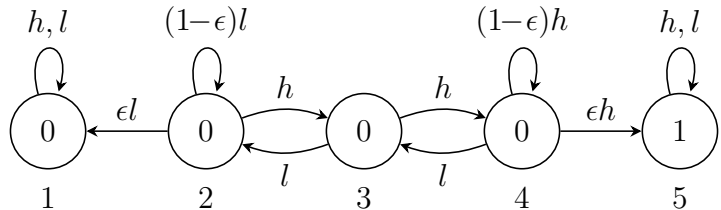

\Cref{fig:5-state-random-chain} presents a parsimonious protocol with five memory states and random transitions in states 2 and 4. For $\epsilon \in (0,1]$, starting from state 3, the likelihood ratio between the more likely and less likely absorbing states, under either state of nature, is
\[
\frac{q^{2} \bigl( q+(1-q)\epsilon\bigr)}
     {(1-q)^{2}\bigl((1-q)+q\epsilon\bigr)}.
\]
Several observations are in order.
\begin{itemize}
    \item As $\epsilon \rightarrow 0$, this ratio approaches its supremum $\frac{q^{3}}{(1-q)^{3}}$ and the receiver's expected payoff approaches $\frac{q^{3}}{q^{3}+(1-q)^{3}}.$ \Cref{cor: receiver-value-sym} of \Cref{thm:supUR} confirms that this is the receiver's optimal payoff, $\sup_{\Pi} U^{R}(\Pi)$. 
    \item This sequence outperforms the deterministic parsimonious protocol where $\epsilon = 1$. Under the deterministic protocol, the likelihood ratio becomes $\frac{q^{2}}{(1-q)^{2}}$, and the receiver’s expected payoff is $\frac{q^{2}}{q^{2} + (1-q)^{2}}<\frac{q^{3}}{q^{3}+(1-q)^{3}}.$
    \item The deterministic protocol makes use of two excess signals, meaning that the difference between the numbers of $h$ and $l$ signals is 2, whereas the stochastic protocol effectively utilizes three excess signals in the limit. 
    \item There is a discontinuity: if $\epsilon=0$, the 5-state protocol reduces to a 3-state protocol with no absorbing state, which performs poorly as shown in \Cref{fig:3-state}. 
    \item Transitions to the absorbing states occur only from their ``adjacent'' states, and these transitions occur stochastically with small probabilities, reflecting indecision or hesitation behavior near the decision points. \Cref{thm:behavioral_implication} defines precisely the meaning of ``adjacent'' states and shows that these properties extend beyond the specific cases considered here.
    \item The stochastic transitions to the absorbing states imply that, conditional on the same sequence of signals, both absorbing states can be reached, capturing opinion polarization under the same information source.
\end{itemize}
 


\section{Parsimonious Protocols}\label{sec:results}

\begin{definition}
    A protocol is \textbf{parsimonious} if it satisfies the following conditions:

    \textup{(i)} There are two absorbing memory states: one where \( L \) is played with probability 1 and one where \( H \) is played with probability 1.  

    \textup{(ii)} All other memory states are transient, with \( L \) played with probability 1.
\end{definition}
A parsimonious protocol takes the sender’s preferred action in a single extreme memory state and the opposite action in all other states. 

\begin{lemma}\label{lem:sender-response-to-parsi}
Under any parsimonious protocol, the sender's best response is to continue sending signals, regardless of $\theta$, until the game ends at an absorbing memory state, where the receiver takes an action.
\end{lemma}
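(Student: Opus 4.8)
The plan is to exploit the degenerate action rule of a parsimonious protocol. Write $m_H$ and $m_L$ for the two absorbing states, so that $a(m_H)=1$ and $a(m_L)=0$, and recall that $a(m)=0$ at every transient state as well. The first step is to reduce the sender's objective to a single quantity: for any stopping strategy $\sigma$ and any $\theta$, her expected payoff conditional on $\theta$ is $\sum_m \Pr_{\sigma,\theta}(\text{game ends at } m)\,a(m)=\Pr_{\sigma,\theta}(\text{game ends at } m_H)$, because $a$ vanishes at every state other than $m_H$. Thus, state by state, maximizing the sender's payoff is equivalent to maximizing the probability of terminating at $m_H$.

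The heart of the argument is a coupling comparison showing that perpetual continuation maximizes this probability. By \Cref{lem:existence-stationary} it suffices to consider stationary pure strategies. Fix $\theta$, let $\sigma^\ast$ denote the strategy that never stops, and couple the memory process under an arbitrary stationary pure strategy $\sigma$ with the process under $\sigma^\ast$ by driving both with the same realized signals (drawn from $\pi_\theta$) and the same transition randomization; the two chains then coincide until the first period in which $\sigma$ prescribes stopping. Since $m_H$ is absorbing, the memory state ceases to update there and the game ends automatically with action $H$, so reaching $m_H$ never requires the sender to stop; conversely, any voluntary stop occurs at a transient state and ends the game there with action $L$. Hence, on every coupled sample path, the event that the game ends at $m_H$ under $\sigma$ is contained in the event that the chain is absorbed at $m_H$ under $\sigma^\ast$, which yields $\Pr_{\sigma,\theta}(\text{ends at } m_H)\le\Pr_{\sigma^\ast,\theta}(\text{absorbed at } m_H)$ and establishes that $\sigma^\ast$ is a best response. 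Equivalently, one can argue through the Bellman equation: at a transient $m$ the sender weighs the stopping payoff $a(m)=0$ against the continuation value $\sum_{s}\pi_\theta(s)\sum_{m'}f(m,s)(m')V_\theta(m')\ge 0$, and since the latter is a probability it always attains the maximum.

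To finish, I note that parsimony forces every non-absorbing state to be transient, so in the finite Markov chain induced by $\sigma^\ast$ the process exits the transient states and is absorbed into $\{m_H,m_L\}$ with probability one; thus the game ends at an absorbing state almost surely and the payoff under $\sigma^\ast$ is well defined, confirming the claim for each $\theta$. The one delicate point is the best-response step: I must rule out that some stopping strategy strictly outperforms continuation. This rests entirely on the observation that stopping yields the immediate payoff $a(m)=0$ at every transient state, so it can never exceed the non-negative continuation value; in particular, continuation remains (weakly) optimal even from states from which $m_H$ is unreachable---where both stopping and continuing give zero---so that continuing is genuinely a best response in every parsimonious protocol, not merely generically.
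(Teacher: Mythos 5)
Your proof is correct. The paper actually omits any formal proof of this lemma, but its stated intuition (stopping before absorption triggers the unfavorable action with probability one, which cannot beat the lottery over the two absorbing states) is exactly your core step: at every transient state the stopping payoff is $a(m)=0$ while the continuation value is a non-negative probability of absorption at the $H$-absorbing state, so your argument---with the coupling and the almost-sure-absorption observation making the ``until the game ends at an absorbing state'' clause precise---is essentially the paper's reasoning carried out in full.
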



\begin{remark}
A parsimonious protocol compels the sender to continue transmitting signals, regardless of $\theta$; thus, the receiver gains full control over the flow of information. It also nullifies the sender’s private informational advantage. Parsimonious protocols are therefore ``non-manipulable'': the sender cannot unilaterally determine the terminal memory state. The property in \Cref{lem:sender-response-to-parsi} is robust to different informational assumptions described in \Cref{sec:information}.
    
\end{remark}
\begin{remark}
Both learning and the provision of incentives for the sender to transmit signals draw on memory states. Hence, absorbing states in a parsimonious protocol are \emph{a priori} costly. However, the first main result of this paper shows that restricting attention to the parsimonious class entails no loss of optimality: the parsimonious class is dominant. This stands in contrast to settings with non-strategic information (e.g., \citet{hellman1970learning}).
\end{remark}

\begin{theorem}\label{thm:parsimonious}
    If a protocol $\Pi$ with $m$ states is not parsimonious, then there exists a parsimonious $\Pi'$ with at most $m$ memory states that weakly improves the receiver's payoff, $U^R(\Pi)\leq U^R(\Pi')$.
\end{theorem}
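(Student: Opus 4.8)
The plan is to collapse the receiver's objective to a one-dimensional comparison and then convert $\Pi$ into a parsimonious protocol through a value-preserving surgery followed by a value-improving one. First I would fix, via \Cref{lem:existence-stationary}, a stationary pure best response $\sigma$, and record the key reframing: conditional on $\theta$, the sender's payoff is exactly the probability that action $H$ is ultimately taken, so $\bar a_\theta$ equals $V_\theta$, her optimal stopping value for the $\theta$-chain induced by $\Pi$, and hence $U^R(\Pi)=pV_H+(1-p)(1-V_L)$. Here $V_\theta$ is the least superharmonic majorant, for the kernel $\sum_{s}\pi_\theta(s)f(\cdot,s)$, of the terminal-action payoff (with value $1$ on $H$-decisions and $0$ on $L$-decisions). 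The receiver therefore wants $V_H$ large and $V_L$ small, and the entire problem becomes how the protocol's structure shapes the pair $(V_H,V_L)$.

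Next I would exploit two structural facts. Because the signal distributions have full support, the communicating-class decomposition of $M$ is identical under both states of nature; and within any recurrent class the sender can almost surely reach, and stop at, the state of maximal action probability, so her value throughout that class equals the class maximum $v$, independent of $\theta$. This enables the first surgery: replace all recurrent states by two absorbing states $A_H$ (playing $H$) and $A_L$ (playing $L$), rerouting every transition that previously entered a class of value $v$ into a single-step lottery landing in $A_H$ with probability $v$ and in $A_L$ otherwise. Since each such lottery has continuation value $v$ for both $\theta$, all continuation values at the remaining transient states—hence $V_H$, $V_L$, and $U^R$—are unchanged, while all recurrent states are consolidated into exactly two, so the state count does not rise (in the degenerate case of a single recurrent class I would promote one transient state to the second absorbing state to respect the budget $m$). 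This also subsumes absorbing states carrying fractional action probabilities.

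The crux is the second surgery: set the action of every transient state to $L$, forcing the sender to continue and producing a parsimonious $\Pi'$ whose values are the harmonic hitting probabilities $\beta_\theta=\Pr_\theta(\text{absorption at }A_H)\le V_\theta$. By \Cref{lem:sender-response-to-parsi} the sender indeed continues throughout, so $U^R(\Pi')=p\beta_H+(1-p)(1-\beta_L)$, and it remains to prove the inequality $(1-p)(V_L-\beta_L)\ge p(V_H-\beta_H)$. The guiding principle is that stopping at a transient state delivers the state-independent payoff $a(m)$, which pools the two states and so cannot improve the receiver's discrimination objective relative to running the chain to absorption. To formalize this I would use optional stopping: writing $\mathcal R_\theta$ for the sender's stopping region and using that $\beta_\theta$ is $\theta$-harmonic, one gets $V_\theta-\beta_\theta=\sum_{m\in\mathcal R_\theta}\Pr_\theta(\text{first stop at }m)\,(a(m)-\beta_\theta(m))$ with each summand nonnegative. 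I expect this to be the main obstacle: the stopping regions and the reaching probabilities differ across $\theta$, so the target inequality is not termwise. Resolving it should require either a global exchange argument that peels off one stopping state at a time—each removal weakly raising $U^R$—or re-optimizing the transitions out of the former stopping states rather than freezing them, exploiting that a ``sensible'' continuation satisfies $C_H\ge C_L$ so that stopping weighs more heavily against the adverse state $L$.

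Finally I would dispose of the remaining degenerate case—protocols under which the game fails to end with probability one, where $U^R(\Pi)$ collapses and is trivially dominated by any informative parsimonious protocol—and then conclude that the constructed $\Pi'$ is parsimonious, uses at most $m$ memory states, and satisfies $U^R(\Pi)\le U^R(\Pi')$, as required.
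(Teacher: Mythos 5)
Your overall architecture parallels the paper's: first collapse the recurrent structure to two absorbing states (the paper's \Cref{prop:algorithm-to-simple}, via \Cref{lem:recurrent-to-absorbing}--\Cref{lem:reduce-absorbing-memory}), then deal with the action rule on the transient states (\Cref{prop:simple-to-parsimonious}). The first surgery is essentially sound and matches the paper's Steps 1--4, including the lottery that reroutes entries to a recurrent class of value $v$ into the two extreme absorbing states with proportions $v$ and $1-v$.

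The genuine gap is exactly where you flag it: the second surgery. Setting every transient action to $L$ and letting the sender run to absorption changes her behavior in a $\theta$-asymmetric way, and the inequality $(1-p)(V_L-\beta_L)\ge p(V_H-\beta_H)$ that you need is stated as the goal but never established. Your optional-stopping decomposition $V_\theta-\beta_\theta=\sum_{m\in\mathcal R_\theta}\Pr_\theta(\text{first stop at }m)\,(a(m)-\beta_\theta(m))$ cannot close this, precisely because the stopping regions $\mathcal R_H\neq\mathcal R_L$ and the first-stop probabilities differ across $\theta$: when the sender stops at a transient $i$ only under $H$, freezing the transitions and zeroing $a(i)$ lowers $\bar a_H$ by up to $p\bigl(a(i)-\beta_H(i)\bigr)$ while the compensating gain $(1-p)(V_L-\beta_L)$ has no a priori lower bound, so the naive surgery is not termwise monotone and need not be monotone in aggregate. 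The paper resolves this differently: rather than forcing $a\equiv a(1)$ on transient states, \Cref{lem:reducing-stopping-states} \emph{re-optimizes} the action rule on $M_H\triangle M_L$ subject to the constraint that the fixed $\sigma$ remain a best response; since both players' payoffs are affine in $\tilde a$ given $(f,g,\sigma)$, this is a linear program whose extreme-point solution weakly improves $U^R$ by construction, and whichever constraint binds at the extreme point lets one either delete a memory state (via \Cref{lem:not-stopping-at-transient}) or strictly shrink $M_H\triangle M_L$ (via \Cref{lem:transient-state-action}); iterating terminates with $M_H=M_L=\{1,m\}$, after which the affine rescaling to $\{0,1\}$ is payoff-irrelevant. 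Your parenthetical suggestion of a ``global exchange argument that peels off one stopping state at a time'' is the right instinct, but the peeling must go through this best-response-preserving LP rather than through a direct comparison of $V_\theta$ with the hitting probabilities $\beta_\theta$; as written, the crux of the theorem remains unproven.
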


\begin{remark}
    The behavior implied by a parsimonious protocol resembles a threshold rule in a perfect Bayesian setting, but this resemblance does not stem from the standard intertemporal cost-benefit tradeoff. Rather, it is driven by the incentive to avoid manipulation and the need to incentivize information provision. Tuning out information---total information disengagement---is, therefore, an equilibrium response to asymmetric rationality. Optimal parsimonious protocols with optimal transition rules, studied later, would have further payoff and behavioral implications.
\end{remark}

The proof proceeds in two main steps. The first is to transform a general protocol into a simple one with two absorbing states without changing the sender's incentives while guaranteeing the receiver's payoff. This is summarized in \Cref{prop:algorithm-to-simple} in \Cref{sec:general-to-simple}. We then refine its action rules into a parsimonious form. This is summarized in \Cref{prop:simple-to-parsimonious} in \Cref{sec:simple-to-parsimonious}. 
\subsection{From General to Simple Protocols}\label{sec:general-to-simple}
\begin{definition}
    A protocol is \textbf{simple} if there are two absorbing memory states and all other memory states are transient.
\end{definition}
A simple protocol may lack parsimony due to its flexible action rules.

\begin{proposition}\label{prop:algorithm-to-simple}
If a protocol $\Pi$ with $m$ memory states is not simple, then there exists a simple protocol $\Pi'$ with at most $m$ memory states such that $U^R(\Pi)\leq U^R(\Pi')$.
\end{proposition}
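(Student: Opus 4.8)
The plan is to reduce an arbitrary protocol to a simple one in two structural moves, each weakly improving (and typically exactly preserving) the receiver's payoff without increasing the number of states. Two facts organize everything. First, because the signal distribution has full support, a transition $i\to j$ is possible under $H$ iff it is possible under $L$, so the reachability relation on $M$—and hence the partition into transient states and recurrent communicating classes—does not depend on $\theta$; only the hitting probabilities do. Second, by \Cref{lem:existence-stationary} the sender's best response is a stationary optimal-stopping rule, so for each $\theta$ it has a value function $V_\theta(m)$, the largest probability of action $H$ she can secure from state $m$; with no discounting, $V_\theta(m)=\max\{a(m),\ \sum_s \pi_\theta(s)\sum_j f(m,s)(j)V_\theta(j)\}$ at non-absorbing $m$ and $V_\theta(m)=a(m)$ at absorbing $m$. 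Since the sender's payoff equals the induced probability of $H$, the terminal $H$-probability under $\theta$ is $\bar a_\theta=\sum_m g(m)V_\theta(m)$, and by \Cref{lem:payoff-equivalence} $U^R(\Pi)=\sum_m g(m)\bigl[pV_H(m)+(1-p)(1-V_L(m))\bigr]$. Thus any transformation that leaves every $V_\theta(m)$ unchanged leaves $U^R$ unchanged.

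\emph{Move 1 (collapse non-trivial recurrent classes).} Let $C$ be a recurrent class with $|C|\ge 2$. Once the sender enters $C$ she is trapped, and since $C$ is a finite recurrent (communicating) class, from any of its states every state of $C$ is reached in finite time almost surely under both $\theta$; as waiting is costless, she stops at $\arg\max_{m\in C}a(m)$, and she can do no better while confined to $C$. Hence $V_\theta\equiv a^\ast_C:=\max_{m\in C}a(m)$ on $C$ for both $\theta$, and entering $C$ yields action $H$ with probability $a^\ast_C$ regardless of $\theta$ or entry point. I replace all of $C$ by a single absorbing state $b_C$ with $a(b_C)=a^\ast_C$ and redirect every transition into $C$ to $b_C$. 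The value of reaching $b_C$ is again $a^\ast_C$, so all upstream values—and the entire outcome—are unchanged, while the state count drops by $|C|-1$. After doing this for every non-trivial class, all recurrent classes are singletons, i.e.\ absorbing states $b_1,\dots,b_k$ with $\hat a_1\ge\cdots\ge\hat a_k$.

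\emph{Move 2 (force exactly two absorbing states).} If $k=2$ the protocol is already simple. If $k\ge 3$, keep the extremes $b_1$ and $b_k$ and ``un-absorb'' each intermediate $b_i$ by routing all its outgoing transitions into $b_k$ while keeping $a(b_i)=\hat a_i$. Because $\hat a_i\ge\hat a_k$ equals the continuation value now available at $b_i$, the sender still (weakly) prefers to stop at $b_i$, so $V_\theta(b_i)=\hat a_i$ is preserved; every other transition is untouched, the $b_i$ become transient, and only $b_1,b_k$ remain absorbing. The harder case is $k=1$, a single absorbing state $b_1$: if the global maximal action $a^{\max}:=\max_m a(m)$ is attained at some transient state $t^+$, I make $t^+$ absorbing; since no state can be worth more than $a^{\max}$, the continuation value at $t^+$ cannot exceed $a(t^+)=a^{\max}$, so $V_\theta(t^+)=a^{\max}$ before and after, all other transitions are unchanged, and the outcome is preserved while a second absorbing state appears. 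If instead $a^{\max}=\hat a_1$ is attained only at $b_1$, the sender reaches $b_1$ almost surely, so $V_\theta\equiv\hat a_1$ and $U^R(\Pi)=p\hat a_1+(1-p)(1-\hat a_1)\le\max\{p,1-p\}$; this is dominated by the trivial protocol with two absorbing states carrying actions $0$ and $1$ and initial mass on the payoff-maximizing one, which is simple, uses only two states, and is available whenever $m\ge 2$.

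I expect the $k\le 1$ regime to be the main obstacle: unlike $k\ge 2$, it cannot be resolved by restructuring existing recurrence, and one must either manufacture a second absorbing state—possible without loss only when $a^{\max}$ sits at a transient state—or fall back on the domination bound $U^R\le\max\{p,1-p\}$, while checking these alternatives exhaust all configurations. The remaining work is bookkeeping to confirm that each rerouting preserves the value function: one verifies that, by full support, redirecting transitions never closes off a set of transient states (so no spurious recurrent class arises and the game still ends with probability one) and that the rerouted states genuinely lie in the sender's stopping region, so that $\bar a_\theta=\sum_m g(m)V_\theta(m)$ is unaffected. The degenerate case $m=1$, a lone absorbing state admitting no simple protocol on $\le 1$ state, is vacuous and excluded.
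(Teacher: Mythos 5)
Your proof is correct, and it shares the paper's skeleton---inside a recurrent class the sender stops at the state with the highest action; engineer exactly two absorbing states; fall back on the trivial $\max\{p,1-p\}$ protocol when there is effectively no learning---but the mechanics differ at two substantive points. First, where you collapse an entire recurrent class $C$ into a single absorbing state $b_C$ carrying action $\max_{i\in C}a(i)$, the paper (\Cref{lem:recurrent-to-absorbing}) only converts the sender-optimal state of the class into an absorbing state and keeps the rest; since that operation can leave new recurrent structure behind, the paper must run an iterative algorithm (Steps 1--4 of \Cref{prop:algorithm-to-simple}) whose termination rests on the state count falling only finitely often, whereas your collapse eliminates the need for iteration altogether. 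Second, your handling of $k\ge 3$ absorbing states is genuinely different from \Cref{lem:reduce-absorbing-memory}: you ``un-absorb'' each intermediate $b_i$ by routing all of its outgoing transitions to the worst absorbing state $b_k$, so stopping there stays weakly optimal and $V_\theta(b_i)=\hat a_i$ is preserved with no state deleted, while the paper deletes $b_i$ and re-routes its \emph{incoming} transitions as a lottery over the two extreme absorbing states calibrated so that the expected action equals $a(b_i)$. Your version avoids the calibration entirely; the paper's version reduces the state count, which is precisely what it re-uses later (e.g., in \Cref{lem:not-stopping-at-transient}, where a transient stopping state is made absorbing and a third absorbing state must then be removed), so its formulation does double duty in the proof of \Cref{prop:simple-to-parsimonious}. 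Your value-function bookkeeping---$\bar a_\theta=\sum_m g(m)V_\theta(m)$, hence $U^R$ is pinned down by the sender's value functions and $g$---also turns the payoff-preservation checks that the paper does case by case via \Cref{lem:payoff-equivalence} into a uniform one-line argument. Two trivial items to add, neither a real gap: when collapsing $C$ you should also transfer any initial mass $\sum_{c\in C}g(c)$ onto $b_C$ (the paper adjusts $g$ explicitly in \Cref{lem:reduce-absorbing-memory}); and your appeals to \Cref{lem:payoff-equivalence} inherit the paper's own convention that $U^R(\Pi)$ is evaluated at best responses under which the game ends with probability one, so you are on the same footing as the paper there.
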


We sketch the proof of \Cref{prop:algorithm-to-simple} and the main ideas below. 

\medskip

\noindent \textbf{Step 1}: \textit{turn a state in a recurrent communicating class into an absorbing state.} 

\medskip

If a protocol contains a recurrent communicating class with no absorbing state within it (in particular, if no memory state in the protocol is absorbing, then such a class must exist), then once this class is reached, one of the best responses of the manipulative sender (regardless of $\theta$) is to stop at one of her most favorable states in the class---that is, a memory state where $a(i)$ is highest. We can then modify the receiver’s protocol by making this memory state absorbing. The sender's best response remains a best response to this modified protocol, and neither player’s payoff is changed.
This is the idea behind \Cref{lem:recurrent-to-absorbing} and \Cref{cor:exist-absorbing}.  

\medskip

\noindent \textbf{Step 2}: \textit{create an additional absorbing state if only one exists.} 

\medskip

If there is only one absorbing state, then either there is a distinct recurrent communicating class---so we can create a new absorbing state as in Step~1---or all other states are transient. 
In the latter case, if the most favorable transient state for the sender is worse than the existing absorbing state, the sender’s best response is to keep sending signals until the absorbing state is reached. 
There is no learning, and this no-learning protocol can be replicated by a simple protocol with two absorbing states. 
If the most favorable transient state is better than the absorbing state, then a new absorbing state can be created. 
This is the idea behind \Cref{lem:1-to-2-absorbing}.  

\medskip

\noindent \textbf{Step 3}: \textit{replicate intermediate absorbing states using extreme absorbing states.}

\medskip

If there are more than two absorbing states, we can rank them by the receiver’s probability of playing \(H\). 
Random transitions to the two extreme absorbing states can then be used to replicate the intermediate ones, thereby saving memory states without lowering the sender’s payoff. 
This is the idea behind \Cref{lem:reduce-absorbing-memory}.  

\medskip

\noindent \textbf{Step 4}: \textit{iteration.}

\medskip

Each application of Step~1 adds exactly one absorbing state. 
Step~2 converts a protocol with one absorbing state into a protocol with two absorbing states. 
Each application of Step~3 removes one absorbing state; it is also the only step that changes the total number of memory states, reducing it by one. 
Because the number of memory states can fall only finitely many times, the procedure terminates after a finite number of iterations, yielding a simple protocol.

\subsection{From Simple to Parsimonious Protocols}\label{sec:simple-to-parsimonious}

A simple protocol does not necessarily compel the sender to continue sending signals until an absorbing state is reached, while a parsimonious protocol does. To reduce the receiver’s action rule to its parsimonious form, we must carefully examine the sender’s response, particularly in cases where a simple but non-parsimonious protocol induces a $\theta$-dependent stopping strategy.  
\begin{proposition}\label{prop:simple-to-parsimonious}
    If a simple protocol $\Pi$ with $m$ memory states is not parsimonious, then there exists a parsimonious protocol $\Pi'$ with at most $m$ memory states such that $U^R(\Pi)\leq U^R(\Pi')$.
\end{proposition}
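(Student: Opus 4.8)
The plan is to keep the transition function $f$ and the initial distribution $g$ of the simple protocol fixed and to refine only its action rule into parsimonious form through a sequence of weakly receiver-improving moves, the sole exception being an appeal to the intermediate-state reduction of \Cref{lem:reduce-absorbing-memory}. Two targets must be met: every transient state must play $L$ with probability one, and the two absorbing states must play $H$ and $L$ with probability one. Let $x_\theta^i$ denote the probability of absorption at absorbing state $i$ under $\theta$ when the sender never stops, and orient the labels by choosing the $H$-decision to be the absorbing state $A$ for which $p\,x_H^A\ge(1-p)\,x_L^A$. Fix a stationary pure best response (\Cref{lem:existence-stationary}), write $\bar a_\theta$ for the induced probability of action $H$ under $\theta$, and recall from \Cref{lem:payoff-equivalence} that $U^R=(1-p)+p\,\bar a_H-(1-p)\,\bar a_L$, so any change that weakly raises $\bar a_H$ and weakly lowers $\bar a_L$ helps the receiver.

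The engine is a monotonicity property of the sender's value $V_\theta(i)=\max\{a(i),\operatorname{cont}_\theta(i)\}$: the Bellman operator is monotone in $a$, so lowering any single action weakly lowers every $V_\theta$ and hence $\bar a_\theta$. The key point is that at a transient state where the sender strictly continues under $\theta$ one has $V_\theta(i)=\operatorname{cont}_\theta(i)\ge a(i)$, making $a(i)$ payoff-irrelevant in that state; lowering it to $0$ then leaves the entire $\theta$-continuation, and thus $\bar a_\theta$, unchanged. Consequently, for any transient state at which the sender continues under $H$, lowering its action to $0$ fixes $\bar a_H$ and weakly lowers $\bar a_L$, which weakly raises $U^R$. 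I would apply this to zero out the action at every such state, which in particular disposes of all ``stop-under-$L$-only'' states.

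What remains are transient states at which the sender stops under $H$. If she also stops there under $L$, the state is a common terminal point carrying a single action, so I make it absorbing without altering any behavior or payoff; this produces extra absorbing states, whereupon \Cref{lem:reduce-absorbing-memory} restores exactly two absorbing states with no increase in memory and no loss to the receiver, and the procedure restarts with strictly fewer offending transient states. The genuinely delicate case is $\theta$-dependent stopping in which the sender stops at $i$ under $H$ but continues under $L$; the defining inequalities $\operatorname{cont}_H(i)\le a(i)<\operatorname{cont}_L(i)$ say that, from $i$, absorption at $A$ is more likely under $L$ than under $H$, so the sub-chain below $i$ is locally mis-oriented relative to the global labeling of $A$. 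Here naive lowering is harmful, since it erases a correct $H$-decision in state $H$ while leaving state $L$ untouched, and the $\theta$-independence of $f$ forbids encoding the $\theta$-dependent stopping directly as an absorbing state. I would resolve this by making $i$ absorbing and invoking \Cref{lem:reduce-absorbing-memory} to reassign its mass to the two extreme absorbing states by randomization; the content of the argument is that the false $H$ now forced in state $L$ is dominated by the replication, equivalently that the pointwise-optimal action assignment is recovered after the reduction.

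Once all transient actions are $0$ the sender continues to absorption almost surely, exactly as in \Cref{lem:sender-response-to-parsi}, and the receiver's payoff becomes linear in each absorbing action, namely $p\,x_H^i a(i)+(1-p)\,x_L^i(1-a(i))$ summed over the two absorbing states; it is therefore maximized at a corner, and the orientation fixed at the outset selects $a(A)=1$, $a(B)=0$, delivering a parsimonious protocol (in the degenerate corner where both or neither absorbing state is optimally assigned $H$, the value equals the no-information payoff $\max\{p,1-p\}$, attained by a trivial parsimonious protocol). I expect the main obstacle to be precisely the $\theta$-dependent, mis-oriented stopping of the previous paragraph: one must show that removing a benefit the sender confers on the receiver in one state (a correct action under $H$) can always be compensated by re-optimizing the action rule and trading memory through \Cref{lem:reduce-absorbing-memory}, and that interleaving these conversions with the action-lowering step terminates---which holds because each reduction strictly lowers the number of transient states at which the sender stops while never increasing memory size.
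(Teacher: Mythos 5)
Your proposal is correct, and on the one genuinely delicate point it takes a different route from the paper. The paper first prunes states where the sender stops under both states of nature (\Cref{lem:not-stopping-at-transient}) and states where she never stops (\Cref{lem:transient-state-action}), bounds the action rule (\Cref{lem:bound-of-stopping-states}), and then handles the $\theta$-dependent stopping states in $M_H\triangle M_L$ by re-optimizing the action rule through a linear program (\Cref{lem:reducing-stopping-states}): an extreme-point solution must make either a box constraint or a best-response constraint bind, and each case either deletes a memory state or strictly shrinks $M_H\triangle M_L$; parsimony then follows from an affine rescaling. You instead make every transient state at which the sender stops under $H$ absorbing---regardless of her behavior under $L$---and then call \Cref{lem:reduce-absorbing-memory}. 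This works, and the clean justification is available from your own monotonicity engine: forcing a stop at such a state $i$ only restricts the sender's strategy space, so her value under $L$ (where she wished to continue) weakly falls, while her value under $H$ is unchanged because stopping at $i$ was already optimal there; since $U^R=p\,\bar a_H+(1-p)(1-\bar a_L)$, the receiver weakly gains, and the subsequent reduction is payoff-neutral and strictly decreases the memory count, which also gives termination. Note this is sharper than your gloss that the ``false $H$ forced in state $L$ is dominated by the replication'': under $H$ nothing changes at all, the entire gain is the drop in the sender's $L$-value, and the replication contributes nothing beyond saving a memory state. Your other steps are sound: zeroing the action wherever the sender weakly continues under $H$ preserves $V_H$ (the continuing $H$-best response remains available and attains the old value, and monotonicity gives the converse bound) while weakly lowering $V_L$; and once all transient actions are zero, continuing everywhere is a best response for any assignment of the two absorbing actions, so $U^R$ is affine in them and a corner weakly improves, with the degenerate corners $(0,0)$ and $(1,1)$ yielding payoff $1-p$ or $p$, covered by the trivial parsimonious protocol exactly as in the paper's final step. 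Two cosmetic slips do not affect correctness: the orientation you fix at the outset is moot, because the absorption probabilities $x^i_\theta$ change with each reduction and only the final corner choice matters; and making states absorbing does modify the transition function, not only the action rule, contrary to your opening framing. Comparing the routes: yours eliminates the linear program and its binding-constraint case analysis entirely, at the cost of repeatedly rewiring the chain; the paper's LP keeps the transition structure and the sender's best response literally fixed while only the action rule moves, a tighter form of control that its argument reuses elsewhere when tracking sender payoffs along optimal sequences.
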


We sketch the proof below.

\medskip
\noindent \textbf{Step 1}: \textit{prune redundant stopping states.}
\medskip

We first reduce the protocol so that, in every transient state, the best response of the sender is to continue in at least one state of nature; otherwise, the state is effectively absorbing, and a protocol with three absorbing states can be further reduced.  
If the sender never stops in a transient state, the receiver can set the action there to the lower absorbing level without altering incentives.  
These reductions are formalized in \Cref{lem:not-stopping-at-transient} and \Cref{lem:transient-state-action}.

\medskip
\noindent \textbf{Step 2}: \textit{bound the action rule.}
\medskip

After pruning a protocol $\Pi$, each best response of the sender induces two stopping sets for $\theta=H$ and $L$, \(M_H\) and \(M_L\), respectively. Each \(M_\theta\) contains at least the two absorbing states, normalized to \(\{1,m\}\), such that $a(1)\leq a(m)$. The set $M_H\triangle M_L$ contains memory states in which the sender stops under one state of nature but not the other.
We adjust the receiver’s action rule so that the probabilities of playing $H$ lie in the interval \([a(1),a(m)]\).  
See \Cref{def:simple-profile} and \Cref{lem:bound-of-stopping-states}.

\medskip
\noindent \textbf{Step 3}: \textit{symmetrize the stopping sets across $\theta$.}
\medskip

 We iteratively symmetrize the two stopping sets \(M_H\) and \(M_L\).
We do so by modifying the receiver's action rule in these states without changing the sender's best response. This is achieved by setting up the sender's optimization problem over the ``asymmetric’’ stopping states \(M_H\triangle M_L\),
subject to the
sender’s best-response incentives. Formally, given $(\Pi,\sigma)$, where $\Pi=(f,g,a)$, we define a new action rule $a'$ as a solution of the following optimization problem:
    \begin{align}
    \sup_{\tilde{a}} \quad & U^R((f,g,\tilde{a}),\sigma) \notag \\
    \text{s.t.} \quad & \tilde{a}(i) = a(i) \text{ for all } i \notin M_H \triangle M_L,  \\
    & \tilde{a}(i) \in [a(1),a(m)] \text{ for all } i \in M_H \triangle M_L,  \\
    & \sigma \in \br(f,g,\tilde{a}). 
\end{align}
We have $U^R((f, g, \tilde{a}), \sigma)$ explicitly as a function of $\sigma$. This optimization problem can be shown to be a \textit{linear} program.  
Whenever a constraint binds, we either delete a memory state or strictly shrink \(M_H\triangle M_L\).  
Iteration stops when \(M_H=M_L=\{1,m\}\). This idea is
formalized in \Cref{lem:reducing-stopping-states}.

\medskip
\noindent \textbf{Step 4}: \textit{rescale the action rule.}
\medskip

By Step 1 and Step 3, the action rule uses only the two values \(a(1)\) and \(a(m)\).  
If these are not already \(0\) and \(1\), we rescale them affinely to \(\{0,1\}\).  
This produces a parsimonious protocol while keeping the sender’s best response and the receiver’s payoff unchanged.

\section{Payoff Characterization}\label{sec:optimal value}

Let $\gamma = {\bar{\ell}}/{\underline{\ell}}$ be the ratio of maximum to minimum likelihood ratios of the signals, where
\begin{equation}\label{eqn: extreme signals}
    \bar{\ell}=\max_{s\in S} \frac{\pi_H(s)}{\pi_L(s)} \text{ and } \underline{\ell}=\min_{s\in S} \frac{\pi_H(s)}{\pi_L(s)} 
\end{equation}
Therefore, $\gamma$ is the spread between the ``best'' and ``worst'' evidence for $\theta=H$ and is a summary index of signal informativeness.
Since $\pi_\theta$ has full support and $S$ is finite, we have $\gamma \in (1,+\infty)$. For convenience, let $h$ denote the signal with the likelihood ratio $\bar{\ell}$, and $l$ the one with likelihood ratio $\underline{\ell}$. Signals that share the same likelihood ratio can, without loss of generality, be treated as identical. 
We define $$\kappa:=\max \Bigl\{\frac{p}{1-p},\,\frac{1-p}{p}\Bigr\}$$ as an index of the skewness of the prior.

\subsection{The Receiver's Optimal Payoff}\label{sec:receiver optimal value}
\subsubsection{Results}
\begin{theorem}\label{thm:supUR}
The receiver with $m\geq 2$ memory states has the following optimal payoff:
\begin{equation}
\sup_{\Pi} U^{R}(\Pi)
=
\begin{cases}
1-\dfrac{2\sqrt{p(1-p) \gamma^{m-2}}-1}{\gamma^{m-2}-1},
& \textup{if }
\gamma^{m-2}>
\displaystyle \kappa,\\[1.2em]
\displaystyle\max\{p,1-p\},
& \textup{otherwise. }
\end{cases}
\end{equation}
Furthermore, if $m\geq 4$ and $\gamma^{m-2}>\kappa$, there is no $\Pi^*$ such that $U^R(\Pi^*)=\sup_{\Pi} U^{R}(\Pi).$
\end{theorem}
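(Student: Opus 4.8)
The plan is to argue by contradiction: I assume some protocol attains $V^{*}:=\sup_{\Pi}U^{R}(\Pi)$ and show this forces a separation between the two absorbing states that is realizable only in a degenerate limit. First I would reduce. By \Cref{thm:parsimonious}, if any $m$-state protocol attains $V^{*}$ then some parsimonious protocol with at most $m$ states does; since (by the payoff formula already stated in \Cref{thm:supUR}) $V^{*}$ is strictly increasing in the number of states whenever $\gamma^{m-2}>\kappa$, the attaining protocol must use exactly $m$ states, of which $m-2$ are transient. Because $U^{R}(\Pi)=p\,\bar a_{H}+(1-p)(1-\bar a_{L})$ is linear in the initial distribution $g$ and in the absorption probabilities $(\bar a_{H},\bar a_{L})=(\alpha,\beta)$ (the probabilities of reaching the $H$-absorbing state under $H$ and under $L$), I may take $g$ to be a point mass on a single transient state. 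I would then reformulate the payoff through the cross-ratio (the odds ratio of the induced binary test)
\[
\Gamma(\Pi)=\frac{\alpha(1-\beta)}{\beta(1-\alpha)},
\]
verify that $U^{R}$ is a strictly increasing function of $\Gamma$, and observe that the value in \Cref{thm:supUR} is exactly the one obtained at $\Gamma=\gamma^{m-2}$. Thus attaining $V^{*}$ is equivalent to attaining $\Gamma(\Pi)=\gamma^{m-2}$, and the statement reduces to: no $m$-state parsimonious protocol with $m\ge 4$ has $\Gamma=\gamma^{m-2}$.

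The core is a strict bound $\Gamma(\Pi)<\gamma^{m-2}$ together with an identification of its equality conditions. Writing $u_{i}^{\theta}=\Pr^{\theta}(\text{absorb at the }H\text{-state}\mid \text{start at }i)$, each satisfies the harmonic equation $u_{i}^{\theta}=\sum_{j,s}\pi_{\theta}(s)f(i,s)(j)\,u_{j}^{\theta}$, and the ratio $\theta_{i}:=u_{i}^{H}/u_{i}^{L}$ can be written as a convex combination of the quantities $\ell(s)\,\theta_{j}$, where $\ell(s)=\pi_{H}(s)/\pi_{L}(s)\in[\underline{\ell},\bar{\ell}]$, anchored at the boundary value $\theta_{m}=1$; an analogous representation governs the odds of absorption at the $L$-state. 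A maximum-principle argument on these convex combinations yields $\Gamma(\Pi)\le\gamma^{m-2}$ and shows that equality forces (i) every transition carrying positive probability to use only the extreme signals $h$ and $l$, and (ii) the chain to be a birth--death line on the $m$ states. For such extreme-signal birth--death chains I would compute $\Gamma$ explicitly (as in the five-state protocol of \Cref{fig:5-state-random-chain}, whose likelihood ratio is $\tfrac{q^{2}(q+(1-q)\epsilon)}{(1-q)^{2}((1-q)+q\epsilon)}$) and show that $\Gamma$ strictly increases as the escape probabilities out of the two boundary transient states (adjacent to the absorbing states) decrease, with $\sup\Gamma=\gamma^{m-2}$ reached only in the limit where these escape probabilities vanish.

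The final step closes the gap between the supremum and the limit. When the escape probabilities out of the boundary transient states are set to zero, those states acquire a self-loop on every signal and the absorbing states become \emph{unreachable}; the chain then has no reachable absorbing state, so $\alpha=\beta$ and $\Gamma$ collapses, with the payoff dropping to the prior value $\max\{p,1-p\}$. Hence along any sequence of valid protocols $\Gamma\to\gamma^{m-2}$ strictly from below, while at the degenerate boundary $\Gamma$ is strictly smaller; the value $\Gamma=\gamma^{m-2}$ is attained nowhere, so neither is $V^{*}$. This also explains the hypothesis $m\ge 4$: when $m=3$ there is a single transient state, the optimal chain is the one-signal deterministic test of \Cref{fig:3-state-2-absorbing}, and a single extreme signal already realizes the full likelihood ratio $\bar{\ell}/\underline{\ell}=\gamma=\gamma^{m-2}$ with no stochastic boundary to degenerate; deterministic chains on $m$ states achieve only $\gamma^{(m-1)/2}$, which equals $\gamma^{m-2}$ exactly at $m=3$ and is strictly smaller for $m\ge 4$.

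The main obstacle is the strict bound $\Gamma(\Pi)<\gamma^{m-2}$ for \emph{arbitrary} parsimonious protocols, i.e.\ establishing the equality conditions (i)--(ii) rigorously so as to exclude exotic non-birth--death or intermediate-signal configurations from achieving equality; the maximum-principle step delivers the inequality, but extracting sharp equality conditions (and hence strictness away from the degenerate limit) requires care. By comparison, the explicit computation for birth--death chains and the discontinuity analysis at the boundary are routine once the structural reduction is in hand.
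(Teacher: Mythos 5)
Your skeleton matches the paper's: reduce to parsimonious protocols, bound the cross-ratio $\frac{\mu_m^H\mu_1^L}{\mu_m^L\mu_1^H}$ by $\gamma^{m-2}$ with strictness for $m\ge 4$, and approach the bound with vanishing stochastic transitions out of the boundary transient states. But there are two genuine gaps. First, the claim that $U^R$ is a strictly increasing function of the cross-ratio $\Gamma$ alone is false: $U^R=p\alpha+(1-p)(1-\beta)$ depends on the pair $(\alpha,\beta)$, and for fixed $\Gamma$ there is a one-parameter curve of admissible pairs along which the payoff varies (with $p=\tfrac12$, the pairs $(0.9,0.5)$ and $(0.75,0.25)$ both have $\Gamma=9$ but give payoffs $0.7$ and $0.75$). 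What is actually needed is the inner optimization of the payoff over the region cut out by the cross-ratio constraint --- the paper's relaxed program \eqref{eqn:receiver_opt_relax} and \Cref{lem:auxiliary-optimization} --- whose solution $(\alpha^*,\beta^*)$ depends on $p$ and sits at a corner when $\gamma^{m-2}\le\kappa$; that corner is precisely how the theorem's second case $\max\{p,1-p\}$ arises, and your proposal never addresses it. The same omission infects your achievability step: driving $\Gamma\to\gamma^{m-2}$ is necessary but not sufficient, since the two boundary escape probabilities must be tuned asymmetrically (the ratio $k_2/k_1$ in \Cref{lem:setting_k2/k1}) so that the absorption probabilities converge to $(\alpha^*,\beta^*)$ rather than to some other point on the limiting constraint curve.

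Second, the core inequality $\Gamma(\Pi)\le\gamma^{m-2}$ together with its equality conditions --- which you correctly flag as the main obstacle --- is left unproved, and it is where essentially all the content of \Cref{lem:spread} lies. The paper does not run a maximum principle on the absorption odds $u_i^H/u_i^L$; it works with the stationary distribution $\nu^\theta$ of the modified chain on the $m-2$ transient states, imports the Hellman--Cover bound $\nu_i^H\nu_j^L/(\nu_i^L\nu_j^H)\le\gamma^{m-3}$ together with its equality conditions (\Cref{lem:condition-for-irreducible}), and gains one extra factor of $\gamma$ from the extreme signals triggering the final transitions into the absorbing states. Strictness for $m\ge4$ then follows because equality would force the boundary transient states $2$ and $m-1$ and the re-injection state $i_0$ of the modified chain to be a single state, which is impossible once there are at least two transient states. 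Your harmonic-function route may well be made to work, but the exponent count and the exclusion of non-birth--death or intermediate-signal configurations are exactly the delicate steps, so as written the proposal assumes the theorem's hardest ingredient rather than proving it.
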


\begin{remark}
    Comparing this result with \citet{hellman1970learning}, we find that the receiver's optimal payoff is the same as the receiver's payoff when he has $m-1$ memory states and faces a \textit{non-strategic} sender. Therefore, it is as if the cost of providing incentives to the strategic sender is just 1 memory state. Combined with the robustness of the sender’s best response to parsimonious protocols, this observation implies that the optimal payoff in \Cref{thm:supUR} constitutes a tight bound across information scenarios described in \Cref{sec:information}.
\end{remark}
 
If the receiver can specify the distribution of signals (e.g., the FDA often instructs a pharmaceutical company on what kinds of experiments to conduct), he always prefers more informative signals.
\begin{corollary}\label{cor:cs-receiver} The following comparative statics hold for the receiver's optimal payoff $\sup_{\Pi} U^{R}(\Pi):$

   \textup{(i)}  If $\gamma^{m-2}> \kappa$, then the receiver's optimal payoff is strictly increasing in the informativeness of signals $\gamma$ and the number of memory states $m$. As $\gamma \rightarrow \infty$ or $m \rightarrow \infty$, the optimal payoff goes to $1$. As $\gamma$ and $m$ decrease such that $\gamma^{m-2} \rightarrow \kappa$, the optimal payoff goes to $\max\{p,1-p\}.$
   
   \textup{(ii)} If $\gamma^{m-2} \leq  \kappa$, then the receiver's optimal payoff is obtained by acting on the prior, which is independent of the informativeness of signals $\gamma$ and the number of memory states $m$.
\end{corollary}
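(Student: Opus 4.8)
The plan is to treat both parts as a direct analysis of the closed-form value in \Cref{thm:supUR}, collapsing the two design parameters into a single variable. Writing $x := \gamma^{m-2}$ and $c := \sqrt{p(1-p)}$, the nontrivial branch of the value becomes the one-variable function
\[
V(x) := 1 - \frac{2c\sqrt{x}-1}{x-1}, \qquad x > \kappa,
\]
while the other branch is the constant $\max\{p,1-p\}$. Since $\gamma \in (1,\infty)$ and $m \ge 2$, the map $(\gamma,m)\mapsto x=\gamma^{m-2}$ is strictly increasing in $\gamma$ whenever $m \ge 3$ and strictly increasing in $m$ (as an integer sequence) whenever $\gamma > 1$; moreover, because $\kappa \ge 1$, the regime $x>\kappa$ forces $m \ge 3$ (at $m=2$ one has $x=1\le\kappa$). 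Thus every monotonicity and limit claim in part (i) reduces to the corresponding statement for $V$ as a function of $x$, composed with the monotonicity of $x$ in $\gamma$ and $m$, and part (ii) is immediate.

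For part (i) I would first show that $V$ is strictly increasing on $(\kappa,\infty)$. Differentiating, the sign of $V'(x)$ is exactly the sign of $\phi(x):=c\bigl(\sqrt{x}+1/\sqrt{x}\bigr)-1$, since $V'(x)=\phi(x)/(x-1)^2$ and the denominator is positive for $x\neq 1$. The crucial observation is that $\kappa$ is precisely the point where $\phi$ vanishes: taking $p \ge 1/2$ without loss (the case $p<1/2$ is symmetric under swapping $H$ and $L$), one has $\kappa = p/(1-p)$ and $c\sqrt{\kappa}=p$, so $\phi(\kappa)=0$. Because $\phi$ is strictly increasing for $x>1$ (its derivative equals $\tfrac{c}{2\sqrt{x}}\cdot\tfrac{x-1}{x}>0$) and $\kappa\ge 1$, it follows that $\phi(x)>0$, hence $V'(x)>0$, for all $x>\kappa$. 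Composing with the strict monotonicity of $x$ in $\gamma$ and in $m$ gives strict monotonicity of the optimal payoff in both parameters.

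I would then evaluate the two limits. As $\gamma\to\infty$ or $m\to\infty$ we have $x\to\infty$; since the numerator of the subtracted fraction grows like $\sqrt{x}$ while the denominator grows like $x$, the fraction tends to $0$ and $V(x)\to 1$. As $\gamma,m$ decrease with $x\to\kappa^+$, I would compute $V(\kappa)$ directly: using $c\sqrt{\kappa}=p$ and $\kappa-1=(2p-1)/(1-p)$ yields $V(\kappa)=1-(1-p)=p=\max\{p,1-p\}$, so the value is continuous across the two regimes. The only case needing separate care is $p=\tfrac12$, where $\kappa=1$ and the boundary expression is a $0/0$ form; there I would use $2c\sqrt{x}-1=\sqrt{x}-1$ together with $x-1=(\sqrt{x}-1)(\sqrt{x}+1)$ to simplify $V(x)=1-1/(\sqrt{x}+1)$, whose limit as $x\to 1^+$ is $\tfrac12$.

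Part (ii) is then immediate: when $\gamma^{m-2}\le\kappa$ the value equals the constant $\max\{p,1-p\}$ by \Cref{thm:supUR}, which is manifestly independent of $\gamma$ and $m$ and equals the payoff from taking the action matching the more likely state with no learning. The argument is elementary calculus; the only genuine content—and the step I would be most careful about—is the recognition that $\kappa$ is exactly the critical threshold $\phi(\kappa)=0$. This single identity simultaneously delivers strict monotonicity of $V$ on $(\kappa,\infty)$ and the matching boundary value $\max\{p,1-p\}$, thereby stitching the two branches of the value function together continuously.
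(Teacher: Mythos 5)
Your proposal is correct, and it is essentially the argument the paper implicitly relies on: the corollary is stated without a separate proof as an immediate consequence of \Cref{thm:supUR}, and your reduction to the single variable $x=\gamma^{m-2}$ with the key identity $c\bigl(\sqrt{\kappa}+1/\sqrt{\kappa}\bigr)=1$ (so that $V'(x)=\phi(x)/(x-1)^2>0$ on $(\kappa,\infty)$ and $V(\kappa)=\max\{p,1-p\}$) is exactly the calculus needed, including the correct observation that $x>\kappa$ forces $m\geq 3$ so that $x$ is strictly increasing in $\gamma$.
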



For the symmetric binary case, where $p=\frac{1}{2}$, $S=\{h,l\}$, and $\pi_H(h)=\pi_L(l)=q\in (\frac{1}{2},1)$, we have $\bar{\ell}=\frac{q}{1-q}$, $\underline{\ell}=\frac{1-q}{q}$, $\gamma = (\frac{q}{1-q})^2$, and $\kappa=1$. Therefore, $\gamma^{m-2}>\kappa$ is equivalent to $m> 2$. With $m = 2$, \Cref{thm:supUR} implies that $\sup_{\Pi} U^{R}(\Pi)=\frac{1}{2}.$ The following implication of \Cref{thm:supUR} confirms that the protocols constructed in \Cref{sec:eg} are indeed optimal.
\begin{corollary}\label{cor: receiver-value-sym}
    For the symmetric binary case, if $m > 2$, then
    \[
\sup_{\Pi} U^{R}(\Pi)
=
\dfrac{q^{m-2}}{q^{m-2}+(1-q)^{m-2}}.\]
\end{corollary}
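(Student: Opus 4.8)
The plan is to derive Corollary~\ref{cor: receiver-value-sym} as a direct specialization of Theorem~\ref{thm:supUR}, simply by substituting the parameters of the symmetric binary model into the general formula. First I would record the primitives of the symmetric case and compute the two summary indices. With $\pi_H(h)=\pi_L(l)=q\in(\tfrac12,1)$, the two likelihood ratios of the signals are $\pi_H(h)/\pi_L(h)=q/(1-q)$ and $\pi_H(l)/\pi_L(l)=(1-q)/q$, so that $\bar\ell = q/(1-q)$ and $\underline\ell = (1-q)/q$. Hence
\begin{equation*}
\gamma=\frac{\bar\ell}{\underline\ell}=\Bigl(\frac{q}{1-q}\Bigr)^{2}.
\end{equation*}
Since $p=\tfrac12$, we have $\kappa=\max\{p/(1-p),(1-p)/p\}=1$. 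The condition $\gamma^{m-2}>\kappa$ therefore reduces to $\bigl(q/(1-q)\bigr)^{2(m-2)}>1$, which (as $q>\tfrac12$) holds exactly when $m>2$. This is precisely the hypothesis of the corollary, so we are always in the first branch of the formula in Theorem~\ref{thm:supUR}.

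Next I would plug these values into the first-branch expression
\begin{equation*}
\sup_{\Pi}U^{R}(\Pi)=1-\frac{2\sqrt{p(1-p)\,\gamma^{m-2}}-1}{\gamma^{m-2}-1}.
\end{equation*}
With $p(1-p)=\tfrac14$ and $\gamma^{m-2}=\bigl(q/(1-q)\bigr)^{2(m-2)}$, the square root simplifies cleanly: $\sqrt{p(1-p)\gamma^{m-2}}=\tfrac12\bigl(q/(1-q)\bigr)^{m-2}$, so the numerator $2\sqrt{p(1-p)\gamma^{m-2}}-1$ becomes $\bigl(q/(1-q)\bigr)^{m-2}-1$, while the denominator is $\bigl(q/(1-q)\bigr)^{2(m-2)}-1$. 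Writing $x:=\bigl(q/(1-q)\bigr)^{m-2}$, the whole fraction is $(x-1)/(x^{2}-1)=1/(x+1)$, so that
\begin{equation*}
\sup_{\Pi}U^{R}(\Pi)=1-\frac{1}{x+1}=\frac{x}{x+1}.
\end{equation*}

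Finally I would translate $x/(x+1)$ back into the $q$-notation. Substituting $x=\bigl(q/(1-q)\bigr)^{m-2}$ and multiplying numerator and denominator by $(1-q)^{m-2}$ gives
\begin{equation*}
\frac{x}{x+1}
=\frac{q^{m-2}}{q^{m-2}+(1-q)^{m-2}},
\end{equation*}
which is exactly the claimed value. The computation is entirely routine once the right substitutions are made; the only step requiring any care is the algebraic simplification of the fraction, where recognizing the factorization $x^{2}-1=(x-1)(x+1)$ collapses the expression. There is no genuine obstacle here, since the corollary is a special case of a theorem assumed proved earlier; the content lies in verifying that the symmetric binary parameters satisfy the branch condition $m>2$ and that the general formula telescopes to the stated ratio.
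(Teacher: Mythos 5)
Your proposal is correct and matches the paper's treatment exactly: the corollary is obtained by substituting $\gamma=\bigl(q/(1-q)\bigr)^{2}$, $\kappa=1$, and $p=\tfrac12$ into the first branch of Theorem~\ref{thm:supUR}, noting that $\gamma^{m-2}>\kappa$ reduces to $m>2$, and simplifying via the factorization $x^{2}-1=(x-1)(x+1)$ with $x=\bigl(q/(1-q)\bigr)^{m-2}$. The algebra checks out and nothing is missing.
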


\subsubsection{Proof of \Cref{thm:supUR}: Overview}


Given a parsimonious protocol $\Pi$, let $\mu_1^{\theta}(\Pi)$ and $\mu_m^{\theta}(\Pi)$ denote the probabilities with which $\Pi$ absorbs in the absorbing states $1$ and $m$, respectively, given that the sender keeps sending signals in the state $\theta$. Whenever the context is clear, we will suppress $\Pi$ and write them as $\mu_1^{\theta}$ and $\mu_m^{\theta}$.

Since the receiver's payoff from a parsimonious protocol $\Pi$ is 
\begin{equation}
    U^R(\Pi)=p\mu_m^H(\Pi)+(1-p)\mu_1^L(\Pi).
\end{equation}
his optimization problem is equivalent to \eqref{eqn:receiver_opt} below.
\begin{equation} \tag{R} \label{eqn:receiver_opt}
\begin{aligned}
\sup_{\Pi} &\ \ p\mspace{1mu} \mu_m^H(\Pi) + (1 - p)\mspace{1mu} \mu_1^L(\Pi) \\
\text{s.t.} &\ \ \Pi \text{ is parsimonious.}
\end{aligned}
\end{equation}

\noindent \textbf{Preview.} The constraint in \eqref{eqn:receiver_opt} is not easy to handle. We will derive an implication of any parsimonious protocol \(\Pi\) on its absorbing probabilities \(\mu_m^H(\Pi)\) and \(\mu_1^L(\Pi)\)  in \Cref{lem:spread}. Using this new constraint, we formulate a relaxed version of the receiver's optimization problem \eqref{eqn:receiver_opt_relax}, which is easy to handle and gives an upper bound of the optimal value of the original problem \eqref{eqn:receiver_opt}  (see \Cref{lem:auxiliary-optimization} and \Cref{lem:upper-bound-parsimonious}). Finally, we show that the optimal values  \eqref{eqn:receiver_opt} and \eqref{eqn:receiver_opt_relax} coincide by constructing a sequence of protocols (see \Cref{lem:setting_k2/k1}).

\begin{lemma}\label{lem:spread}
    For any parsimonious protocol $\Pi$, we have
    \begin{equation}\label{eqn:spread}
        \mu_m^{H}(\Pi)\mu_1^{L}(\Pi)\leq \gamma^{m-2}\mu_1^{H}(\Pi)\mu_m^{L}(\Pi).
    \end{equation} Furthermore, if $m\geq 4$ and $\mu_i^\theta(\Pi)\neq 0$ for all $i\in\{1,m\}$ and $\theta=\{H,L\}$, then the strict inequality holds in \eqref{eqn:spread}. 
\end{lemma}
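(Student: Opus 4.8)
The plan is to pass to a likelihood-ratio (change-of-measure) formulation, in which the base $\gamma=\bar\ell/\underline\ell$ comes from a one-step comparison of the two induced transition kernels, and the exponent $m-2$ emerges as the intrinsic ``depth'' separating the two absorbing states across the $m-2$ transient states. First I would record the elementary but decisive one-step estimate. Writing $P_\theta(i,j)=\sum_{s\in S}\pi_\theta(s)\,f(i,s)(j)$ for the transition kernel induced by $\Pi$ conditional on $\theta$, the mediant inequality together with $\pi_H(s)/\pi_L(s)\in[\underline\ell,\bar\ell]$ for every $s$ gives the entrywise bound $\underline\ell\,P_L(i,j)\le P_H(i,j)\le\bar\ell\,P_L(i,j)$. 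This is the only point at which signal informativeness enters, and it is the origin of the factor $\gamma$ ``per state.''

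Next I would set up the change of measure. By \Cref{lem:sender-response-to-parsi} the chain is absorbed in $\{1,m\}$ with probability one, so the absorption time $\tau$ is a.s.\ finite. Let $\Lambda_t=\prod_{u<t}\pi_H(s_u)/\pi_L(s_u)$; this is a martingale under $\mathbb{P}^L$ with $\Lambda_0=1$, and it is exactly the density of $\mathbb{P}^H$ with respect to $\mathbb{P}^L$ on each $\mathcal{F}_t$. Writing $A_1,A_m$ for the events of absorption at states $1,m$, summing the change of measure over $\{\tau=t\}$ (monotone convergence, all terms nonnegative) yields $\mu_m^H=\mathbb{E}^L[\Lambda_\tau\mathbf{1}_{A_m}]$ and $\mu_1^H=\mathbb{E}^L[\Lambda_\tau\mathbf{1}_{A_1}]$, while $\mu_m^L=\mathbb{P}^L(A_m)$ and $\mu_1^L=\mathbb{P}^L(A_1)$. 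Assuming the nondegenerate case where all four quantities are positive (otherwise \eqref{eqn:spread} is immediate), the claim is then equivalent to the clean statement
\[
\mathbb{E}^L[\Lambda_\tau \mid A_m]\ \le\ \gamma^{\,m-2}\,\mathbb{E}^L[\Lambda_\tau \mid A_1],
\]
i.e.\ the average terminal likelihood ratio on the $H$-favorable absorbing state exceeds that on the $L$-favorable one by at most a factor $\gamma^{m-2}$. Starting the martingale at $t=0$ under $i_0\sim g$ keeps the initial distribution on both sides symmetrically, which is important because the inequality is genuinely global and not pointwise in the starting state.

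The heart of the argument is to control this conditional-expectation ratio using a position potential. The candidate execution is to organize everything around the odds $\rho^\theta(i)=\mu_m^\theta(i)/\mu_1^\theta(i)$ and to find weights $c:M\to(0,\infty)$ with endpoint ratio $c(m)/c(1)=\gamma^{m-2}$ such that $\Lambda_t/c(i_t)$ becomes a supermartingale under $\mathbb{P}^L$ on the transient states; by the one-step estimate this is the system $c(i)\sum_j P_H(i,j)/c(j)\le 1$, whose solvability with the prescribed endpoint ratio reflects that the odds can inflate by at most a factor $\gamma$ per transient state traversed. The crucial cancellation is that, in the \emph{ratio} of the two conditional expectations, the length-dependent growth of $\Lambda_\tau$ along long excursions cancels, so only the intrinsic depth $m-2$ survives; this is exactly why the exponent is the number of transient states rather than a path length, in the spirit of the finite-memory likelihood-ratio bounds of \citet{hellman1970learning}.

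I expect the main obstacle to be converting the one-step kernel bound into the \emph{sharp} aggregate factor $\gamma^{m-2}$. A naive entrywise comparison of the $P_H$- and $P_L$-harmonic absorption functions only yields a bound that grows with the expected path length, and optional stopping applied to $\Lambda_t/c(i_t)$ produces a \emph{sum} bound on $\mu_m^H/c(m)+\mu_1^H/c(1)$ rather than the desired ratio; isolating the ratio requires the odds-telescoping structure (or a paired super-/sub-martingale construction) that makes the path-length dependence cancel, and this is the delicate step. Finally, for the strict inequality when $m\ge 4$ and all $\mu_i^\theta$ are positive, I would trace equality through the one-step estimate and the supermartingale condition: equality forces every realized transition to use only the extremal signals $h,l$ in a degenerate linear arrangement, which is incompatible with two or more transient states carrying absorption mass in both directions. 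The case $m=3$ (a single transient state), where the deterministic protocol of \Cref{fig:3-state} attains equality, shows this degeneracy is possible only when there is one transient state, matching precisely the $m\ge 4$ hypothesis.
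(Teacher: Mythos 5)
Your change-of-measure setup is correct as far as it goes: since absorption in $\{1,m\}$ is almost sure under a parsimonious protocol (\Cref{lem:sender-response-to-parsi}), the identities $\mu_m^H=\mathbb{E}^L[\Lambda_\tau\mathbf{1}_{A_m}]$, $\mu_1^H=\mathbb{E}^L[\Lambda_\tau\mathbf{1}_{A_1}]$ are valid, the reformulation $\mathbb{E}^L[\Lambda_\tau\mid A_m]\le\gamma^{m-2}\,\mathbb{E}^L[\Lambda_\tau\mid A_1]$ is equivalent to \eqref{eqn:spread} in the nondegenerate case, and the one-step kernel bound $\underline{\ell}\,P_L(i,j)\le P_H(i,j)\le\bar{\ell}\,P_L(i,j)$ is right. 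But the proof stops exactly where the lemma begins. You never construct the weights $c$ with $c(m)/c(1)=\gamma^{m-2}$ making $\Lambda_t/c(i_t)$ a $\mathbb{P}^L$-supermartingale; the solvability of $c(i)\sum_j P_H(i,j)/c(j)\le 1$ with that prescribed endpoint ratio is itself a statement of the same depth as the lemma, not an elementary consequence of the one-step bound. Worse, as you yourself note, even granting such a $c$, optional stopping delivers only the sum bound $\mu_m^H/c(m)+\mu_1^H/c(1)\le\mathbb{E}[1/c(i_0)]$, which involves no $L$-probabilities at all and cannot be converted into the four-fold ratio inequality. The ``odds-telescoping structure'' or ``paired super-/sub-martingale construction'' that is supposed to cancel the path-length dependence is named but never built; that cancellation is the entire content of the lemma, so what you have is a plan with the central step open.

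For comparison, the paper achieves precisely this cancellation with a different device: Hellman's absorption formula \Cref{lem:absorbing-formula} applied to the \emph{modified} Markov chain (restrict to the $m-2$ transient states and redirect transitions into $\{1,m\}$ back to the initial state), which gives $\mu_R^\theta=\mathbb{E}[\tau^\theta]\sum_{i}\nu_i^\theta\, p^\theta_{iR}$ with $\nu^\theta$ the modified chain's stationary distribution. In the ratio $\mu_m^H\mu_1^L/(\mu_1^H\mu_m^L)$ the factors $\mathbb{E}[\tau^H]$ and $\mathbb{E}[\tau^L]$ cancel identically---this is the path-length cancellation you were looking for---and what remains is bounded by $\bigl(\max_i p^H_{im}/p^L_{im}\bigr)/\bigl(\min_j p^H_{j1}/p^L_{j1}\bigr)\le\gamma$ multiplied by the spread $\max_i(\nu^H_i/\nu^L_i)/\min_j(\nu^H_j/\nu^L_j)\le\gamma^{m-3}$, where the latter is the Hellman--Cover stationary-distribution bound (\Cref{lem:condition-for-irreducible}) applied to a chain on $m-2$ states. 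Without importing or reproving a spread bound of this kind, no argument of the type you sketch will produce the exponent $m-2$. The strictness claim for $m\ge 4$ inherits the same gap: the paper traces equality through the Hellman--Cover equality conditions on the modified chain, in which the transitions $2\to 1$ (via $l$) and $m-1\to m$ (via $h$) are both redirected to the \emph{same} state $i_0$, and for $m\ge 4$ the states $2$, $i_0$, $m-1$ cannot all coincide; your one-sentence version (``incompatible with two or more transient states carrying absorption mass in both directions'') has no mechanism behind it. A minor point: the $m=3$ protocol attaining equality is the parsimonious one in \Cref{fig:3-state-2-absorbing}, not the manipulable protocol of \Cref{fig:3-state}.
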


Now consider the following optimization problem:
\begin{equation} \tag{RR} \label{eqn:receiver_opt_relax}
    \begin{aligned}
        \max_{(\alpha,\beta)} \ \ & p\mspace{1mu}\alpha+(1-p)\mspace{1mu}\beta  \\
        \text{s.t.} \ \ \ & \alpha\beta  \leq \gamma^{m-2}(1-\alpha)(1-\beta),\\
        &  0\leq  \alpha, \beta \leq 1.
    \end{aligned}
\end{equation}

\begin{lemma}\label{lem:auxiliary-optimization}
The optimal value for the optimization problem \eqref{eqn:receiver_opt_relax} is

\begin{equation*}
U^{RR}(p,\gamma,m) =
\begin{cases}
1-\dfrac{2\sqrt{p(1-p) \gamma^{m-2}}-1}{\gamma^{m-2}-1},
& \textup{if }
\gamma^{m-2}>
\displaystyle \kappa,\\[1.2em]
\displaystyle\max\{p,1-p\},
& \textup{otherwise. }
\end{cases}
\end{equation*}
Furthermore, the first constraint in \eqref{eqn:receiver_opt_relax} must bind at an optimal solution $(\alpha^*,\beta^*)$; the optimal solution is unique if $p\neq \frac{1}{2}.$
\end{lemma}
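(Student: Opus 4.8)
The plan is to reduce the two-variable program to a one-dimensional maximization along the frontier of the feasible region and then solve it by elementary calculus. Write $K:=\gamma^{m-2}\ge 1$. Since the objective $p\alpha+(1-p)\beta$ has strictly positive coefficients, it is strictly increasing in each coordinate, so the maximum cannot lie in the interior of the feasible set and must occur where raising either coordinate would violate a constraint. First I would rewrite the nonlinear constraint as $\beta\,[K-(K-1)\alpha]\le K(1-\alpha)$ and note that $K-(K-1)\alpha>0$ on $[0,1]$, so for each fixed $\alpha$ the feasible values of $\beta$ form the interval $[0,\beta(\alpha)]$ with
\[
\beta(\alpha)=\frac{K(1-\alpha)}{K-(K-1)\alpha}.
\]
Monotonicity in $\beta$ then pins the optimum onto this curve, which is exactly the statement that the first constraint binds; the curve runs from $(0,1)$ to $(1,0)$, and both endpoints satisfy the constraint with equality.

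Substituting yields the scalar objective $\phi(\alpha):=p\alpha+(1-p)\beta(\alpha)$ on $[0,1]$. A direct differentiation gives the clean expressions $\beta'(\alpha)=-K/[K-(K-1)\alpha]^2$ and $\beta''(\alpha)=-2K(K-1)/[K-(K-1)\alpha]^3<0$, so $\phi$ is strictly concave. The first-order condition $\phi'(\alpha)=p+(1-p)\beta'(\alpha)=0$ reduces to $K-(K-1)\alpha^\ast=\sqrt{(1-p)K/p}$. I would then test feasibility of this stationary point: from $\alpha^\ast\ge 0$ and $\alpha^\ast\le 1$ one finds that $\alpha^\ast\in[0,1]$ precisely when $K\ge(1-p)/p$ and $K\ge p/(1-p)$, i.e. when $K\ge\kappa$, with an interior solution exactly when $K>\kappa$.

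In the interior case $K>\kappa$, strict concavity makes $\alpha^\ast$ the unique maximizer on the frontier. Writing $D:=\sqrt{(1-p)K/p}$, so that $(1-p)K=pD^2$ and hence $K+D^2=K/p$, a short computation of $\phi(\alpha^\ast)$ collapses to $\bigl(K-2\sqrt{p(1-p)K}\bigr)/(K-1)$, which is identically $1-\frac{2\sqrt{p(1-p)\gamma^{m-2}}-1}{\gamma^{m-2}-1}$. When $K\le\kappa$ the stationary point falls outside $[0,1]$, so strict concavity renders $\phi$ strictly monotone on $[0,1]$ and the maximum is attained at an endpoint: $\phi(1)=p$ at $(1,0)$ and $\phi(0)=1-p$ at $(0,1)$, giving $\max\{p,1-p\}$.

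For uniqueness I would show that $p\ne\tfrac12$ forces a single maximizer in both regimes. In the interior regime, strict concavity already yields one maximizer on the binding curve, and strict monotonicity of the objective excludes any tie off the curve. In the boundary regime, $p\ne 1-p$ singles out exactly one of the two corners, with strict monotonicity of $\phi$ again ruling out interior ties. The main obstacle I anticipate is organizational rather than conceptual: keeping the case split around $K=\kappa$ clean—including the degenerate $K=1$ (i.e.\ $m=2$) situation, where the binding curve degenerates to the segment $\alpha+\beta=1$ and the $p=\tfrac12$ non-uniqueness surfaces—and carrying the radical through the algebraic simplification without sign slips.
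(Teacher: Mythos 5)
Your proposal is correct and follows essentially the same route as the paper: establish that the nonlinear constraint binds at the optimum, then solve along the binding frontier, with the case split governed by whether the stationary point $K-(K-1)\alpha^{*}=\sqrt{(1-p)K/p}$ lands in $[0,1]$, i.e.\ whether $\gamma^{m-2}>\kappa$. Your explicit parametrization $\beta(\alpha)=K(1-\alpha)/[K-(K-1)\alpha]$ together with the strict concavity of $\phi$ merely fills in the computations the paper compresses into ``by inspection'' and ``solving the optimization problem when the first constraint binds,'' and in fact handles the $\gamma^{m-2}\le\kappa$ regime more carefully than the paper's terse claim that only the two corners satisfy the binding constraint.
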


By \Cref{lem:spread}, for any $\Pi$, $(\mu_m^H(\Pi),\mu_1^L(\Pi))$ is feasible for \eqref{eqn:receiver_opt_relax} and hence its optimal value provides an upper bound of the receiver's optimal payoff. Again, by \Cref{lem:spread}, if $m \geq 4$ and $\gamma^{m-2} > \kappa$, the first constraint in \eqref{eqn:receiver_opt_relax} cannot hold with equality for any $(\mu_m^H(\Pi), \mu_1^L(\Pi))$; however, since the optimal value of \Cref{lem:auxiliary-optimization} is attained when the first constraint in \eqref{eqn:receiver_opt_relax} is binding, the upper bound is therefore not achieved by any parsimonious protocol (but it can be approximated). The following result is therefore immediate from \Cref{lem:spread} and \Cref{lem:auxiliary-optimization}.

\begin{corollary}\label{lem:upper-bound-parsimonious}
    For any parsimonious protocol $\Pi$, $U^{R}(\Pi)\leq U^{RR}(p,\gamma,m),$ where the inequality is strict if $m\geq 4$ and $\gamma^{m-2}>\kappa$.
\end{corollary}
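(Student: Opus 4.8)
The plan is to read \Cref{lem:upper-bound-parsimonious} as the assertion that the pair of absorption probabilities produced by any parsimonious protocol is a feasible point of the relaxed program \eqref{eqn:receiver_opt_relax} carrying exactly the objective value $U^R(\Pi)$. The one structural input I need beyond \Cref{lem:spread} is that under a parsimonious protocol absorption is certain: by \Cref{lem:sender-response-to-parsi} the sender keeps sending signals until an absorbing state is reached, and since every non-absorbing state is transient the induced chain is absorbed with probability one in each state of nature, so $\mu_1^\theta(\Pi)+\mu_m^\theta(\Pi)=1$ for $\theta\in\{H,L\}$. Writing $\alpha:=\mu_m^H(\Pi)$ and $\beta:=\mu_1^L(\Pi)$, this yields $\mu_1^H=1-\alpha$ and $\mu_m^L=1-\beta$, and the spread bound \eqref{eqn:spread} turns into precisely $\alpha\beta\le\gamma^{m-2}(1-\alpha)(1-\beta)$, the first constraint of \eqref{eqn:receiver_opt_relax}; with $\alpha,\beta\in[0,1]$ this shows $(\alpha,\beta)$ is feasible.

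For the weak inequality I would then just note that the objective of \eqref{eqn:receiver_opt_relax} at $(\alpha,\beta)$ is $p\alpha+(1-p)\beta=p\mu_m^H(\Pi)+(1-p)\mu_1^L(\Pi)=U^R(\Pi)$, so $U^R(\Pi)\le U^{RR}(p,\gamma,m)$ because $U^{RR}$ is the maximum of the program. This part is genuinely immediate.

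The strict part, under $m\ge 4$ and $\gamma^{m-2}>\kappa$, is where the work lies, and I would argue by contradiction. If $U^R(\Pi)=U^{RR}$, then $(\alpha,\beta)$ is an optimizer of \eqref{eqn:receiver_opt_relax}, and by \Cref{lem:auxiliary-optimization} the first constraint must bind there, i.e. equality holds in \eqref{eqn:spread}. When all four absorption probabilities are nonzero, the strict clause of \Cref{lem:spread} (valid exactly for $m\ge 4$) forbids equality, giving the contradiction. The main obstacle is the boundary cases in which some $\mu_i^\theta(\Pi)=0$, where that strict clause does not apply. Here I would use $\mu_1^\theta+\mu_m^\theta=1$ to show that equality in \eqref{eqn:spread} forces $(\alpha,\beta)\in\{(0,1),(1,0)\}$: a vanishing factor on one side of the identity propagates through it to pin down the remaining probabilities. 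At these two corners the objective equals $1-p$ and $p$ respectively, hence is at most $\max\{p,1-p\}$.

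To close the boundary cases I still need $U^{RR}(p,\gamma,m)>\max\{p,1-p\}$ whenever $\gamma^{m-2}>\kappa$. I would verify this directly from the closed form in \Cref{lem:auxiliary-optimization}: assuming $p\ge\tfrac12$ (the other case is symmetric) and multiplying through by $\gamma^{m-2}-1>0$ (positive since $\gamma^{m-2}>\kappa\ge 1$), the claim reduces to $(1-p)\gamma^{m-2}+p>2\sqrt{p(1-p)\gamma^{m-2}}$, which is the strict AM--GM inequality $\bigl(\sqrt{(1-p)\gamma^{m-2}}-\sqrt{p}\bigr)^2>0$, strict precisely because $\gamma^{m-2}\ne p/(1-p)=\kappa$. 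This contradicts $U^R(\Pi)=U^{RR}>\max\{p,1-p\}\ge$ (corner objective), completing the strict inequality. The only step demanding care is the bookkeeping in the degenerate cases; everything else follows at once from the two cited lemmas.
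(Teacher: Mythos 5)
Your proof is correct and follows essentially the same route as the paper: feasibility of $(\mu_m^H(\Pi),\mu_1^L(\Pi))$ for \eqref{eqn:receiver_opt_relax} via \Cref{lem:spread} gives the weak inequality, and the binding-constraint property from \Cref{lem:auxiliary-optimization} combined with the strict clause of \Cref{lem:spread} rules out attainment when $m\geq 4$ and $\gamma^{m-2}>\kappa$. Your explicit handling of the degenerate corners $(\alpha,\beta)\in\{(0,1),(1,0)\}$ --- where some absorption probability vanishes, so the strict clause of \Cref{lem:spread} does not apply and equality in \eqref{eqn:spread} actually holds --- is more careful than the paper's one-line justification, which asserts that equality ``cannot hold for any'' protocol and silently skips these corners; your observation that the corner objective is at most $\max\{p,1-p\}<U^{RR}(p,\gamma,m)$ (via the strict AM--GM step, already recorded in \Cref{lem:auxiliary-optimization}) is exactly the patch that closes this gap.
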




If $\gamma^{m-2} \leq \kappa$, the upper bound $\max\{p,1-p\}$ can be achieved by a trivial protocol. To show that the upper bound in \Cref{lem:upper-bound-parsimonious} is indeed the supremum when $\gamma^{m-2} > \kappa$, which necessitates $m > 2$, we construct a sequence of parsimonious protocols $\Pi(\epsilon_1, \epsilon_2)$ as follows. The initial state is given by $g(2) = 1$. The action rule is defined by $a(m) = 1$ and $a(i) = 0$ for all $i \neq m$. The transition rule is specified by:
\begin{equation*}
\begin{aligned}
f(i, l)(j) &=
\begin{cases}
    1, & \text{if } i = j \in \{1, m\}, \\
    {\left(\pi_H(h)\pi_L(h)\right)^{\frac{m-2}{2}}}\epsilon_1, & \text{if } i = 2,\, j = 1, \\
    1 - {\left(\pi_H(h)\pi_L(h)\right)^{\frac{m-2}{2}}}\epsilon_1, & \text{if } i = j = 2, \\
    1, & \text{if } i = j + 1 \notin \{2, m\}, \\
    0, & \text{otherwise},
\end{cases} \\[1em]
f(i, h)(j) &=
\begin{cases}
    1, & \text{if } i = j \in \{1, m\}, \\
    {\left(\pi_H(l)\pi_L(l)\right)^{\frac{m-2}{2}}} \epsilon_2, & \text{if } i = m - 1,\, j = m, \\
    1 - {\left(\pi_H(l)\pi_L(l)\right)^{\frac{m-2}{2}}} \epsilon_2, & \text{if } i = j = m - 1, \\
    1, & \text{if } i = j - 1 \notin \{1, m - 1\}, \\
    0, & \text{otherwise}.
\end{cases}
\end{aligned}
\end{equation*}
This parsimonious protocol features a random transition in memory states $2$ and $m-1$. By choosing $(\epsilon_1, \epsilon_2)$ appropriately, the following result completes the proof of \Cref{thm:supUR}.
\begin{lemma}\label{lem:setting_k2/k1}
Assume $\gamma^{m-2}>\kappa$. Then there exists $k_1,k_2>0$ such that
\[
\lim_{\epsilon \to 0}U^R(\Pi(k_1\epsilon, k_2\epsilon))=
   1-\frac{2\sqrt{p(1-p)\gamma^{m-2}}-1}{\gamma^{m-2}-1}.
\]
\end{lemma}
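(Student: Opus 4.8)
The plan is to compute the absorption probabilities $\mu_m^\theta(\Pi(\epsilon_1,\epsilon_2))$ in closed form, pass to the limit $\epsilon\to 0$, and then match the limiting point to the optimizer of the relaxed program \eqref{eqn:receiver_opt_relax}. By \Cref{lem:sender-response-to-parsi} the sender keeps generating signals until absorption, so conditional on $\theta$ the protocol induces a random walk on $\{1,\dots,m\}$ with absorbing barriers $1$ and $m$: signal $h$ pushes the state up, signal $l$ pushes it down, and every other signal is a self-loop. The only non-deterministic interior moves are the two damped boundary transitions out of states $2$ and $m-1$, with probabilities $\epsilon_1':=(\pi_H(h)\pi_L(h))^{(m-2)/2}\epsilon_1$ and $\epsilon_2':=(\pi_H(l)\pi_L(l))^{(m-2)/2}\epsilon_2$. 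Writing $u_i^\theta$ for the probability of absorption in $m$ from state $i$, I have $\mu_m^\theta=u_2^\theta$, $\mu_1^L=1-u_2^L$, and $U^R(\Pi(\epsilon_1,\epsilon_2))=p\,u_2^H+(1-p)(1-u_2^L)$.

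First I would solve the harmonic system $u_i^\theta=\sum_s\pi_\theta(s)\,u^\theta_{f(i,s)}$ by a gambler's-ruin argument. With $r_\theta:=\pi_\theta(l)/\pi_\theta(h)$ and increments $d_i:=u_i^\theta-u_{i-1}^\theta$, the interior equations for $i\in\{3,\dots,m-2\}$ collapse to $d_{i+1}=r_\theta d_i$, while the two boundary equations yield the modified first and last steps $d_3=r_\theta\epsilon_1'\,d_2$ and $d_m=(r_\theta/\epsilon_2')\,d_{m-1}$. Telescoping under $u_1^\theta=0$, $u_m^\theta=1$, i.e. $\sum_{i=2}^m d_i=1$, gives
\[
\mu_m^\theta=u_2^\theta=\Bigl[\,1+\epsilon_1'\textstyle\sum_{k=1}^{m-3}r_\theta^{\,k}+\frac{r_\theta^{\,m-2}\epsilon_1'}{\epsilon_2'}\Bigr]^{-1}.
\]

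Next I would substitute $\epsilon_1=k_1\epsilon,\ \epsilon_2=k_2\epsilon$ and let $\epsilon\to0$. The crucial point, and the reason for the fractional scaling factors in the construction, is that $\epsilon_1'$ and $\epsilon_2'$ are of the same order in $\epsilon$: the middle sum $\epsilon_1'\sum_k r_\theta^{\,k}\to0$, whereas the boundary term survives because the ratio $\epsilon_1'/\epsilon_2'=(k_1/k_2)\bigl(\pi_H(h)\pi_L(h)/(\pi_H(l)\pi_L(l))\bigr)^{(m-2)/2}$ does not depend on $\epsilon$. Using $\pi_H(l)\pi_L(h)/(\pi_H(h)\pi_L(l))=\underline{\ell}/\bar{\ell}=1/\gamma$, a direct computation gives $r_H^{\,m-2}\bigl(\pi_H(h)\pi_L(h)/(\pi_H(l)\pi_L(l))\bigr)^{(m-2)/2}=\gamma^{-(m-2)/2}$ and the same expression with $r_L$ equals $\gamma^{(m-2)/2}$. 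Hence, writing $\rho:=k_1/k_2$,
\[
\alpha:=\lim_{\epsilon\to0}\mu_m^H=\frac{1}{1+\rho\,\gamma^{-(m-2)/2}},\qquad \beta:=\lim_{\epsilon\to0}\mu_1^L=\frac{\rho\,\gamma^{(m-2)/2}}{1+\rho\,\gamma^{(m-2)/2}}.
\]

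Finally I would match $(\alpha,\beta)$ to the optimizer of \eqref{eqn:receiver_opt_relax}. The expressions above give $\tfrac{\alpha}{1-\alpha}=\gamma^{(m-2)/2}/\rho$ and $\tfrac{\beta}{1-\beta}=\rho\,\gamma^{(m-2)/2}$, so $\tfrac{\alpha}{1-\alpha}\cdot\tfrac{\beta}{1-\beta}=\gamma^{m-2}$; that is, $(\alpha,\beta)$ lies on the binding first constraint of \eqref{eqn:receiver_opt_relax} for every $\rho>0$. Moreover $\rho\mapsto(\alpha,\beta)$ is a continuous, strictly monotone bijection from $(0,\infty)$ onto the interior portion $\alpha,\beta\in(0,1)$ of that curve (as $\rho\to0$, $(\alpha,\beta)\to(1,0)$; as $\rho\to\infty$, $(\alpha,\beta)\to(0,1)$). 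Since $\gamma^{m-2}>\kappa$, \Cref{lem:auxiliary-optimization} places the optimizer $(\alpha^*,\beta^*)$ on this binding curve, and its value $U^{RR}$ strictly exceeds the corner values $p$ and $1-p$, so $(\alpha^*,\beta^*)$ is interior and therefore reachable. Choosing $k_1,k_2$ with $\rho=k_1/k_2=\sqrt{\beta^*(1-\alpha^*)/(\alpha^*(1-\beta^*))}$ sends $(\alpha,\beta)$ to $(\alpha^*,\beta^*)$, whence $\lim_{\epsilon\to0}U^R(\Pi(k_1\epsilon,k_2\epsilon))=p\alpha^*+(1-p)\beta^*=U^{RR}(p,\gamma,m)$, the asserted value. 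The main obstacle is the simultaneous limit in the third step: one must verify that all interior contributions vanish while the single boundary term persists at exactly the right magnitude, which rests on the algebraic identity tying the construction's scaling exponents to $\gamma^{\pm(m-2)/2}$; once this is in hand, the matching step is routine one-dimensional optimization.
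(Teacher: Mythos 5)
Your proposal is correct, and its overall architecture matches the paper's: compute the limiting absorption probabilities of $\Pi(k_1\epsilon,k_2\epsilon)$ as a function of the ratio $k_1/k_2$, observe that every limit point lies on the binding constraint $\alpha\beta=\gamma^{m-2}(1-\alpha)(1-\beta)$ of \eqref{eqn:receiver_opt_relax}, and then tune the ratio so as to hit the optimizer $(\alpha^*,\beta^*)$ of \Cref{lem:auxiliary-optimization}. Where you genuinely depart from the paper is in the computational core. The paper's \Cref{lem:upper-bound} obtains the limits $\mu_m^H\to\frac{(k_2/k_1)\gamma^{(m-2)/2}}{1+(k_2/k_1)\gamma^{(m-2)/2}}$ and $\mu_1^L\to\frac{(k_1/k_2)\gamma^{(m-2)/2}}{1+(k_1/k_2)\gamma^{(m-2)/2}}$ via the Hellman--Cover machinery: the stationary distribution $\nu^\theta(\epsilon)$ of the modified Markov chain on the transient states, its balance equations, and the absorption formula \eqref{lem:absorbing-formula}; this yields only the $\epsilon\to 0$ limits. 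You instead solve the harmonic system for the absorption probabilities exactly, by a gambler's-ruin first-step analysis with increments $d_i$, obtaining the closed form $\mu_m^\theta=\bigl[1+\epsilon_1'\sum_{k=1}^{m-3}r_\theta^{\,k}+r_\theta^{\,m-2}\epsilon_1'/\epsilon_2'\bigr]^{-1}$ valid for every $\epsilon>0$ (your formula is also correct in the degenerate cases $m=3,4$, where the middle sum is empty or a single term), from which the limit and the identity $r_\theta^{\,m-2}\epsilon_1'/\epsilon_2'\to(k_1/k_2)\gamma^{\mp(m-2)/2}$ are immediate. Your route is more elementary and self-contained, makes transparent why the damping exponents $(\pi_H(h)\pi_L(h))^{(m-2)/2}$ and $(\pi_H(l)\pi_L(l))^{(m-2)/2}$ were built into the construction, and gives exact finite-$\epsilon$ expressions rather than just limits; the paper's route has the advantage of reusing the modified-chain apparatus that it needs anyway for \Cref{lem:spread} and \Cref{thm:behavioral_implication}, keeping the tools uniform. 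Your final selection of $\rho=k_1/k_2$ in terms of $(\alpha^*,\beta^*)$ is equivalent to the paper's explicit choice $k_2/k_1=\bigl(\sqrt{\gamma^{m-2}}-\sqrt{\tfrac{1-p}{p}}\bigr)\big/\bigl(\sqrt{\gamma^{m-2}}\sqrt{\tfrac{1-p}{p}}-1\bigr)$, as both parametrize the same point on the binding curve, whose interiority under $\gamma^{m-2}>\kappa$ you justify correctly.
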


\subsection{The Sender's Optimal Payoff}\label{sec:sender optimal value}
\begin{theorem}\label{thm:sender_payoff}
    For any sequence of parsimonious protocols $\{\Pi^n\}_{n=1}^\infty$ that achieves the receiver's optimal payoff, i.e., $\lim\limits_{n \rightarrow \infty} U^R(\Pi^n) = \sup_\Pi U^R(\Pi)$, then the sender's optimal payoff $\lim\limits_{n \rightarrow \infty} U^{S}(\Pi^{n})$ exists and satisfies the following:
    \begin{equation*}
\lim_{n \rightarrow \infty} U^{S}(\Pi^{n}) =
\begin{cases}
p + \dfrac{2p - 1}{\gamma^{m-2} - 1}, & \textup{if } \gamma^{m-2} > \kappa, \\[0.3em]
1, & \textup{if } \gamma^{m-2} \leq \kappa \textup{ and } p > \tfrac{1}{2}, \\[0.3em]
0, & \textup{if } \gamma^{m-2} \leq \kappa \textup{ and } p < \tfrac{1}{2}.
\end{cases}
\end{equation*}
\end{theorem}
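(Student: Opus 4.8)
The plan is to reduce the computation of $\lim_n U^S(\Pi^n)$ to the closed-form maximizer of the relaxed program \eqref{eqn:receiver_opt_relax}, reusing the absorption-probability bookkeeping behind \Cref{thm:supUR}. First I would record the sender's payoff in the right coordinates. Under a parsimonious protocol the sender's best response is to keep transmitting until absorption (\Cref{lem:sender-response-to-parsi}), and since every non-absorbing state is transient the chain is absorbed into $\{1,m\}$ with probability one in each state of nature, so $\mu_1^\theta(\Pi)+\mu_m^\theta(\Pi)=1$. Action $H$ is then taken exactly upon absorption at state $m$, giving
\[
U^S(\Pi)=p\,\mu_m^H(\Pi)+(1-p)\,\mu_m^L(\Pi).
\]
Writing $\alpha:=\mu_m^H(\Pi)$, $\beta:=\mu_1^L(\Pi)$ and substituting $\mu_m^L=1-\beta$ yields the key identity $U^S(\Pi)=p\alpha+(1-p)(1-\beta)$, to be contrasted with the receiver's payoff $U^R(\Pi)=p\alpha+(1-p)\beta$.

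Next I would establish convergence of the absorption probabilities. By \Cref{lem:spread} the pair $(\alpha_n,\beta_n):=(\mu_m^H(\Pi^n),\mu_1^L(\Pi^n))$ is feasible for \eqref{eqn:receiver_opt_relax}, and by hypothesis $p\alpha_n+(1-p)\beta_n=U^R(\Pi^n)\to U^{RR}(p,\gamma,m)$. Since the feasible region of \eqref{eqn:receiver_opt_relax} is compact and the objective continuous, every limit point of $(\alpha_n,\beta_n)$ is an optimal solution; by the uniqueness in \Cref{lem:auxiliary-optimization} (valid for $p\neq\tfrac12$) the whole sequence converges, $(\alpha_n,\beta_n)\to(\alpha^*,\beta^*)$. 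Hence $\lim_n U^S(\Pi^n)=p\alpha^*+(1-p)(1-\beta^*)$ exists, which already delivers the existence claim; the remaining task is to evaluate this limit case by case.

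For $\gamma^{m-2}>\kappa$ the first constraint binds, $\tfrac{\alpha^*}{1-\alpha^*}\cdot\tfrac{\beta^*}{1-\beta^*}=\gamma^{m-2}$, and the first-order condition for maximizing $p\alpha+(1-p)\beta$ along this curve pins down $(\alpha^*,\beta^*)$. Substituting into $p\alpha^*+(1-p)(1-\beta^*)$ and simplifying with $c:=\sqrt{(1-p)\gamma^{m-2}/p}$ collapses to
\[
\frac{1-p-p\,\gamma^{m-2}}{1-\gamma^{m-2}}=p+\frac{2p-1}{\gamma^{m-2}-1},
\]
the claimed value. In the regime $\gamma^{m-2}\le\kappa$ the maximizer of \eqref{eqn:receiver_opt_relax} sits at a corner: $(\alpha^*,\beta^*)=(1,0)$ when $p>\tfrac12$, giving $U^S\to p+(1-p)=1$, and $(\alpha^*,\beta^*)=(0,1)$ when $p<\tfrac12$, giving $U^S\to 0$.

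The main obstacle is the convergence step rather than the algebra: it rests on uniqueness of the optimal solution of \eqref{eqn:receiver_opt_relax}, which \Cref{lem:auxiliary-optimization} supplies only for $p\neq\tfrac12$. I would dispatch the borderline case $p=\tfrac12$, $\gamma^{m-2}>1$ separately by parametrizing the binding boundary by the odds ratio $x=\tfrac{\alpha}{1-\alpha}$ (so that $\tfrac{\beta}{1-\beta}=\gamma^{m-2}/x$); a one-variable calculation shows the symmetric point $\alpha^*=\beta^*$ is the unique maximizer, and symmetry then forces $U^S\to\tfrac12$, matching the first-case formula at $p=\tfrac12$. The genuinely degenerate configuration $p=\tfrac12$, $\gamma^{m-2}=1$ is excluded by the theorem's hypotheses, and there the sender's payoff is indeed not pinned down. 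A secondary point is to confirm that the corner maximizers in the $\gamma^{m-2}\le\kappa$ regime are themselves unique, which follows by checking that no feasible interior point attains $\max\{p,1-p\}$.
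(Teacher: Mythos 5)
Your proposal is correct and takes essentially the same route as the paper's proof: the pair $(\mu_m^H(\Pi^n),\mu_1^L(\Pi^n))$ is feasible for \eqref{eqn:receiver_opt_relax} by \Cref{lem:spread}, its limit points are optimal solutions, uniqueness forces convergence to $(\alpha^*,\beta^*)$, and the limit $U^S = p\,\alpha^* + (1-p)(1-\beta^*)$ is then evaluated case by case. Your separate one-variable argument for the borderline case $p=\tfrac12$, $\gamma^{m-2}>1$ is a sensible patch given that the main-text statement of \Cref{lem:auxiliary-optimization} asserts uniqueness only for $p\neq\tfrac12$; the paper instead relies on the finer statements in the appendix proof of that lemma, whose case (i) establishes uniqueness for all $p$ whenever $\gamma^{m-2}>\kappa$, so the two arguments close the same small gap in different but equivalent ways.
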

\begin{remark}
    In the first case, the limiting payoff is in the interior of $(0,1)$. The case of $\gamma^{m-2} \leq \kappa$ and $p = \frac{1}{2}$ is a knife-edge case. The receiver is indifferent; in particular, he can randomly select the initial memory state between $1$ and $m$, resulting in any sender payoff in the interval $[0,1]$.
\end{remark}
Two implications of \Cref{thm:sender_payoff} are worthwhile to highlight. When the sender can choose the distribution of signals, she prefers greater informativeness when the prior is unfavorable and less informativeness when it is favorable.
\begin{corollary}\label{cor:cs-sender}
    Suppose $\gamma^{m-2}> \kappa$. Then the sender's optimal payoff $\lim\limits_{n\rightarrow \infty} U^S(\Pi^n)$ satisfies the following:
    
    \textup{(i)} if $p<\frac{1}{2}$, it is strictly increasing in $\gamma$. As $\gamma \rightarrow \infty,$ it goes to $p$.
    
    \textup{(ii)} if $p>\frac{1}{2}$, it is strictly decreasing in $\gamma$. As $\gamma^{m-2} \rightarrow \kappa$, it goes to $1$.
\end{corollary}
The next result concerns the special case:
\begin{corollary}
    In the symmetric binary case, if $\lim\limits_{n \rightarrow \infty} U^R(\Pi^n) = \sup_\Pi U^R(\Pi)$ for a sequence of parsimonious protocol $\{\Pi^n\}$, then the following hold:
    \begin{equation*}
\lim_{n \rightarrow \infty} U^{S}(\Pi^{n}) =
\begin{cases}
\dfrac{1}{2}, & \textup{if } m > 2, \\[0.3em]
1, & \textup{if } m \leq 2 \textup{ and } p > \tfrac{1}{2}, \\[0.2em]
0, & \textup{if } m \leq 2 \textup{ and } p < \tfrac{1}{2}.
\end{cases}
\end{equation*}
\end{corollary}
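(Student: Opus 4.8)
The plan is to derive the corollary as a direct specialization of \Cref{thm:sender_payoff}; the only substantive work is substituting the symmetric binary parameters and simplifying, so that no fresh convergence argument is needed once the optimal sequence $\{\Pi^n\}$ is taken as given. First I would record the relevant indices. With $\pi_H(h)=\pi_L(l)=q$ and $q>\frac{1}{2}$, the extreme likelihood ratios are $\bar{\ell}=q/(1-q)$ and $\underline{\ell}=(1-q)/q$, so that $\gamma=\bar{\ell}/\underline{\ell}=\bigl(q/(1-q)\bigr)^{2}$ and $\kappa=\max\{p/(1-p),(1-p)/p\}$. The single decisive observation is that $q>\frac{1}{2}$ forces $\gamma>1$, so the map $m\mapsto\gamma^{m-2}$ is strictly increasing, with $\gamma^{m-2}>1$ exactly when $m>2$ and $\gamma^{m-2}=1$ exactly when $m=2$. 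This is what converts the theorem's threshold comparison $\gamma^{m-2}\gtrless\kappa$ into the conditions on $m$ stated in the corollary.

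Next I would treat the symmetric-prior instance $p=\frac{1}{2}$, where $\kappa=1$. Here $\gamma^{m-2}>\kappa$ is equivalent to $m>2$, placing us in the first branch of \Cref{thm:sender_payoff}, whose value is $p+\frac{2p-1}{\gamma^{m-2}-1}$. Substituting $p=\frac{1}{2}$ annihilates the numerator $2p-1$, leaving the limit payoff equal to $\frac{1}{2}$ regardless of the precise magnitude of $\gamma^{m-2}>1$. This yields the first case, and the crucial and only real simplification is the cancellation of the $(2p-1)$ factor at the symmetric prior.

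For the remaining branches I would invoke the second and third cases of \Cref{thm:sender_payoff} verbatim. When $m\leq 2$ the first observation gives $\gamma^{m-2}\leq 1$, and since a skewed prior has $\kappa>1$ we obtain $\gamma^{m-2}\leq\kappa$, so the hypothesis of those branches is met; the theorem then returns the sender payoff $1$ when $p>\frac{1}{2}$ and $0$ when $p<\frac{1}{2}$. Because $\{\Pi^n\}$ is assumed to attain the receiver's optimal payoff, both the existence of $\lim_{n}U^S(\Pi^n)$ and its value are inherited directly from the theorem in each branch, so nothing beyond the substitution is required.

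The only point needing explicit care, rather than a genuine obstacle, is the boundary configuration $m=2$ together with $p=\frac{1}{2}$: there $\gamma^{m-2}=1=\kappa$ with an unbiased prior, which is precisely the knife-edge excluded from the three branches of \Cref{thm:sender_payoff} (the receiver is indifferent and any sender payoff in $[0,1]$ is attainable). I would therefore state that this single configuration is set aside, after which every instance of the corollary falls into exactly one branch of the theorem and the computation above applies unambiguously.
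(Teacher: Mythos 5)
Your proposal is correct and takes essentially the same route as the paper, which presents this corollary as an immediate specialization of \Cref{thm:sender_payoff} with no separate proof: computing $\gamma=\bigl(q/(1-q)\bigr)^{2}$ and $\kappa$, converting the threshold $\gamma^{m-2}\gtrless\kappa$ into conditions on $m$, and substituting $p=\tfrac{1}{2}$ (so that $2p-1$ vanishes) in the first branch is exactly that computation. Your explicit flagging of the knife-edge configuration $m=2$, $p=\tfrac{1}{2}$, which lies outside all three branches, matches what the paper handles only in the remark following \Cref{thm:sender_payoff}.
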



\section{Optimal Transitions and Further Behavior Implications}\label{sec:behavior}

We now delve further into the detailed mechanics of parsimonious protocols by examining their optimal transition rules. In particular, we show that indecision near the decision point and opinion polarization naturally emerge.

For any parsimonious protocol $\Pi$ and a best response of the sender, for each state of nature $\theta$, the probability of eventually reaching one of the absorbing states is 1. Therefore, the probabilities of transient memory states must be defined in terms of the relative hitting frequencies of these states, rather than the stationary distribution induced by the Markov chain, which assigns positive probability only to the absorbing states. Let $\nu^\theta_i$ denote the relative hitting frequency of a transient memory state $i$ when the state of nature is $\theta$.\footnote{See \Cref{sec: proof with modified chain} for mathematical details.} We label the transient states ${2,\ldots,m-1}$ in the order of their likelihood ratios $\frac{\nu^H_i}{\nu^L_i}$: $\frac{\nu^H_2}{\nu^L_2}\leq \dots \leq \frac{\nu^H_{m-1}}{\nu^L_{m-1}}.$

Recall that \Cref{thm:supUR} shows that $\gamma^{m-2}>\kappa$ ensures a non-trivial protocol for the receiver, and that $m\geq 4$ implies the nonexistence of an exact optimal protocol, suggesting that stochastic transitions must be used (a three-state optimal protocol with deterministic transitions is given in \Cref{sec:eg}). We shall focus on this non-trivial case and examine additional behavioral implications of optimal parsimonious protocols. For this purpose, let $\tau^\theta(\Pi)$ be the random time to absorption induced by a parsimonious protocol $\Pi$ when the state of nature is $\theta$. Then $m_{\tau^\theta(\Pi)-1}\in \{1,m\}$, and $m_{\tau^\theta(\Pi)-1}$ denotes the memory state immediately before absorption. Our goal is to understand the memory states right before absorption—where an action must be taken—and the signals that trigger these transitions.

Recall that $h$ and $l$ are extreme signals that attain the highest and lowest likelihood ratios, respectively, in \eqref{eqn: extreme signals}.

\begin{theorem}\label{thm:behavioral_implication}
Suppose $\gamma^{m-2}>\kappa$ and $m\geq 4$, and let $\{\Pi^n\}_{n=1}^\infty$ be a sequence of parsimonious protocols that achieve the receiver's optimal payoff, 
$\lim\limits_{n \rightarrow \infty} U^R(\Pi^n) = \sup_\Pi U^R(\Pi)$. 
Then the following statements hold as $n\rightarrow\infty$.

\textup{(i)} The probability that an absorbing state is reached from its adjacent state upon receiving an extreme signal converges to one:
\begin{equation*}
    \textup{P}^\theta(m_{\tau^\theta(\Pi^n)-1}=2,\, s_{\tau^\theta(\Pi^n)-1}=l \mid m_{\tau^\theta(\Pi^n)}=1) \rightarrow 1,
\end{equation*}
\begin{equation*}
    \textup{P}^\theta(m_{\tau^\theta(\Pi^n)-1}=m-1,\, s_{\tau^\theta(\Pi^n)-1}=h \mid m_{\tau^\theta(\Pi^n)}=m) \rightarrow 1.
\end{equation*}

\textup{(ii)} The transition to an absorbing state is stochastic, and the corresponding transition probability vanishes:
\begin{equation*}
    f^n(i,s)(\{1,m\}) \rightarrow 0
\end{equation*}
for all $i \in \{2,..., m-1\}$ and $s\in S$.
\end{theorem}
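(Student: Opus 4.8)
The plan is to reduce both claims to a single fact: along any sequence achieving the receiver's optimal payoff, the spread inequality of \Cref{lem:spread} becomes asymptotically tight. Write $\alpha_n=\mu_m^H(\Pi^n)$ and $\beta_n=\mu_1^L(\Pi^n)$, so that $\mu_1^H(\Pi^n)=1-\alpha_n$ and $\mu_m^L(\Pi^n)=1-\beta_n$ since absorption is certain, and $U^R(\Pi^n)=p\alpha_n+(1-p)\beta_n$. By \Cref{lem:spread} each $(\alpha_n,\beta_n)$ is feasible for \eqref{eqn:receiver_opt_relax}, and by hypothesis together with \Cref{thm:supUR} the objective tends to the optimum $U^{RR}(p,\gamma,m)$. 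Since $\gamma^{m-2}>\kappa$, this optimum strictly exceeds $\max\{p,1-p\}$, so every maximizer of \eqref{eqn:receiver_opt_relax} is interior and, by \Cref{lem:auxiliary-optimization}, lies on the binding curve $\alpha\beta=\gamma^{m-2}(1-\alpha)(1-\beta)$. Compactness of the feasible set and continuity then force the near-optimal points $(\alpha_n,\beta_n)$ to converge to the maximizer set, whence
\[
\frac{\mu_m^H(\Pi^n)\,\mu_1^L(\Pi^n)}{\mu_1^H(\Pi^n)\,\mu_m^L(\Pi^n)}\longrightarrow\gamma^{m-2}.
\]
Because $m\ge4$ and all four absorption probabilities are eventually in $(0,1)$, the strict part of \Cref{lem:spread} shows the left-hand side stays $<\gamma^{m-2}$, so the \emph{defect} is strictly positive for every large $n$ and vanishes in the limit.

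Next I would re-open the proof of \Cref{lem:spread} and exhibit its bound as a product of local inequalities, each of which must then be asymptotically tight. With $\nu_i^\theta$ the expected number of visits to transient state $i$ before absorption under $\theta$, the flow identities $\mu_1^\theta=\sum_i\nu_i^\theta\sum_s\pi_\theta(s)f(i,s)(1)$ and $\mu_m^\theta=\sum_i\nu_i^\theta\sum_s\pi_\theta(s)f(i,s)(m)$ hold. Combining $\underline\ell\,\pi_L(s)\le\pi_H(s)\le\bar\ell\,\pi_L(s)$ with the ordering $w_2\le w_i\le w_{m-1}$, where $w_i=\nu_i^H/\nu_i^L$, yields
\[
\frac{\mu_m^H}{\mu_m^L}\le\bar\ell\,w_{m-1},\qquad\frac{\mu_1^L}{\mu_1^H}\le\frac{1}{\underline\ell\,w_2},
\]
so that $\frac{\mu_m^H\mu_1^L}{\mu_1^H\mu_m^L}\le\frac{\bar\ell}{\underline\ell}\cdot\frac{w_{m-1}}{w_2}=\gamma\cdot\frac{w_{m-1}}{w_2}$; telescoping the $m-3$ adjacent ratios $w_{i+1}/w_i$, each bounded by $\gamma$, caps $w_{m-1}/w_2$ at $\gamma^{m-3}$ and reproduces the bound $\gamma^{m-2}$.

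Part (i) is then read off from the equality cases of these local inequalities. For $\frac{\mu_1^L}{\mu_1^H}\le1/(\underline\ell\,w_2)$ to be asymptotically tight, the estimate $\pi_H(s)\ge\underline\ell\,\pi_L(s)$ must be tight wherever $f(i,s)(1)>0$---forcing the triggering signal to be $l$---and the estimate $\nu_i^H\ge w_2\nu_i^L$ must be tight wherever inflow to state $1$ occurs---forcing that inflow to originate at the extremal transient state $2$; any other last step opens a gap bounded below, contradicting the vanishing of the defect. Expressing the surviving contribution as a fraction of the total inflow $\mu_1^\theta$,
\[
\textup{P}^\theta\!\left(m_{\tau^\theta-1}=2,\ s_{\tau^\theta-1}=l\mid m_{\tau^\theta}=1\right)=\frac{\nu_2^\theta\,\pi_\theta(l)\,f^n(2,l)(1)}{\mu_1^\theta}\to1,
\]
with $\tau^\theta=\tau^\theta(\Pi^n)$, and the symmetric argument at the top absorbing state gives the companion limit, establishing (i).

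Finally, for part (ii) I would convert the strictness in \Cref{lem:spread} into the vanishing of every absorbing transition probability. The intuition is that any non-vanishing $f^n(i,s)(\{1,m\})$ is ``leakage''---absorption occurring before the full likelihood-ratio spread has accumulated---which keeps the attainable ratio bounded away from $\gamma^{m-2}$; the prototype is \Cref{fig:5-state-random-chain}, where the final-step factor $\frac{q+(1-q)\epsilon}{(1-q)+q\epsilon}$ attains its supremum $\frac{q}{1-q}$ only as $\epsilon\to0$. Concretely, I would establish a quantitative lower bound of the form $\gamma^{m-2}-\frac{\mu_m^H\mu_1^L}{\mu_1^H\mu_m^L}\ge c\,\max_{i,s}f^n(i,s)(\{1,m\})$ for some constant $c>0$ depending only on the primitives; since the defect tends to $0$ by the first paragraph, this forces $f^n(i,s)(\{1,m\})\to0$ for every transient $i$ and every $s$. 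I expect this estimate to be the main obstacle: the structural identification in the previous two paragraphs is essentially algebraic, but turning the \emph{qualitative} strict inequality of \Cref{lem:spread} into a \emph{uniform quantitative} defect bound in the transition probabilities---valid even at the states $i=2$ and $i=m-1$ adjacent to the absorbing states, where absorption does ultimately occur---requires careful control of how premature absorption, from interior states or via non-extreme signals, degrades the product of local factors.
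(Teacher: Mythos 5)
Your part (i) is essentially the paper's own argument: you establish that the cross-ratio $\mu_m^H\mu_1^L/(\mu_1^H\mu_m^L)$ converges to $\gamma^{m-2}$, decompose the bound of \Cref{lem:spread} into local factors, and force each factor to be asymptotically tight; this reproduces the necessity direction of the paper's \Cref{lem:3-conditions} (optimal mixing first, then optimal absorption). One compressed step deserves emphasis: for a last step from $i\neq 2$ to ``open a gap bounded below'' you need $w_i/w_2$ (with $w_i=\nu_i^H/\nu_i^L$) bounded away from $1$ uniformly in $n$, which is not a primitive fact, since the $\nu$'s move with $\Pi^n$; it holds only because tightness of the telescoped product forces every adjacent ratio $w_{i+1}/w_i\to\gamma$, which is exactly what the paper records in \Cref{rem:equal-likelihood-transient-states}. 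Granting that, part (i) is fine.

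Part (ii) contains a genuine gap, and it is not merely that you defer the key estimate as ``the main obstacle'': the master inequality you propose, $\gamma^{m-2}-\frac{\mu_m^H\mu_1^L}{\mu_1^H\mu_m^L}\geq c\,\max_{i,s}f^n(i,s)(\{1,m\})$, is false. Take the symmetric binary case and the parsimonious protocol whose interior states $2,\dots,m-1$ form the pure birth--death chain ($h$ moves up, $l$ moves down), with $f(2,l)(1)=\delta$ fixed, $f(m-1,h)(m)=\epsilon$, initial state $2$, and no other transitions into $\{1,m\}$. As $\epsilon\to 0$ the modified chain converges to the reflecting birth--death chain, so $w_{m-1}/w_2\to\gamma^{m-3}$; since the only inflow to $1$ is via $(2,l)$ and the only inflow to $m$ is via $(m-1,h)$, the factors $\mathbb{E}[\tau^\theta]$, $\delta$, and $\epsilon$ all cancel in the cross-ratio, which therefore converges to $\gamma\cdot\gamma^{m-3}=\gamma^{m-2}$. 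Hence the defect vanishes while $\max_{i,s}f(i,s)(\{1,m\})\geq\delta$ stays fixed, so no constant $c>0$ can work. What fails in this family is not the cross-ratio but the \emph{level} of absorption: $\mu_m^\theta\to 0$, the receiver's payoff tends to $1-p$, and the sequence is far from optimal. The lesson is that the defect---the only consequence of optimality your argument extracts---cannot control the transition probabilities; the vanishing of $f^n(2,l)(1)$ and $f^n(m-1,h)(m)$ is forced only by combining cross-ratio tightness with the optimal-bias condition $\mu_m^H(\Pi^n)\to\alpha^*\in(0,1)$ and, in the configurations where the offending transition feeds the ``correct'' absorbing state and the return state $i_0$ of the modified chain coincides with the state in question, with a flow-balance argument on that chain. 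This is precisely what the paper's six-case analysis in \Cref{lem:part_ii_thm} does; your route, as proposed, cannot be completed.
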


\begin{remark}
The result generalizes the qualitative feature of the 5-state protocol described in \Cref{sec:eg}. In addition to stopping information flow as a strategic response to manipulation, as established in \Cref{thm:parsimonious}, this result further reveals a form of indecision immediately before the decision point, as discussed in the introduction and \Cref{sec:eg}. Specifically, memory states 2 and $m-1$, defined by the lowest and highest likelihood ratios among the transient states, are special: they endogenously emerge as the penultimate points preceding the decision states 1 and $m$, whose labels are determined by the actions associated with the absorbing states in parsimonious protocols. Transitions to the decision states occur primarily from these two states, and these transitions are stochastic with very small probabilities. The rationale behind this kind of transition is to ensure that the irreversible stopping decision is made as accurately as possible. The fact that the decision maker reacts only to extreme signals, however, is unsurprising in light of \citet{hellman1970learning} and \citet{wilson2014bounded}. Finally, since the uniqueness of these optimal sequences of protocols cannot be established---for instance, one can introduce a coin toss across all transient states to decide whether to respond to information at all---this characterization is the best we can hope for.

\end{remark}

\begin{remark}
    Since the optimal transitions from the transient states to the absorbing states are stochastic, we obtain a form of polarization: conditional on the same sequence of signals, either absorbing state can be reached. Consequently, we should expect that different receivers following the same optimal protocol may commit to different opinions or decisions after observing the same news. They cannot reconcile precisely because of their incentive to avoid manipulation---or ``brainwashing'', so to speak. We also note that when $m=3$, there is only one transient state, so part (i) holds trivially. As shown in \Cref{sec:eg}, a deterministic optimal protocol that achieves the supremum payoff exists in this case, and hence part (ii) does not apply. In this case, opinion polarization does not occur.
\end{remark}

\begin{remark} If two transient states have the same likelihood ratio $\frac{\nu^H_i}{\nu^L_i}$, either one can be labeled as the higher state. However, for a sequence of protocols that attains the receiver’s supremum payoff, only finitely many terms in the sequence can involve two transient states with the same likelihood ratio; see \Cref{rem:equal-likelihood-transient-states} in \Cref{appendix: proof-behavior}, which contains the proof of \Cref{thm:behavioral_implication}. 
\end{remark}




\appendix

\allowdisplaybreaks

\begin{center}
    \Large{\textbf{Appendix: Omitted Proofs}}
\end{center}
\section{Proof of \Cref{lem:existence-stationary}}
This follows from a known result for finite-state Markov stopping problems (see, e.g., Chapter III in \citet{dynkin1969markov}). The proof is lengthy. We only sketch it below. Let $$V(i,\theta)=\sup_{\tau} \mathbb{E}[a(i_\tau)|i,\theta]$$ denote the sender's optimal payoff in memory state $i$, conditional on the state of nature $\theta$, when choosing any (possibly non-stationary) stopping time $\tau$. A standard argument shows that $V^*$ satisfies the following Bellman equation:
    \begin{equation}\label{eq:bellman}
V(i,\theta) 
=\max\left\{a(i),\sum_{s\in S,j\in M} \Pr(s|\theta)f(i,s)(j)V(j,\theta) \right\}.
\end{equation}
Indeed, it can be shown that $V$ is the pointwise lowest function satisfying (\ref{eq:bellman}).
   Now, define a stationary strategy $\sigma:M\times \Theta \rightarrow \{0,1\}$ as follows:
   \[
\sigma(i,\theta)
:=\begin{cases}
1,&\text{if }a(i)\geq \sum_{s\in S,j\in M} \Pr(s|\theta)f(i,s)(j)V(j,\theta),\\
0,&\text{otherwise}.
\end{cases}
\]
It can be shown that this strategy attains the optimal value $V.$

\section{Proof of \Cref{thm:parsimonious}}

\subsection{Proof of \Cref{prop:algorithm-to-simple}}\label{proof:general-to-simple}

\begin{lemma}\label{lem:recurrent-to-absorbing}
    Suppose a protocol \( \Pi \) has \( m \) memory states, including \( n \geq 0 \) absorbing states and a distinct recurrent communicating class. Then, there exists a protocol \( \Pi' \) with \( m \) memory states, where \( n+1 \) of them are absorbing states, such that \( U^R(\Pi) = U^R(\Pi') \).
\end{lemma}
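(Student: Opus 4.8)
The plan is to take the recurrent communicating class $C$ guaranteed by hypothesis, pick a sender-most-favored state $i^\ast\in\arg\max_{i\in C}a(i)$, and build $\Pi'=(f',g,a)$ from $\Pi$ by making $i^\ast$ absorbing: set $f'(i^\ast,s)(i^\ast)=1$ for all $s\in S$ and leave $f$ unchanged elsewhere. First I would record a structural fact used throughout. Because $\pi_\theta(s)>0$ for every $s$ and $\theta$, a transition $i\to j$ has positive probability under $\theta=H$ iff $f(i,s)(j)>0$ for some $s$ iff it has positive probability under $\theta=L$. Hence the directed reachability graph, and therefore the communicating classes, the recurrent classes, and the absorbing states, coincide for both states of nature; in particular $C$ is a well-defined closed irreducible class disjoint from the $n$ absorbing states. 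Since only $i^\ast$ is modified, every state of $C\setminus\{i^\ast\}$ keeps its original non-absorbing transitions, so $\Pi'$ has exactly the original $n$ absorbing states together with $i^\ast$, i.e.\ $n+1$ of them.

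Next I would pin down the sender's value on $C$. Let $V(\cdot,\theta)$ be the value function for $\Pi$ from Lemma~\ref{lem:existence-stationary}. Because $C$ is closed, the chain started in $C$ never leaves it and every action there is at most $a(i^\ast)$, so $V(i,\theta)\le a(i^\ast)$ for $i\in C$; because $C$ is finite and irreducible it is recurrent, so $i^\ast$ is hit in finite time almost surely from any $i\in C$, and stopping on the first visit yields $a(i^\ast)$. Thus $V(i,\theta)=a(i^\ast)$ for all $i\in C$ and both $\theta$. Using this, I would show $V'=V$ for the value function $V'$ of $\Pi'$. One inclusion is immediate: $V$ satisfies the Bellman equation~\eqref{eq:bellman} for $\Pi'$, since the equation is unchanged at every $i\ne i^\ast$ (where $f'(i,\cdot)=f(i,\cdot)$), and at $i^\ast$ the $\Pi'$-right-hand side is $\max\{a(i^\ast),V(i^\ast,\theta)\}=a(i^\ast)=V(i^\ast,\theta)$. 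By the ``pointwise lowest solution'' characterization in Lemma~\ref{lem:existence-stationary}, this gives $V'\le V$.

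For the reverse inequality I would couple the two protocols. The stationary pure best response $\sigma$ to $\Pi$ produced by Lemma~\ref{lem:existence-stationary} satisfies $\sigma(i^\ast,\theta)=1$ automatically, because $a(i^\ast)=V(i^\ast,\theta)$ dominates the continuation value there. Now run $\sigma$ in $\Pi'$: the game stops the first time $i^\ast$ is reached, and before that instant the transitions of $\Pi$ and $\Pi'$ agree (they differ only at $i^\ast$). Hence, from every initial state and each $\theta$, the joint law of the stopping time, the terminal memory state, and the induced action is identical under $(\Pi,\sigma)$ and $(\Pi',\sigma)$. In particular the sender's payoff from each state under $(\Pi',\sigma)$ equals $V(\cdot,\theta)$, so $V'\ge V$, giving $V'=V$ and $\sigma\in\br(\Pi')$. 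The same coupling yields $\bar a_\theta(\Pi,\sigma)=\bar a_\theta(\Pi',\sigma)$ for each $\theta$, whence $U^R(\Pi)=U^R(\Pi,\sigma)=U^R(\Pi',\sigma)=U^R(\Pi')$ by Lemma~\ref{lem:payoff-equivalence}.

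The main obstacle is the middle step: showing that absorbing $i^\ast$ leaves the sender's incentives undisturbed everywhere in the protocol. The two delicate points are (i) proving that the value of the entire recurrent class collapses to the single number $a(i^\ast)$, which is exactly what makes the modification at $i^\ast$ invisible to states outside $C$, and (ii) verifying that an optimal stopping rule actually stops at $i^\ast$, so the coupling between $(\Pi,\sigma)$ and $(\Pi',\sigma)$ is an exact distributional identity rather than a mere payoff equivalence. The remaining ingredients---the $\theta$-independence of the class structure from full support and the bookkeeping of absorbing states---are routine once the full-support observation is in place.
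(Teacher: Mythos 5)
Your proposal is correct and takes essentially the same route as the paper: pick the state $i^\ast$ with the highest $a(\cdot)$ in the recurrent class, make it absorbing, and argue that the sender's best response (continue until $i^\ast$, then stop) and both players' payoffs are unchanged. The only difference is one of detail---you verify, via the value-function and Bellman-equation machinery of \Cref{lem:existence-stationary} together with a coupling argument, the step that the paper simply asserts, namely that $\sigma$ remains a best response to $\Pi'$.
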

\begin{proof}
    Upon reaching the recurrent communicating class of $\Pi=(f,g,a)$, a best response of the sender, $\sigma\in \br(\Pi)$, is to continue generating signals until reaching a state \( i \) with the highest \( a(i) \) in this class. By converting \( i \) in $\Pi$ into an absorbing state and keeping everything else unchanged, we obtain a new automaton $\Pi'=(f',g,a),$ where 
    \begin{equation}\label{eqn:make-absorbing}
    f'(j, s)(j') =
    \begin{cases} 
        1, & \text{if } j'=j=i,  \\
        0,  & \text{if } j'\neq j=i,  \\
        f(j, s)(j'), & \text{otherwise.}
    \end{cases}
\end{equation}
Now $\sigma\in \br(\Pi').$ Furthermore, the total probability that a high action is induced from this protocol and the sender's response remains unchanged \( \bar{a}_{\theta} (\Pi',\sigma)=\bar{a}_{\theta} (\Pi,\sigma) \), so \( U^R=(\Pi')=U^R(\Pi) \).
\end{proof}

Since a Markov process without an absorbing state must contain a recurrent communicating class, \Cref{lem:recurrent-to-absorbing} implies the following:
\begin{corollary}\label{cor:exist-absorbing}
    If \( \Pi \) is a protocol with \( m \) memory states and no absorbing states, then there exists a protocol \( \Pi' \) with \( m \) memory states, including one absorbing state, such that \( U^R(\Pi) = U^R(\Pi') \).
\end{corollary}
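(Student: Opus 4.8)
The plan is to derive the statement immediately from \Cref{lem:recurrent-to-absorbing} by checking that a protocol with no absorbing states is exactly the $n=0$ instance of that lemma's hypothesis. So the real content is verifying that such a $\Pi$ always contains a recurrent communicating class to feed into the lemma.

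First I would recall that $\Pi$ induces, for each $\theta\in\Theta$, a finite-state Markov chain on $M$ with transition kernel $P_\theta(i,j)=\sum_{s\in S}\pi_\theta(s)\,f(i,s)(j)$. Because $\pi_\theta$ has full support, the reachability graph---the set of pairs $(i,j)$ with $P_\theta(i,j)>0$---does not depend on $\theta$, since $P_\theta(i,j)>0$ holds precisely when $f(i,s)(j)>0$ for some $s\in S$. Consequently the communicating-class decomposition, and in particular the family of recurrent (closed) communicating classes, coincides under both states of nature, so ``the recurrent communicating class of $\Pi$'' is unambiguous and consistent with the paper's convention of attaching Markov terminology directly to $\Pi$.

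Next I would invoke the standard fact that every finite-state Markov chain possesses at least one recurrent communicating class, since the state space cannot consist entirely of transient states. Because $\Pi$ has no absorbing state, no $i$ satisfies $f(i,s)(i)=1$ for all $s$; together with full support this forces $P_\theta(i,i)<1$ for every $i$ and $\theta$, so no singleton recurrent class can appear. Hence the guaranteed recurrent communicating class contains none of the (zero) absorbing states, i.e., it is a \emph{distinct} recurrent communicating class in the sense required by \Cref{lem:recurrent-to-absorbing} with $n=0$.

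Finally I would apply \Cref{lem:recurrent-to-absorbing} to this class with $n=0$, obtaining a protocol $\Pi'$ on the same $m$ memory states with exactly $0+1=1$ absorbing state and $U^R(\Pi)=U^R(\Pi')$, which is the claim. I do not expect a genuine obstacle: the only point requiring care is the $\theta$-independence of the recurrent-class structure, which is what makes the recurrent class well-defined and, more importantly, guarantees that the single conversion carried out by the lemma is simultaneously valid for both $\theta=H$ and $\theta=L$ so that payoffs are preserved.
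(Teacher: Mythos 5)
Your proof is correct and follows essentially the same route as the paper: the paper's justification is precisely that a finite Markov process with no absorbing state must contain a recurrent communicating class, which is then fed into \Cref{lem:recurrent-to-absorbing} with $n=0$. Your additional observations (the $\theta$-independence of the communicating-class structure under full support, and the impossibility of singleton recurrent classes) are valid points of care that the paper leaves implicit, but they do not change the argument.
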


\begin{lemma}\label{lem:1-to-2-absorbing}
    If \( \Pi \) is a protocol with \( m \) memory states, including \(1\) absorbing state, then there exists a protocol \( \Pi' \) with \( m \) states, including \(2\) absorbing states, such that \( U^R(\Pi) \leq U^R(\Pi') \).
\end{lemma}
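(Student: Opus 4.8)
The plan is to condition on the structure of the $m-1$ non-absorbing states and reduce to \Cref{lem:recurrent-to-absorbing} whenever possible. Let $i^*$ denote the unique absorbing state. Since the singleton $\{i^*\}$ is itself a recurrent communicating class, the remaining states either contain a second, distinct recurrent communicating class, or else are all transient. In the first case there is nothing new to do: \Cref{lem:recurrent-to-absorbing}, applied with $n=1$, turns a most-favorable state of that class into a second absorbing state while leaving $U^R$ unchanged, producing the desired $\Pi'$ with $m$ states and two absorbing states. So I would assume henceforth that every state other than $i^*$ is transient.

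Let $\bar a := \max_{j \neq i^*} a(j)$ be the highest action probability among the transient states, attained at some $j^*$, and split into two subcases. If $\bar a \le a(i^*)$, then $a(i^*)$ is the global maximum of $a$, so by the Bellman equation \eqref{eq:bellman} the sender's value satisfies $V(i,\theta)\le a(i^*)$ for every $i$; since the chain is absorbed at $i^*$ with probability one from any transient state, waiting until absorption guarantees $a(i^*)$ and is optimal. Hence under any best response the action $H$ is taken with probability $a(i^*)$ independently of $\theta$, giving the no-learning payoff $U^R(\Pi)=p\,a(i^*)+(1-p)\bigl(1-a(i^*)\bigr)$. This is reproduced exactly by the two-state protocol with one absorbing state playing $L$, one playing $H$, and initial distribution placing weight $a(i^*)$ on the latter; padding with $m-2$ unreached transient states yields a $\Pi'$ with $m$ states, two absorbing states, and $U^R(\Pi')=U^R(\Pi)$.

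The substantive subcase is $\bar a > a(i^*)$. Here I would make $j^*$ absorbing, i.e.\ define $\Pi'=(f',g,a)$ via \eqref{eqn:make-absorbing} applied at $j^*$, so that $\Pi'$ has the two absorbing states $i^*$ and $j^*$. The key observation is that $\bar a=a(j^*)$ is now the global maximum of $a$, whence $V(i,\theta)\le a(j^*)$ for all $i$ while the stopping option gives $V(j^*,\theta)\ge a(j^*)$, forcing $V(j^*,\theta)=a(j^*)$. Thus in $\Pi$ the sender already stops at $j^*$ in an optimal stationary pure best response (\Cref{lem:existence-stationary}). Since the only transition modified in passing to $\Pi'$ is the one leaving $j^*$, and the sender's continuation value through $j^*$ was already exactly $a(j^*)$—the value now delivered by absorption—the function $V$ still solves \eqref{eq:bellman} for $\Pi'$ and remains its value function. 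Consequently the same stationary strategy is a best response to $\Pi'$ and induces the identical distribution over terminal memory states: every path that previously stopped at $j^*$ is now absorbed at $j^*$, and the stopping decisions at all other states are unchanged because their continuation values are unchanged. Therefore $U^R(\Pi')=U^R(\Pi)$.

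I expect the last subcase to be the main obstacle: one must verify that converting a transient state into an absorbing one does not perturb the sender's incentives anywhere else. The argument rests on $j^*$ being globally action-maximal, so the sender was already content to stop there; all continuation values are then preserved, and \Cref{lem:payoff-equivalence} pins down the receiver's payoff regardless of which best response is selected. In each of the three cases $\Pi'$ has exactly $m$ states with two absorbing states and $U^R(\Pi)\le U^R(\Pi')$—in fact with equality—which completes the proof.
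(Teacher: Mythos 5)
Your proof is correct and follows essentially the same route as the paper's: the same case split (a distinct recurrent communicating class handled by \Cref{lem:recurrent-to-absorbing}; the no-learning case where the absorbing state is action-maximal; and the case where a transient state has the strictly highest action probability, which is then converted into a second absorbing state without disturbing the sender's best response). The only differences are cosmetic---you verify the incentive-preservation claims in more detail via the Bellman equation, and in the no-learning case you replicate the payoff exactly through an initial randomization over two absorbing states, whereas the paper bounds it by $\max\{p,1-p\}$ and uses a trivial protocol achieving that bound.
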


\begin{proof}
Call the absorbing state in \( \Pi\) state \( 1 \). Suppose there is a distinct recurrent communicating class. Since this class cannot be another absorbing state, it must contain more than one memory state. The result then follows from \Cref{lem:recurrent-to-absorbing}.

If there are no distinct recurrent communicating classes, every memory state other than \( 1 \) is transient and eventually reaches state \( 1 \) with probability \( 1 \). There are two cases to consider.

{First}, if for every state \( i \neq 1 \), \( a(i) \leq a(1) \), then one of the sender's best responses is to keep generating signals until the memory state transitions to \( 1 \). In this case, the receiver's payoff is
\begin{equation*}
    U^R(\Pi) = a(1)p + (1 - a(1))(1 - p) \leq \max\{p, 1 - p\}.
\end{equation*}
However, a parsimonious protocol \( \Pi' \) with two absorbing states, where the initial distribution assigns probability \( 1 \) to one of the two absorbing states, achieves this payoff.

{Second}, if there exists a memory state \( i \) such that \( a(i) > a(1) \), consider the memory state with the highest \( a \), which we still denote as \( i \). If state \( i \) is ever reached, the receiver has a best response \( \sigma \) that stops at \( i \). As in the proof of \Cref{lem:recurrent-to-absorbing}, defining a new protocol $\Pi'$ from $\Pi$ by making \( i \) a new absorbing state, \( \sigma \) remains the sender's best response to $\Pi'$, and the receiver's payoff remains unchanged.
\end{proof}

\begin{lemma}\label{lem:reduce-absorbing-memory}
If a protocol \( \Pi \) on $M$ has \( m \) memory states including \( n > 2 \) absorbing states, and \( \sigma   \) is a best response to \( \Pi \), then there exists a protocol \( \Pi' \)  on $M' \subset M$ with \( m-1 \) memory states including \( n-1 \) absorbing states such that $U^R(\Pi) = U^R(\Pi')$ and the restriction of \( \sigma \) to \( M' \) is a best response to \( \Pi' \).
\end{lemma}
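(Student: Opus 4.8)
The plan is to show that any \emph{intermediate} absorbing state is payoff-equivalent---simultaneously for both players---to a suitable lottery over the two \emph{extreme} absorbing states, and then to delete it by redirecting into that lottery every unit of probability mass that was bound for it. The guiding observation is that both players evaluate a terminal memory state $i$ only through the single number $a(i)$ (the probability that $H$ is played there): the sender's continuation value from being absorbed at $i$ is $a(i)$, and the receiver's payoff enters only through the aggregate absorption probabilities, by \Cref{lem:payoff-equivalence}. Concretely, since $n>2$, I would rank the absorbing states by their $a$-values, let $\underline i$ and $\bar i$ attain $a_{\min}$ and $a_{\max}$ over absorbing states (taken distinct; if $a_{\min}=a_{\max}$, any two distinct absorbing states), and pick a third absorbing state $k\notin\{\underline i,\bar i\}$. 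As $a_{\min}\le a(k)\le a_{\max}$, choose $\lambda\in[0,1]$ with $\lambda a_{\min}+(1-\lambda)a_{\max}=a(k)$ (with $\lambda=1$ in the degenerate case). Define $\Pi'=(f',g',a')$ on $M'=M\setminus\{k\}$ by setting $f'(i,s)(\underline i)=f(i,s)(\underline i)+\lambda f(i,s)(k)$, $f'(i,s)(\bar i)=f(i,s)(\bar i)+(1-\lambda)f(i,s)(k)$, leaving all other transitions and the restriction $a'$ of $a$ unchanged, and redirecting $g(k)$ analogously. Because mass is only redistributed \emph{into} already-absorbing states, every transient state of $\Pi$ stays transient, no new recurrent class appears, $\Pi'$ has $m-1$ states with $n-1$ absorbing states, and it still ends with probability one.

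For the best-response claim I would argue through the Bellman characterization from the proof of \Cref{lem:existence-stationary}. Writing $V,V'$ for the sender's value functions under $\Pi,\Pi'$, the key identity is that since $\underline i,\bar i,k$ are all absorbing, $V(k,\theta)=a(k)=\lambda V(\underline i,\theta)+(1-\lambda)V(\bar i,\theta)$. Substituting this into the continuation term of \eqref{eq:bellman}, the contribution $f(i,s)(k)V(k,\theta)$ is reproduced exactly by the redirected mass $\lambda f(i,s)(k)V(\underline i,\theta)+(1-\lambda)f(i,s)(k)V(\bar i,\theta)$, so $V$ restricted to $M'$ solves the Bellman equation for $\Pi'$. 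Pinning down $V'=V|_{M'}$ via the lowest-solution characterization (equivalently, noting $V|_{M'}$ is achieved in $\Pi'$ by the same stopping rule) shows the continuation values agree at every surviving state. Hence the stopping decisions prescribed by $\sigma$ remain optimal in $\Pi'$; the choice at the deleted state $k$ is immaterial, so $\sigma|_{M'}\in\br(\Pi')$.

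For the payoff claim, by \Cref{lem:payoff-equivalence} it suffices to match the total probability $\bar a_\theta$ of action $H$. For each $\theta$, let $q_i^\theta$ be the probability that play under $(\Pi,\sigma)$ terminates at state $i$. Since $\Pi$ and $\Pi'$ agree on all transitions among transient states and on the sender's stopping rule, the only difference is the terminal disposition of the mass $q_k^\theta$ reaching $k$, which in $\Pi'$ is split to $\underline i$ (weight $\lambda$) and $\bar i$ (weight $1-\lambda$). Thus $\bar a_\theta(\Pi',\sigma|_{M'})=\sum_{i\neq k}q_i^\theta a(i)+q_k^\theta\bigl(\lambda a_{\min}+(1-\lambda)a_{\max}\bigr)=\sum_i q_i^\theta a(i)=\bar a_\theta(\Pi,\sigma)$ for each $\theta$, giving $U^R(\Pi)=U^R(\Pi')$.

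The hard part will be the best-response step: confirming that redirecting flow genuinely leaves the sender's incentives undisturbed at every surviving state. This rests entirely on the value-matching identity $V(k,\theta)=\lambda V(\underline i,\theta)+(1-\lambda)V(\bar i,\theta)$ together with identifying $V'=V|_{M'}$ through the (lowest-solution) characterization of \eqref{eq:bellman}, and on verifying that the ``ends with probability one'' hypothesis required by \Cref{lem:payoff-equivalence} is inherited by $\Pi'$. The existence of a non-extreme absorbing state $k$ (guaranteed by $n>2$) and the degenerate case $a_{\min}=a_{\max}$ are minor points to record but are routine.
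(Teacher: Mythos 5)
Your proposal is correct and follows essentially the same route as the paper: delete an intermediate absorbing state $k$ and redirect all incoming transition mass and the initial weight $g(k)$ to the two extreme absorbing states in the unique proportions that preserve the terminal value $a(k)$, which keeps the sender's continuation values (hence best responses, verified via the Bellman equation) and both players' payoffs unchanged. One incidental point in your favor: your weight $\lambda$ on the low state solves $\lambda a_{\min}+(1-\lambda)a_{\max}=a(k)$, i.e.\ $\lambda=\frac{a_{\max}-a(k)}{a_{\max}-a_{\min}}$, which is the value-preserving split, whereas the paper's displayed $\rho=\frac{a(k)-a_{\min}}{a_{\max}-a_{\min}}$ assigned to state $1$ has the two proportions transposed (evidently a typo), so your version is the one under which the value-matching identity actually holds.
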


\begin{proof}
    For \( \Pi = (f, g, a) \), relabel two absorbing states with the lowest and highest probabilities \( a \) as \( 1 \) and \( m \), respectively, and denote these probabilities by \( a_{\min} \) and \( a_{\max} \). For an absorbing state \( k \neq 1, m \), we eliminate it from \(\Pi\) as follows. Every transition from \( j \) to \( k \) upon signal \( s \) with probability \( f(j, s)(k) \) in \(\Pi\) is reassigned to states \( 1 \) and \( m \), with proportions 
\begin{equation*}
    \rho = \frac{a(k) - a_{\min}}{a_{\max} - a_{\min}} \quad \text{and} \quad 1 - \rho = \frac{a_{\max} - a(k)}{a_{\max} - a_{\min}},
\end{equation*}
respectively (if \( a_{\min} = a_{\max} \), we set \( \rho = \frac{1}{2} \)).

We also modify the initial distribution \( g \) accordingly: the probability assigned to \( k \) is reassigned to \( 1 \) and \( m \) with proportions \( \rho \) and \( 1 - \rho \), respectively.

This modification yields a new protocol \( \Pi' = (f', g', a') \) on \( M' = M \setminus \{k\} \) with \( m - 1 \) memory states, where:
\begin{equation*}
\begin{alignedat}{2}
    f'(j, s)(i) &= 
    \begin{cases} 
        f(j, s)(i) + \rho f(j, s)(k),       &\quad \text{if } i = 1, \\
        f(j, s)(i) + (1 - \rho) f(j, s)(k), &\quad \text{if } i = m, \\
        f(j, s)(i),                         &\quad \text{otherwise},
    \end{cases} \\
    g'(i) &= 
    \begin{cases} 
        g(i) + \rho g(k),       &\quad \text{if } i = 1, \\
        g(i) + (1 - \rho) g(k), &\quad \text{if } i = m, \\
        g(i),                   &\quad \text{otherwise}.
    \end{cases}
\end{alignedat}
\end{equation*}
Additionally, \( a' = a \) on \( M' \).

A best response \( \sigma' \) to \( \Pi' \) can be derived from a best response \( \sigma \) to \( \Pi \) by eliminating the response in memory state \( k \). The modification preserves both players' payoffs.
\end{proof}

\begin{proof}[\textbf{Proof of \Cref{prop:algorithm-to-simple}}]

    Starting with a protocol $\Pi^0=\Pi$ with $m$ memory states, we iteratively apply the following algorithm to obtain a sequence $\{\Pi^n\}$:
     \begin{enumerate}
         \item If $\Pi^n$ has no absorbing state, apply \Cref{cor:exist-absorbing} to obtain an automaton $\Pi^{n+1}$ with the same number of memory states and at least one absorbing state.
         \item If $\Pi^n$ has exactly one absorbing state, apply \Cref{lem:1-to-2-absorbing} to obtain a protocol $\Pi^{n+1}$ with the same number of memory states and two absorbing states.
         \item If $\Pi^n$ has two absorbing states and a distinct recurrent communicating class, apply \Cref{lem:recurrent-to-absorbing} to obtain a protocol $\Pi^{n+1}$ with the same number of memory states and three absorbing states. 
         \item If $\Pi^{n}$ has more than two absorbing states, apply \Cref{lem:reduce-absorbing-memory} to obtain a protocol $\Pi^{n+1}$ with fewer memory states and fewer absorbing states. 
    \end{enumerate}

    Each time Step 4 is applied, the total number of memory states decreases by 1. Thus, the process terminates in a finite number of steps, resulting in a simple protocol.
\end{proof}

\subsection{Proof of \Cref{prop:simple-to-parsimonious}}
\begin{lemma}\label{lem:not-stopping-at-transient}
     Let $\Pi$ be any simple protocol. Then there exists a simple protocol $\Pi'$, defined on a subset of the memory states of $\Pi$, such that $U^R(\Pi) = U^R(\Pi')$, and for which there exists a strategy $\sigma \in \br(\Pi')$ such that no transient state $i$ is one where $\sigma$ stops in both states of nature.
\end{lemma}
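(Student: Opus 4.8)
The plan is to start from a stationary pure best response $\sigma \in \br(\Pi)$, whose existence is guaranteed by \Cref{lem:existence-stationary}, and to isolate the set $T$ of transient states at which the sender stops under \emph{both} states of nature, i.e. $\sigma(i,H)=\sigma(i,L)=1$. If $T=\varnothing$ there is nothing to prove and I would take $\Pi'=\Pi$ with the same $\sigma$. Otherwise, the key observation is that such states already behave like terminal states under $\sigma$, so I may freely convert them into genuine absorbing states and then fold the resulting ``extra'' absorbing states back into the two extremes by invoking \Cref{lem:reduce-absorbing-memory}.

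First I would define $\tilde\Pi$ from $\Pi=(f,g,a)$ by setting $\tilde f(i,s)(i)=1$ for every $i\in T$ and $s\in S$, leaving $f$, $g$, and $a$ otherwise unchanged. Since $\sigma$ stops at each $i\in T$ in both states of nature, the value of reaching such a state is exactly $a(i)$ under both $\Pi$ and $\tilde\Pi$; substituting this into the Bellman equation \eqref{eq:bellman} at every remaining state shows that the sender's value function is unchanged and that $\sigma$ remains a best response to $\tilde\Pi$. Because the realized play under $(\Pi,\sigma)$ and $(\tilde\Pi,\sigma)$ is identical---the sender halts upon reaching $T$ either way---the induced absorption probabilities, and hence $U^R$, are preserved.

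Next, $\tilde\Pi$ has $2+|T|$ absorbing states (the two original ones together with $T$), while, by the very definition of $T$, every \emph{remaining} transient state is one at which $\sigma$ continues in at least one state of nature. I would then apply \Cref{lem:reduce-absorbing-memory} exactly $|T|$ times, each application deleting one non-extreme absorbing state and one memory state while preserving $U^R$ and producing a best response obtained from $\sigma$ by dropping its prescription at the deleted state. That reduction rewires every transition into a deleted absorbing state $k$ into a mixture of the two extreme absorbing states whose expected action value equals $a(k)$, so it leaves the sender's continuation values---and therefore her best-response behavior at every transient state---untouched. The outcome is a simple protocol $\Pi'$ on a subset of $M$ with two absorbing states, satisfying $U^R(\Pi')=U^R(\Pi)$, together with an inherited best response $\sigma'$. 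I would then close the argument by noting that every transient state of $\Pi'$ was transient in $\Pi$ and lies outside $T$, so under $\sigma'$ the sender continues there in at least one state of nature; hence no transient state is a ``stop-in-both'' state.

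I expect the main obstacle to be the first transformation: showing cleanly that promoting the states of $T$ to absorbing states perturbs neither the sender's best response elsewhere nor either player's payoff. This rests on the Bellman characterization in \Cref{lem:existence-stationary}---that $V$ is the pointwise minimal solution of \eqref{eq:bellman}---which guarantees that fixing the continuation value of each $i\in T$ at its value $a(i)$ under $\sigma$ does not alter the optimal stopping decision at any other state. The repeated use of \Cref{lem:reduce-absorbing-memory} is then essentially bookkeeping, since its payoff- and incentive-neutrality is precisely what that lemma supplies.
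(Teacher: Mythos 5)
Your proof is correct and follows essentially the same route as the paper's: convert the stop-in-both transient states into absorbing states (preserving the sender's best response and both players' payoffs, via the construction in \Cref{eqn:make-absorbing}) and then eliminate the surplus absorbing states by repeated use of \Cref{lem:reduce-absorbing-memory}. The only difference is organizational—you absorb all such states in one batch and justify it with the Bellman characterization from \Cref{lem:existence-stationary}, whereas the paper converts one state per round and terminates because the memory-state count strictly decreases—but the decomposition and key lemmas are identical.
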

\begin{proof}
   Suppose \( \Pi \) does not satisfy this property. Then, there exists a transient state \( i \) such that any \( \sigma \in \text{br}(\Pi) \) stops in \( i \) in both states of nature. As in \Cref{eqn:make-absorbing} in the proof of \Cref{lem:recurrent-to-absorbing}, we define \( \Pi'' \) by making \( i \) a new absorbing state in \( \Pi \), removing transitions out of \( i \) while leaving the rest of \(\Pi\) unchanged.  By construction, \( \sigma \in \br(\Pi'') \) and \( U^R(\Pi) = U^R(\Pi'') \). Using \Cref{lem:reduce-absorbing-memory}, we can remove one of the three absorbing memory states from \( \Pi'' \) to obtain \( \Pi' \), where the restriction of \( \sigma \) to the memory states of \( \Pi' \) is a best response to \( \Pi' \), and \( U^R(\Pi') = U^R(\Pi'') = U^R(\Pi) \). The desired result follows from iterating this argument.
\end{proof}

\Cref{lem:not-stopping-at-transient} does not reduce simple protocols that have a transient memory state where the sender optimally stops under at most one of the two states of nature.

\begin{remark}\label{notation-1-m}
Without loss of generality, assume that for any simple protocol defined on a \textit{subset} of $\{1, \ldots, m\}$, the two absorbing memory states are $1$ and $m$, with \( a(1) \leq a(m) \). 
\end{remark}

\begin{lemma}\label{lem:transient-state-action}
    Let $\Pi = (f, g, a)$ be a simple protocol, and let $\sigma \in \br(\Pi)$ be a best response. For any transient state $i$ where $\sigma$ does not stop under either state of nature, define a modified action rule $a'$ by setting $a'(i) = a(1)$ and $a'(j) = a(j)$ for all other $j$. Then $\sigma$ remains a best response to the modified protocol $\Pi' = (f, g, a')$, and the receiver’s payoff satisfies $U^R(\Pi) = U^R(\Pi')$.
\end{lemma}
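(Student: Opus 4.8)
The plan is to argue that modifying the action rule at a single transient state $i$---where the sender optimally continues under both states of nature---does not disturb the sender's incentives and leaves the receiver's payoff unchanged. The crucial observation is that the receiver's action at a transient state only matters for payoffs when the game actually ends there, and the sender's best response governs whether that happens.

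First I would invoke \Cref{lem:existence-stationary} to work with stationary strategies, and record that $\sigma$ continues (does not stop) at state $i$ under both $\theta = H$ and $\theta = L$. Since $i$ is transient and $\sigma$ never stops there, the game never terminates at $i$ under the profile $(\Pi, \sigma)$ for either state of nature. Consequently the value $a(i)$ is never realized in any terminal node along the play path induced by $(\Pi,\sigma)$, so changing $a(i)$ to $a'(i)=a(1)$ has no effect on the receiver's realized payoff: the total absorption probabilities $\bar{a}_\theta(\Pi,\sigma)$ are determined entirely by $f$, $g$, and the action values at the states where $\sigma$ actually stops (the two absorbing states), none of which are altered. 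This gives $U^R(\Pi',\sigma) = U^R(\Pi,\sigma)$, once I show $\sigma \in \br(\Pi')$.

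The main step is to verify that $\sigma$ remains a best response to $\Pi'$. Here I would use the Bellman characterization from the proof of \Cref{lem:existence-stationary}: the sender's value function $V(\cdot,\theta)$ is the pointwise lowest solution of the Bellman equation \eqref{eq:bellman}. Lowering the action value at $i$ from $a(i)$ to $a(1)\le a(i)$ weakly decreases the stopping payoff $a(i)$ at state $i$ while leaving the transition structure $f$ untouched. Since at $i$ the sender strictly (weakly) prefers to continue under the original protocol---i.e.\ the continuation value already weakly dominates $a(i)$---lowering $a(i)$ only reinforces this preference, so continuing remains optimal at $i$. At every other state $j\ne i$ the action value is unchanged, and I would argue the continuation values $V(j,\theta)$ are themselves weakly lowered (the only channel through which $V$ could change is transitions that eventually pass through $i$, and these can only decrease the value), so the sender's optimal action at each such state is preserved. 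The cleanest way to conclude is to check directly that the same stationary $\sigma$ solves the Bellman equation for $\Pi'$, i.e.\ that it attains the value $V'$ for the modified problem.

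The anticipated obstacle is the comparison of value functions across the two protocols: I must ensure that decreasing $a(i)$ does not flip the sender's preferred action at some \emph{other} state $j$ where $\sigma$ previously stopped, which could break $\sigma \in \br(\Pi')$. The resolution is that lowering a stopping payoff at $i$ can only weakly decrease the continuation value at every state, so a state where stopping was optimal (continuation value $\le$ stopping value) retains that property; and a state where continuing was optimal continues to have continuation value that, even after the decrease, must be rechecked against its unchanged stopping value. Because $\sigma$ does not stop at $i$ and $a'(i)=a(1)$ is the lowest action value, the monotonicity is one-directional and consistent, so no incentive is reversed. Establishing this monotonicity of $V$ under a pointwise decrease of the stopping payoff---a standard comparative-statics fact for optimal stopping---is the technical heart, after which $U^R(\Pi)=U^R(\Pi')$ follows immediately from the payoff decomposition $U^R(\Pi,\sigma)=p\bar{a}_H+(1-p)(1-\bar{a}_L)$ used in \Cref{lem:payoff-equivalence}.
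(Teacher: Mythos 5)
Your proposal has a genuine gap: it assumes throughout that the modification \emph{lowers} the action at $i$, i.e., that $a(1)\le a(i)$ (you write ``lowering the action value at $i$ from $a(i)$ to $a(1)\le a(i)$'' and later that ``$a'(i)=a(1)$ is the lowest action value''). Nothing in the hypotheses guarantees this. The lemma applies to an arbitrary simple protocol, and at a transient state where the sender never stops, the action $a(i)$ is unrestricted; indeed $a(i)<a(1)$ is perfectly consistent with the sender's behavior---she continues at $i$ precisely because stopping there is so unattractive. The bound $a(i)\in[a(1),a(m)]$ is established in the paper only later (\Cref{lem:bound-of-stopping-states}), and only for states in $M_H\triangle M_L$, not for never-stop states. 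When $a(i)<a(1)$, your comparative-statics argument runs in the wrong direction: the stopping payoff at $i$ \emph{increases}, so the sender's value function weakly increases, and now it is the states where $\sigma$ stops (continuation value possibly raised against an unchanged stopping value) whose incentives are threatened---exactly the failure mode your monotonicity argument was meant to exclude, but now outside its reach.

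The repair---and the paper's actual argument---does not use monotonicity at all; it uses the structure of a simple protocol. Under $\Pi'$, stopping at $i$ yields exactly $a'(i)=a(1)$, while continuing from $i$ until absorption yields a lottery over $a(1)$ and $a(m)\ge a(1)$, hence at least $a(1)$. So no strategy that stops at $i$ can do strictly better under $\Pi'$ than one that continues there until absorption, and any strategy that never stops at $i$ has identical payoff under $\Pi$ and $\Pi'$. Since $\sigma$ is of the latter kind and is optimal under $\Pi$, it remains optimal under $\Pi'$; this one observation covers both the increase and the decrease case in a single stroke, and the sender's value function is in fact unchanged. Your payoff-invariance step ($\bar{a}_\theta$ unchanged because play under $(\Pi,\sigma)$ never terminates at $i$) is correct and matches the paper. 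Note also that even in the decrease case your conclusion ``no incentive is reversed'' is asserted rather than proved; the clean way to finish there is to combine the fact you already recorded---$\sigma$'s own payoff is unchanged---with $V'\le V$, which forces $V'=V$ and re-certifies $\sigma$ as a best response, but that assembly becomes moot once the two-sided case is handled as above.
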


\begin{proof}
    Consider the sender’s strategy $\sigma$ under $\Pi'$. Since $\sigma$ does not stop in memory state $i$ under either state of nature, the change from $a$ to $a'$ does not affect the optimality of $\sigma$ in any memory state $j \neq i$. Moreover, since $a'(i) = a(1) \leq a(m) = a'(m)$, continuing to generate signals until reaching an absorbing state is at least as good as stopping in memory state $i$. Therefore, $\sigma$ remains a best response to $\Pi'$. Under both $\Pi$ and $\Pi'$, the total probability of taking action $H$,  $\bar{a}_\theta$, is unchanged, so the receiver’s payoff $U^R$ is unchanged as well.
\end{proof}

By \Cref{lem:not-stopping-at-transient} and \Cref{lem:transient-state-action}, we shall focus on simple protocols and sender's best responses that satisfy the following property:

\begin{definition}\label{def:simple-profile}
    We say that the profile \( (\Pi, \sigma) \) is a \textbf{simple profile} if \( \Pi \) is a simple protocol and \( \sigma \in \br(\Pi) \) satisfies the following conditions:
    
    \textup{(i)} There is no transient state \( i \) of \( \Pi \) where \( \sigma \) stops under both states of nature.

    \textup{(ii)} For any transient state \( i \) in which \( \sigma \) does not stop under either state of nature, we have \( a(i) = a(1) \). 

    Given a simple profile $(\Pi, \sigma)$, let $M_{\theta}$ denote the set of memory states in which $\sigma$ stops, conditional on the state of nature being $\theta$. We refer to $M_{\theta}$ as the \textbf{stopping set} for  $\theta$.
\end{definition}

\begin{remark}
   By \Cref{notation-1-m} and \Cref{def:simple-profile}, we have that for all simple profiles, $M_H \cap M_L = \{1, m\}$, and $a(i) = a(1)$ for all $i \in M \setminus (M_H \cup M_L)$.
\end{remark}

\begin{lemma}\label{lem:bound-of-stopping-states}
    For any simple protocol $\Pi$, there exists a simple profile $(\Pi', \sigma')$ such that $a'(i) \in [a'(1), a'(m)]$ for all $i \in M'_H \triangle M'_L$, and $U^R(\Pi) \leq U^R(\Pi')$.
\end{lemma}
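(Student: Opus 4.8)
The plan is to show that the only inequality with real content is the upper bound $a(i)\le a(m)$, and to establish it by a single structural move: rather than clamping the offending action values downward, I raise the top absorbing action value up to the global maximum. First I would invoke \Cref{lem:not-stopping-at-transient} and \Cref{lem:transient-state-action} to replace $\Pi$ by a simple profile $(\tilde\Pi,\tilde\sigma)$ with $\tilde\Pi=(f,g,a)$, absorbing states $1,m$, $a(1)\le a(m)$, and $U^R(\Pi)=U^R(\tilde\Pi)$. By \Cref{def:simple-profile}, every state outside $M_H\triangle M_L$ already satisfies the bound (the absorbing states by definition, the never-stop transient states because condition (ii) forces $a(i)=a(1)$), so only the asymmetric states $i\in M_H\triangle M_L$ need to be controlled.

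For the lower bound I would argue that $a(i)\ge a(1)$ holds automatically in any simple profile. Such an $i$ is a stopping state under exactly one state of nature $\theta$, so optimality of stopping and the Bellman equation \eqref{eq:bellman} give $a(i)=V(i,\theta)$. Since ``continue forever'' is a feasible continuation whose payoff is a convex combination of $a(1)$ and $a(m)$, hence at least $a(1)$, we get $a(i)=V(i,\theta)\ge a(1)$. Thus the entire content of the lemma is the upper inequality.

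For the upper bound, set $a_{\max}=\max_i a(i)$. If $a_{\max}\le a(m)$ there is nothing to do. Otherwise any maximizer $i^*$ is transient, since absorbing values are $\le a(m)$. Because $a\le a_{\max}$ pointwise we have $V(j,\theta)\le a_{\max}$ for all $j,\theta$, so the continuation value at $i^*$ is $\le a_{\max}=a(i^*)$, and \eqref{eq:bellman} yields $V(i^*,\theta)=a(i^*)=a_{\max}$ for both $\theta$: stopping at $i^*$ is optimal in both states of nature. I then make $i^*$ absorbing exactly as in \eqref{eqn:make-absorbing}. The value at $i^*$ is unchanged, so all continuation values elsewhere are unchanged and the incumbent best response (restricted) remains optimal; moreover each $\bar a_\theta$ is preserved, so $U^R$ is preserved. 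This produces three absorbing states, and I apply \Cref{lem:reduce-absorbing-memory} to delete the intermediate one (the old $m$, whose value $a(m)$ lies between $a(1)$ and $a_{\max}$), leaving a simple protocol with extreme absorbing values $a(1)$ and $a_{\max}$ and the same $U^R$. Re-applying \Cref{lem:not-stopping-at-transient} and \Cref{lem:transient-state-action}, which never raise an action value, yields a simple profile $(\Pi',\sigma')$ in which every state, and in particular every asymmetric state, has action value $\le a_{\max}=a'(m)$; combined with the automatic lower bound this gives $a'(i)\in[a'(1),a'(m)]$ for all $i\in M'_H\triangle M'_L$, with $U^R(\Pi)\le U^R(\Pi')$ (indeed equality).

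The main obstacle I expect is the payoff-preservation claim when $i^*$ is made absorbing. One must check that in the state of nature where the incumbent best response keeps generating signals at $i^*$, the process still ends almost surely playing $H$ with probability $a_{\max}$, so that forcing a stop at $i^*$ changes neither party's payoff. This is precisely where the identity $V(i^*,\theta)=a_{\max}$ together with the pointwise bound $a(\cdot)\le a_{\max}$ is used: a continuation achieving value $a_{\max}$ under a ceiling of $a_{\max}$ must terminate a.s. at states playing $a_{\max}$. It is also the reason I raise the ceiling rather than clamp downward: at an asymmetric state where the sender stops under $H$, lowering $a(i)$ to $a(m)$ would strictly reduce $\bar a_H$ and hurt the receiver, whereas raising the top absorbing value treats both types of asymmetric states uniformly and only weakly improves $U^R$.
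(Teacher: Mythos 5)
Your proposal is correct and follows essentially the same route as the paper: the lower bound $a(i)\geq a(1)$ is automatic from sender optimality (the paper phrases it as a contradiction, you via the Bellman value, but the content is identical), and the upper bound is handled by taking the state with the globally highest action value, observing that stopping there under both states of nature is a best response, converting it into an absorbing state as in \eqref{eqn:make-absorbing}, and collapsing the resulting three absorbing states via \Cref{lem:reduce-absorbing-memory} --- which is precisely what the paper's appeal to \Cref{lem:not-stopping-at-transient} does internally. Your version merely unfolds that black box, tracks the payoff-preservation step more explicitly, and avoids the paper's outer iteration by working with the global maximum in one shot; these are presentational rather than substantive differences.
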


\begin{proof}
    Let $\sigma$ be any best response to $\Pi$. Suppose the profile $(\Pi, \sigma)$ does not satisfy the stated property. Then there exists some $i \in M_H \triangle M_L$ such that $a(i) < a(1)$ or $a(i) > a(m)$. 
    
    If $a(i) < a(1)$, then the sender is strictly better off continuing to generate signals until reaching an absorbing state. Hence, $\sigma$ is not a best response to $\Pi$, a contradiction. 
    
    If $a(i) > a(m)$, consider a memory state---possibly $i$ itself---in which action $H$ is played with the highest probability in $\Pi$. Then the sender has a best response in which she stops in that state under both states of nature. By \Cref{lem:not-stopping-at-transient}, there exists a simple protocol $\Pi'$ with strictly fewer memory states and $U^R(\Pi) \leq U^R(\Pi')$. We then check whether the condition $a(i) > a(m)$ still holds in $\Pi'$, and continue eliminating such states iteratively until $i$ is removed.

    The  result follows by iterating this process until all such $i$ are eliminated.
\end{proof}

\begin{lemma}\label{lem:reducing-stopping-states}
    Suppose $(\Pi,\sigma)$ is a simple profile with stopping sets $M_H\neq M_L$. Then there exists  a simple profile $(\Pi',\sigma')$ with stopping sets $M'_H$ and $M'_L$ such that the following holds:

    
    
    \textup{(i)} either $\Pi'$ has fewer memory states than $\Pi$, or $M'_H \triangle M'_L\subsetneq M_H \triangle M_L$



    
    \textup{(ii)} $U^R(\Pi)\leq U^R(\Pi')$.
    
\end{lemma}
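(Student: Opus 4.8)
The plan is to fix the transition function $f$, the initial distribution $g$, and the best response $\sigma$, and to optimize only the action rule on the asymmetric stopping states $M_H\triangle M_L$, exactly as in the program displayed in Step~3 of the sketch. The crucial observation is that, holding $\sigma$ fixed, both the receiver's payoff and the sender's continuation values are \emph{linear} in the action rule. Indeed, since every non-absorbing state of a simple protocol is transient, the game ends with probability one under $\sigma$, so the sender's value from state $i$ under $\theta$ can be written as $W^\sigma(i,\theta)=\sum_k q^\theta_{i\to k}\,a(k)$, where the conditional stopping probabilities $q^\theta_{i\to k}$ depend on $f,g,\sigma$ but not on $a$ and sum to one. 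Hence $\bar a_\theta$, and therefore $U^R((f,g,\tilde a),\sigma)=p\,\bar a_H+(1-p)(1-\bar a_L)$, is linear in $\tilde a$, and the requirement $\sigma\in\br(f,g,\tilde a)$ is a finite system of linear inequalities: at each stopping state $(i,\theta)$ the inequality $\tilde a(i)\geq\sum_{s,j}\Pr(s\mid\theta)f(i,s)(j)W^\sigma(j,\theta)$, and the reverse inequality at each continuing state. Thus the Step~3 program is a linear program over a compact polytope that is nonempty (it contains the original $a$). Let $a'$ be an optimizer, set $\Pi'=(f,g,a')$; feasibility of $a$ together with optimality of $a'$ gives $U^R(\Pi)\leq U^R(\Pi')$, which is part~(ii).

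For part~(i), I would read off the sign of the objective's gradient in the free coordinates: for $i\in M_H\setminus M_L$ one has $\partial U^R/\partial\tilde a(i)=p\,q_i^H\geq 0$, so the program drives $\tilde a(i)$ upward, while for $i\in M_L\setminus M_H$ one has $\partial U^R/\partial\tilde a(i)=-(1-p)\,q_i^L\leq 0$, driving $\tilde a(i)$ downward. Because $M_H\neq M_L$, at least one free coordinate exists (reached with positive probability; coordinates with $q_i^\theta=0$ are payoff-irrelevant and handled directly), and at the optimum it cannot be moved further in its improving direction, so a feasibility constraint is active. I would then case on which one. \textbf{Box binds.} If $\tilde a(i)=a(m)$ for $i\in M_H\setminus M_L$, then since $a(m)$ is the maximal terminal payoff every continuation value is at most $a'(i)$, so stopping at $i$ is also optimal under $L$; by \Cref{lem:payoff-equivalence} the resulting both-stop best response yields the same $U^R$, and $i$ is then a transient state where the sender stops under both $\theta$, so \Cref{lem:not-stopping-at-transient} followed by \Cref{lem:reduce-absorbing-memory} makes $i$ absorbing and deletes a memory state—the first alternative in~(i). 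Symmetrically, if $\tilde a(i)=a(1)$ for $i\in M_L\setminus M_H$, then $a(1)\geq$ the continuation value forces equality, so the sender is indifferent and may continue at $i$ under $L$, removing $i$ from $M_L$ and hence from $M_H\triangle M_L$—the second alternative. \textbf{An incentive constraint binds.} At equality the sender is indifferent at the relevant $(j,\theta)$, and $W^\sigma(j,\theta)$ equals the stopping action, so switching $\sigma$ there leaves $W^\sigma(j,\theta)$—and thus every other state's constraint—unchanged, producing a best response $\sigma'$; if the switch makes $j$ a both-stop transient state we delete a state as above, and otherwise it removes $j$ from exactly one stopping set, giving $M'_H\triangle M'_L=(M_H\triangle M_L)\setminus\{j\}\subsetneq M_H\triangle M_L$.

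The main obstacle is the bookkeeping around constraints at states \emph{other} than the one being pushed: raising $\tilde a(i)$ for $i\in M_H\setminus M_L$ raises $W^\sigma(\cdot,H)$ at every $H$-predecessor of $i$, so the constraint that first blocks the push may be a stopping constraint at some predecessor $j$. The key claim to verify is that any such blocking state $j$ necessarily lies in $M_H\triangle M_L$—the absorbing states $\{1,m\}$ are terminal and carry no continuation constraint, while a never-stop transient state has, by condition~(ii) of \Cref{def:simple-profile}, action $a(1)$ and only slack continuation constraints that the push moves further from binding—so the indifference switch applies to $j$ and strictly shrinks the symmetric difference. (One also checks that, since $i$ is a continuing state under $L$, its action never enters any $W^\sigma(\cdot,L)$, so the continuing constraint at $(i,L)$ is unaffected by raising $\tilde a(i)$ and can itself bind, yielding a both-stop state at $i$.) The delicate points are confirming that the indifference switch preserves the simple-profile conditions of \Cref{def:simple-profile} and that each invoked reduction is payoff-preserving; both follow because the switch holds all sender values fixed and \Cref{lem:not-stopping-at-transient} and \Cref{lem:reduce-absorbing-memory} preserve $U^R$. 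Iterating this single step then drives the symmetric difference to empty in \Cref{prop:simple-to-parsimonious}.
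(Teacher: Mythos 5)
Your proposal takes essentially the same route as the paper's own proof: the identical linear program over action rules on $M_H \triangle M_L$ (part (ii) following from feasibility of the original $a$ and optimality of $a'$), and part (i) from a case analysis on which constraint binds at the optimum---a binding box constraint yielding a both-stop or both-continue state that is then removed via \Cref{lem:not-stopping-at-transient} or \Cref{lem:transient-state-action}, and a binding incentive constraint yielding an indifference switch of $\sigma$ at a state in $M_H \triangle M_L$ with the same dichotomy. The only cosmetic difference is that you obtain the binding constraint from monotonicity of the objective in each free coordinate (with a slightly hand-waved treatment of payoff-irrelevant coordinates), whereas the paper simply takes $a'$ to be an extreme point of the feasible polytope; your explicit check that blocking constraints cannot sit at absorbing or never-stop states is a detail the paper leaves implicit.
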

    
\begin{proof}
    Given a simple profile $(\Pi,\sigma)$, where $\Pi=(f,g,a)$, we define a new action rule $a'$ as a solution of the following optimization problem:
    \begin{align}
    \sup_{\tilde{a}} \quad & U^R((f,g,\tilde{a}),\sigma) \notag \\
    \text{s.t.} \quad & \tilde{a}(i) = a(i) \text{ for all } i \notin M_H \triangle M_L, \label{eq:constraint1} \\
    & \tilde{a}(i) \in [a(1),a(m)] \text{ for all } i \in M_H \triangle M_L \label{eq:constraint2} \\
    & \sigma \in \br(f,g,\tilde{a}). \label{eq:constraint3}
\end{align}

To express constraint \eqref{eq:constraint3}, we write $U^R((f, g, \tilde{a}), \sigma)$ explicitly as a function of $\sigma$. Note that, given $(f, g)$ and $\sigma$, the probability of the game stopping in each memory state is independent of $\tilde{a}$. Therefore, $U^R((f, g, \tilde{a}), \sigma)$ is affine in $\tilde{a}$. The sender’s payoff $U^S((f, g, \tilde{a}), \sigma)$ is also affine in $\tilde{a}$, and thus the best response condition \eqref{eq:constraint3} imposes a linear constraint on $\tilde{a}$. It follows that the problem is a linear program. 

Since $a$ is feasible, a solution $a'$ exists. We may take $a'$ to be an extreme point of the constraints, and by definition, $$U^R(\Pi)=U^R((f, g, a), \sigma) \leq U^R((f, g, a'), \sigma).$$ Because $a'$ is an extreme point, one of the constraints \eqref{eq:constraint2} or \eqref{eq:constraint3} must bind. We now consider two cases:

      \medskip
    \noindent\underline{Case 1:} $a'$ is an extreme point of \eqref{eq:constraint3}.

    \smallskip
    In this case, there exists a memory state $\tilde{i} \in M_H \triangle M_L$ and a state of nature $\tilde{\theta}$ such that the sender is indifferent between stopping and continuing in $\tilde{i}$ under $\tilde{\theta}$. Define $\sigma'$ by:
    \begin{equation}
        \sigma'(i, \theta) = 
        \begin{cases}
            \sigma(i, \tilde{\theta}) & \text{if } i = \tilde{i} \text{ and } \theta \neq \tilde{\theta}, \\
            \sigma(i, \theta)         & \text{otherwise}.
        \end{cases}
        \label{eq:sigma_prime_def}
    \end{equation}
That is, in state $\tilde{i}$, the sender now takes the same action in both states of nature. Then $\sigma'$ is a best response to $\Pi' = (f, g, a')$.

 If $\sigma'(\tilde{i}, \cdot) \equiv 0$, i.e., the sender continues in $\tilde{i}$ under both states of nature, then 
 applying \Cref{lem:transient-state-action} yields a protocol $\Pi''$ and $M_H'' \triangle M_L'' \subsetneq M_H \triangle M_L$.

If $\sigma'(\tilde{i},\cdot)=1$, i.e., the sender stops in $\tilde{i}$ under both states of nature, then by \Cref{lem:not-stopping-at-transient}, we can eliminate one memory state from $\{1,m,\tilde{i}\}$ from $\Pi'$ to obtain $\Pi''$,  which has strictly fewer memory states than $\Pi$.   

 Restricting $\sigma'$ to $\Pi''$ yields a strategy $\sigma''$. The resulting profile $(\Pi'', \sigma'')$ is simple and satisfies the desired properties.
 
\medskip
    \noindent\underline{Case 2:} $a'$ is an extreme point of \eqref{eq:constraint2}.

    \smallskip
    In this case, there exists a memory state $\hat{i}$ such that $a'(\hat{i}) = a(1)$ or $a'(\hat{i}) = a(m)$. By \Cref{def:simple-profile}, \Cref{lem:bound-of-stopping-states}, and constraint \eqref{eq:constraint3}, $a'(i) \in [a(1),a(m)]$ for all $i$. 

    If $a'(\hat{i}) = a(1)$, then there is a best response $\sigma'$ with $\sigma'(\hat{i}, \theta) = 0$ for both $\theta$, i.e., the sender continues in $\hat{i}$ in both states of nature. Applying \Cref{lem:transient-state-action} yields a protocol $\Pi''$ and $M_H'' \triangle M_L'' \subsetneq M_H \triangle M_L$.

    If $a'(\hat{i}) = a(m)$, then there is a best response $\sigma'$ with $\sigma'(\hat{i}, \theta) = 1$ for both $\theta$, i.e., the sender stops in $\hat{i}$ in both states of nature. Applying \Cref{lem:not-stopping-at-transient} yields a protocol $\Pi''$, which has strictly fewer memory states than $\Pi$.

   Restricting $\sigma'$ to $\Pi''$ yields a strategy $\sigma''$. The resulting profile $(\Pi'', \sigma'')$ is simple and satisfies the desired properties.
\end{proof}

The two lemmas above imply the following:

\begin{corollary}\label{cor:simple-profile}
    To study $\sup_{\Pi} U^R(\Pi)$, it is without loss of optimality to restrict attention to simple profiles $(\Pi, \sigma)$ defined on $\{1,\ldots,m\}$ such that $a_{\min} = a(1) = \dots = a(m-1) \leq a(m) = a_{\max}$ and $M_H = M_L = \{1, m\}$.
\end{corollary}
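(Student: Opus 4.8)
The plan is to chain the two preceding lemmas together and then read off the action structure from the definition of a simple profile. First I would invoke \Cref{prop:algorithm-to-simple} to reduce $\sup_\Pi U^R(\Pi)$ to a supremum over simple protocols, and then apply \Cref{lem:bound-of-stopping-states} to any such $\Pi$ to obtain a simple profile $(\Pi',\sigma')$ with $a'(i)\in[a'(1),a'(m)]$ on $M'_H\triangle M'_L$ and $U^R(\Pi)\le U^R(\Pi')$. This places us squarely in the hypothesis of \Cref{lem:reducing-stopping-states}, which I would then iterate.

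The core of the argument is a termination argument for that iteration. I would define the potential $\Phi(\Pi,\sigma)=(|M|,\,|M_H\triangle M_L|)\in\mathbb{N}\times\mathbb{N}$ and order it lexicographically. By part (i) of \Cref{lem:reducing-stopping-states}, each application either strictly decreases the first coordinate or leaves it fixed while strictly decreasing the second; either way $\Phi$ strictly decreases in the lexicographic order. Since $\mathbb{N}\times\mathbb{N}$ under the lexicographic order is well-ordered (both coordinates are bounded below, by $2$ and $0$ respectively), the iteration halts after finitely many steps at a simple profile to which \Cref{lem:reducing-stopping-states} no longer applies, i.e.\ one with $M_H=M_L$. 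By part (ii), the receiver's payoff only rises along the way, so the supremum is preserved.

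Next I would pin down the two stopping sets exactly. For every simple profile, the remark following \Cref{def:simple-profile} gives $M_H\cap M_L=\{1,m\}$, since property (i) forbids a transient stopping state shared by both states of nature, leaving only the two absorbing states in the intersection. Combined with $M_H\triangle M_L=\varnothing$, this forces $M_H=M_L=\{1,m\}$. Consequently every transient state $i\in\{2,\dots,m-1\}$ is one where $\sigma$ stops under neither state of nature, so property (ii) of \Cref{def:simple-profile} yields $a(i)=a(1)=a_{\min}$. Together with the normalization $a(1)\le a(m)$ of \Cref{notation-1-m} and $a(m)=a_{\max}$, this is exactly the claimed two-valued action rule $a_{\min}=a(1)=\dots=a(m-1)\le a(m)=a_{\max}$.

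The only delicate point is the termination of the iteration, since a single invocation of \Cref{lem:reducing-stopping-states} may either shed a memory state or merely shrink $M_H\triangle M_L$, and one must rule out an infinite alternation between the two outcomes. The lexicographic potential handles this cleanly: the number of memory states can drop only finitely often, and between any two such drops the symmetric difference strictly shrinks and hence can be adjusted only finitely often. Everything else is bookkeeping that follows directly from the cited lemmas and the definition of a simple profile.
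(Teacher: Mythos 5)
Your proposal is correct and follows essentially the same route the paper intends: the paper states \Cref{cor:simple-profile} as an immediate consequence of \Cref{lem:bound-of-stopping-states} and \Cref{lem:reducing-stopping-states} (preceded by \Cref{prop:algorithm-to-simple}), with the iteration-and-termination step left implicit. Your lexicographic potential $(|M|,\,|M_H\triangle M_L|)$ is exactly the right way to make that implicit termination argument precise---it mirrors the finite-descent reasoning the paper uses in the proof of \Cref{prop:algorithm-to-simple}---and your reading of \Cref{def:simple-profile} and \Cref{notation-1-m} to pin down $M_H=M_L=\{1,m\}$ and the two-valued action rule is accurate.
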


\begin{proof}[\textbf{Proof of \Cref{prop:simple-to-parsimonious} and \Cref{thm:parsimonious}}]

Given any simple profile $(\Pi, \sigma)$ satisfying \Cref{cor:simple-profile}, where $\Pi = (f, g, a)$, suppose $a_{\min} < a_{\max}$. Define a normalized action rule by
\[
a'(i) = \frac{a(i) - a_{\min}}{a_{\max} - a_{\min}}\in \{0,1\}.
\]
Then $\sigma$ is a best response to $\Pi' = (f, g, a')$, which is a parsimonious protocol. Therefore,
\[
\begin{aligned}
    U^R(\Pi) &= a_{\min} p + (1 - a_{\max})(1 - p) + (a_{\max} - a_{\min}) U^R(\Pi') \\
           &\leq \max\{p, 1 - p, U^R(\Pi')\}.
\end{aligned}
\]

Now consider a parsimonious protocol $\Pi^0$ in which the absorbing states are assigned $a(1) = 0$ and $a(m) = 1$. Let the initial state be $1$ if $p \leq \frac{1}{2}$ and $m$ otherwise. Then
\[
U^R(\Pi^0) = \max\{p, 1 - p\}.
\]
It follows that
\[
U^R(\Pi) \leq \max\{U^R(\Pi^0), U^R(\Pi')\}.
\]
If instead $a_{\min} = a_{\max}$, then
\[
U^R(\Pi) = p a_{\min} + (1 - p)(1 - a_{\min}) \leq \max\{p, 1 - p\} = U^R(\Pi^0).
\]
This proves \Cref{prop:simple-to-parsimonious}, and hence \Cref{thm:parsimonious}.
\end{proof}

\section{Proof of \Cref{thm:supUR}}\label{sec: proof with modified chain}

\subsection{Preliminaries: Modified Markov Chain}
For a finite-state Markov chain $\mathcal{M}$ with transition probabilities $p_{ij}$, let $T(\mathcal{M})$ denote the set of its transient states and fix a state $m_0 \in T(\mathcal{M})$. Let $R$ be a recurrent communicating class of $\mathcal{M}$ and let $P_R$ denote the probability that $\mathcal{M}$ is eventually absorbed into $R$ starting from the initial state $m_0$. Let $\tau$ be the random time the process escapes the transient set starting from $m_0$.

To compute $P_R$, define a \textbf{modified Markov chain} obtained by restricting $\mathcal{M}$  to $T(\mathcal{M})$, where the process starts at $m_0$, and any transition from a transient state to a recurrent state in the original chain is redirected instead to $m_0$. Let $\nu$ denote the stationary distribution of this modified Markov chain on $T(\mathcal{M})$. \citet{hellman1969learning} (from A4 to A8, pp.48-49) proved the following:
\begin{equation}\label{lem:absorbing-formula}
P_R= \mathbb{E}[\tau] \sum_{i \in T(\mathcal{M})} \nu_i \sum_{j \in R} p_{ij}.
\end{equation}
    

The following is due to \citet{hellman1970learning} (Theorem 2 and Corollary 1).

\begin{lemma} \label{lem:condition-for-irreducible} 

For any protocol with $m-2$ states that induces an irreducible Markov chain with stationary distribution $\nu^\theta$ when the state is $\theta \in \{H,L\}$, the following holds for any two memory states $i,j$,

    \begin{equation}\label{hellman-spread-bound}
        \frac{\nu_i^H \nu_j^L}{\nu_i^L\nu_j^H}\leq \gamma^{m-3}.
    \end{equation}

   Label the $m-2$ states as $2,3,...,m-1$ ranked increasingly by the ratios ${\nu^H_i}/{\nu^L_i}$, the following must be satisfied for the equality to hold in \eqref{hellman-spread-bound}:

    \textup{(i)} Any transition from $i$ to $i+1$ is due to signal $h$ with likelihood ratio $\bar{\ell}$.

    \textup{(ii)}  Any transition from $i$ to $i-1$ is due to signal $l$ with likelihood ratio $\underline{\ell}$.

    \textup{(iii)}  No transitions from $i$ to $j$ exist if $|i-j|\geq 2$.
\end{lemma}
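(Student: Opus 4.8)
The plan is to convert the global spread bound into a local one-step inequality between consecutive states and then telescope. Write $P^\theta_{ij}=\sum_{s}\pi_\theta(s)f(i,s)(j)$ for the transition matrix induced by $\pi_\theta$, and set $r_i:=\nu_i^H/\nu_i^L$ for the likelihood ratio attached to state $i$; the claim $\nu_i^H\nu_j^L/(\nu_i^L\nu_j^H)\le\gamma^{m-3}$ is exactly $r_i/r_j\le\gamma^{m-3}$. With the $m-2$ states relabeled $2,\dots,m-1$ so that $r_2\le\cdots\le r_{m-1}$, it suffices to prove the single-step bound $r_{k+1}/r_k\le\gamma$ for each $k$ and to multiply the $m-3$ resulting inequalities, since for arbitrary $i,j$ we have $r_i/r_j\le r_{m-1}/r_2$ (or $r_i/r_j\le 1\le\gamma^{m-3}$).

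The one-step bound I would derive from a cut-balance argument. Fix $k$, put $A=\{2,\dots,k\}$ and $B=\{k+1,\dots,m-1\}$, and decompose each directed cross-cut flow by signal: $F^\theta_{A\to B}(s)=\sum_{i\in A,\,j\in B}\nu_i^\theta\pi_\theta(s)f(i,s)(j)$, and similarly $F^\theta_{B\to A}(s)$. Stationarity of $\nu^\theta$ gives zero net flow across the cut, $\sum_s F^\theta_{A\to B}(s)=\sum_s F^\theta_{B\to A}(s)$, while irreducibility makes the common $L$-value $\Phi:=\sum_s F^L_{A\to B}(s)=\sum_s F^L_{B\to A}(s)$ strictly positive. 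The key identity is that each $H$-flow term equals its matching $L$-flow term times $r_i\,\ell(s)$, where $\ell(s)=\pi_H(s)/\pi_L(s)\in[\underline{\ell},\bar{\ell}]$. Applying $r_i\le r_k$ and $\ell(s)\le\bar{\ell}$ on the $A\to B$ side and $r_i\ge r_{k+1}$ and $\ell(s)\ge\underline{\ell}$ on the $B\to A$ side yields
\[
\underline{\ell}\,r_{k+1}\,\Phi\;\le\;\sum_s F^H_{B\to A}(s)\;=\;\sum_s F^H_{A\to B}(s)\;\le\;\bar{\ell}\,r_k\,\Phi,
\]
and dividing by $\Phi>0$ gives $r_{k+1}/r_k\le\bar{\ell}/\underline{\ell}=\gamma$. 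Telescoping over $k=2,\dots,m-2$ produces $r_{m-1}/r_2\le\gamma^{m-3}$, which is the stated bound.

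For the equality characterization I would trace which inequalities must be tight. Attaining $\gamma^{m-3}$ forces $r_{k+1}/r_k=\gamma$ at every step, hence both displayed inequalities are equalities at each cut; in particular each gap equals $\gamma>1$, so the $r_i$ are \emph{strictly} increasing and ties are excluded. Tightness of the upper estimate requires every $A\to B$ transition carrying flow to satisfy $r_i=r_k$ and $\ell(s)=\bar{\ell}$, i.e.\ to emanate from state $k$ using signal $h$; tightness of the lower estimate requires every $B\to A$ transition carrying flow to emanate from state $k+1$ using signal $l$. Imposing this simultaneously over all cuts pins down the structure: an upward transition $i\to j$ with $i<j$ crosses each cut $k\in\{i,\dots,j-1\}$ and would have to originate at each such $k$, which is possible only when $j=i+1$; symmetrically a downward transition forces $j=i-1$. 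This rules out all jumps with $|i-j|\ge 2$ (statement (iii)) and assigns $h$ to each step up (statement (i)) and $l$ to each step down (statement (ii)).

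The main obstacle is the one-step cut-balance estimate—setting up the signalwise flow decomposition and orienting the two likelihood bounds so that $\bar{\ell}\,r_k$ and $\underline{\ell}\,r_{k+1}$ land on the correct sides of the common cross-cut flow. Once this local contraction factor $\gamma$ is in hand, both the telescoping bound and the equality analysis are routine bookkeeping. I would also flag at the outset that irreducibility is what guarantees $\Phi>0$, legitimizing the division that yields the clean ratio bound.
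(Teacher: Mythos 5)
Your proof is correct. Note, however, that the paper does not prove this lemma itself: it is imported directly from Hellman and Cover (1970) (their Theorem~2 and Corollary~1), so there is no in-paper argument to compare against. What you have written is, in substance, a self-contained reconstruction of the classical Hellman--Cover argument: the signalwise flow decomposition across the cut $\{2,\dots,k\}$ versus $\{k+1,\dots,m-1\}$, the identity $\nu_i^H\pi_H(s)f(i,s)(j)=r_i\,\ell(s)\,\nu_i^L\pi_L(s)f(i,s)(j)$, and the balance of cross-cut flow under stationarity together yield the one-step contraction $r_{k+1}/r_k\le\gamma$, which telescopes over the $m-3$ consecutive gaps. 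You also handle correctly the two points where such an argument usually goes wrong: irreducibility (together with full support of $\pi_\theta$ and strict positivity of the stationary distribution) guarantees $\Phi>0$, legitimizing the division; and in the equality analysis, the forced gaps $r_{k+1}/r_k=\gamma>1$ make the ratios strictly increasing, so tightness of each termwise bound pins every upward crossing of cut $k$ to originate exactly at state $k$ with a signal of likelihood ratio $\bar{\ell}$, and every downward crossing to originate at state $k+1$ with likelihood ratio $\underline{\ell}$; a transition of jump size at least $2$ would then have to originate at two distinct states simultaneously, which delivers (i)--(iii). One small point worth making explicit: ``a transition exists'' in the lemma means $f(i,s)(j)>0$, and this is equivalent to the transition carrying positive stationary flow precisely because $\nu_i^\theta>0$ and $\pi_\theta(s)>0$, which is what licenses translating your flow-tightness conclusions into statements about the transition function itself.
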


\subsection{Proof of \Cref{lem:spread}}

\begin{proof}
     First note that $\mu_i^H(\Pi)=0$ if and only if $\mu_i^L(\Pi)=0$ for $i=1,m.$ Therefore, the claim is true if one of the absorption probabilities is zero. Now suppose $\mu_i^\theta(\Pi)\neq 0$ for all $i \in\{1,m\}$ and $\theta\in \{H,L\}$. Let the stationary distribution of $T(\Pi)$ when the state of nature is $\theta$ be $\nu^\theta$. Use $\tau^\theta$ to denote the time to absorption when the state of nature is $\theta$. Then

    \begin{equation*}
    \begin{aligned}
        \frac{\mu_m^{H}(\Pi)\mu_1^{L}(\Pi)}{\mu_1^{H}(\Pi)\mu_m^{L}(\Pi)}&=\frac{\mathbb{E}[\tau^H](\sum_{2\leq i\leq m-1} \nu^H_i p^H_{im})\mathbb{E}[\tau^L](\sum_{2\leq i\leq m-1} \nu^L_i p^L_{i1})}{\mathbb{E}[\tau^L](\sum_{2\leq i\leq m-1} \nu^L_i p^L_{im})\mathbb{E}[\tau^H](\sum_{2\leq i\leq m-1} \nu^H_i p^H_{i1})}\\
        &\leq \frac{\max_{2\leq i\leq m-1} \frac{p^H_{im}}{p^L_{im}}}{\min_{2\leq j\leq m-1} \frac{p^H_{j1}}{p^L_{j1}}}\cdot \frac{\max_{2\leq i\leq m-1} \frac{\nu^H_i}{\nu^L_i}}{\min_{2\leq j\leq m-1}\frac{\nu^H_j}{\nu^L_j}}\\
        &\leq \frac{\bar{\ell}}{\underline{\ell}}\gamma^{m-3}\\
        &=\gamma^{m-2}
    \end{aligned}    
    \end{equation*}

    For the first inequality to hold with equality, the transition from a transient state $i$ to $m$ can only happen if $i$ maximizes $\frac{\nu^H_i}{\nu^L_i}$ and the transition from a transient state $j$ to $1$ can only happen if $j$ minimizes $\frac{\nu^H_j}{\nu^L_j}$. For the second inequality to hold with equality, the transition from a transient state to $m$ is due to the signal $h$ with likelihood $\bar{\ell}$, the transition from a transient state to $1$ is due to the signal $l$ with likelihood $\underline{\ell}$, and \eqref{hellman-spread-bound} holds with equality for $T(\Pi)$ with $m-2$ memory states. Label the transient states as $2,3,...,m-1$, ranked by the ratios $\frac{\nu^H_i}{\nu^L_i}$. Then for $\Pi$, there is a transition due to $l$ from $2$ to $1$, and a transition from $m-1$ to $m$ due to $l$. So for $T(\Pi)$, there is a transition due to $l$ from $2$ to $i_0$ and a transition due to $h$ from $m-1$ to $i_0$. For $m\geq 4$, it cannot be the case that $2, i_0, m-1$ are the same state. It follows from by \Cref{lem:condition-for-irreducible} that \eqref{hellman-spread-bound}, and hence the second inequality here, cannot hold with equality.
\end{proof}

\subsection{Proof of \Cref{lem:auxiliary-optimization}}\label{proof:lem:auxiliary-optimization}

\begin{proof}

By inspection, in the optimal solution, the first inequality constraint in \eqref{eqn:receiver_opt_relax} must be binding; otherwise, it must be that \(\alpha, \beta < 1\), and both variables can be increased to raise the objective value. 

We shall prove that the following four statements hold for the optimization problem \eqref{eqn:receiver_opt_relax}, which imply \Cref{lem:auxiliary-optimization}: 

\textup{(i)} If $\gamma^{m-2} > \kappa$, then the unique solution is
\begin{equation} \label{eq:alphabeta}
\alpha^* = \frac{\gamma^{m-2} - \sqrt{\frac{1 - p}{p} \gamma^{m-2}}}{\gamma^{m-2} - 1}, \quad
\beta^* = \frac{\gamma^{m-2} - \sqrt{\frac{p}{1 - p} \gamma^{m-2}}}{\gamma^{m-2} - 1},
\end{equation}
and the optimal value is
\begin{equation}\label{eq:optimization_value}
1 - \frac{2\sqrt{p(1 - p)\gamma^{m-2}} - 1}{\gamma^{m-2} - 1},
\end{equation}
which is strictly greater than $\max\{p, 1 - p\}$ and converges to it as $\gamma^{m-2} \to \kappa$.

\textup{(ii)} If $\gamma^{m-2} \leq \kappa$ and $p > \frac{1}{2}$, then the unique solution is
\begin{equation}
(\alpha^*, \beta^*) = (1, 0),
\end{equation}
and the optimal value is \( p \).

\textup{(iii)} If $\gamma^{m-2} \leq \kappa$ and $p < \frac{1}{2}$, then the unique solution is
\begin{equation}
(\alpha^*, \beta^*) = (0, 1),
\end{equation}
and the optimal value is \( 1 - p \).

\textup{(iv)} If $\gamma^{m-2} \leq \kappa$ and $p = \frac{1}{2}$, then the optimal value is \( \frac{1}{2} \), and the solution is not unique.

Solving the optimization problem when the first constraint binds yields \(\alpha^*, \beta^*)\) of the form given in \eqref{eq:alphabeta}, and we have \(\alpha^*, \beta^* \in (0,1)\) if and only if \(\gamma^{m-2} > \kappa\). In this case, the optimal value \(p\alpha^* + (1 - p)\beta^*\) simplifies to the expression in \eqref{eq:optimization_value}. A straightforward computation shows that this value is strictly greater than \(\max\{p, 1 - p\}\), and the two coincide if \(\gamma^{m-2} = \kappa\).
   
Now suppose \(\gamma^{m-2} \leq \kappa\). In this case, only two feasible solutions satisfy the binding constraint: \((0,1)\) and \((1,0)\). The former yields a value of \(1 - p\), and the latter yields \(p\). Therefore, if \(p > \frac{1}{2}\), the unique solution is
\begin{equation}
    (\alpha^*, \beta^*) = (1, 0).
\end{equation}
If \(p < \frac{1}{2}\), the unique solution is
\begin{equation}
    (\alpha^*, \beta^*) = (0, 1).
\end{equation}
If \(\gamma^{m-2} \leq \kappa\) and \(p = \frac{1}{2}\), the optimization problem admits multiple solutions.
\end{proof}

\subsection{Proof of \Cref{lem:setting_k2/k1}}

\begin{lemma}\label{lem:upper-bound}
    For any $k_1,k_2>0$, as $\epsilon\rightarrow 0$, 

    \begin{equation*}
        \mu_m^H(\Pi(k_1\epsilon, k_2\epsilon))\rightarrow \frac{\frac{k_2}{k_1}\gamma^{\frac{m-2}{2}}}{1+\frac{k_2}{k_1}\gamma^{\frac{m-2}{2}}},
    \end{equation*}

    \begin{equation*}
        \mu_1^L(\Pi(k_1\epsilon, k_2\epsilon))\rightarrow \frac{\frac{k_1}{k_2}\gamma^{\frac{m-2}{2}}}{1+\frac{k_1}{k_2}\gamma^{\frac{m-2}{2}}}.
    \end{equation*}
\end{lemma}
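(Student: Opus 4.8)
The plan is to exploit the one--dimensional birth--death structure that $\Pi(k_1\epsilon,k_2\epsilon)$ induces on its transient states $\{2,\dots,m-1\}$ and to evaluate the two absorption probabilities using Hellman's formula \eqref{lem:absorbing-formula}. For every $\epsilon>0$ both leak probabilities are strictly positive, so the chain is absorbed into $\{1,m\}$ with probability one and $\mu_m^\theta+\mu_1^\theta=1$ for each $\theta$. Hence it is enough to identify the limiting ratio $\mu_m^\theta/\mu_1^\theta$, after which the individual limits follow from the elementary identity that $x=r/(1+r)$ whenever $x=r(1-x)$. Throughout I complete the protocol by letting every non-extreme signal $s\notin\{h,l\}$ act as a self-loop, $f(i,s)(i)=1$; such self-loops change neither the absorption probabilities nor the detailed-balance ratios used below, so the computation is insensitive to this choice.

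First I would apply \eqref{lem:absorbing-formula} with initial state $m_0=2$ to the recurrent classes $R=\{m\}$ and $R=\{1\}$. The only transient state that transitions into $m$ is $m-1$, with one-step probability $p^\theta_{m-1,m}=\pi_\theta(h)\,k_2 C_2\,\epsilon$, where $C_2:=(\pi_H(l)\pi_L(l))^{(m-2)/2}$; symmetrically, the only transient state transitioning into $1$ is $2$, with $p^\theta_{2,1}=\pi_\theta(l)\,k_1 C_1\,\epsilon$, where $C_1:=(\pi_H(h)\pi_L(h))^{(m-2)/2}$. Forming the ratio, the common factor $\mathbb{E}[\tau^\theta]$ and the common $\epsilon$ cancel, giving $\mu_m^\theta/\mu_1^\theta=(\nu_{m-1}^\theta/\nu_2^\theta)\cdot(\pi_\theta(h)k_2 C_2)/(\pi_\theta(l)k_1 C_1)$, where $\nu^\theta$ is the stationary distribution of the associated modified chain.

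Next I would send $\epsilon\to0$. As the leaks vanish, the modified chain converges entrywise, on the fixed irreducible state space $\{2,\dots,m-1\}$, to the reflecting birth--death walk in which $h$ moves right, $l$ moves left, and the two endpoints reflect; by continuity of the stationary distribution of an irreducible finite chain in its transition matrix, $\nu^\theta$ converges to the stationary distribution of this walk. Detailed balance yields $\nu_{i+1}^\theta/\nu_i^\theta=\pi_\theta(h)/\pi_\theta(l)$, hence $\nu_{m-1}^\theta/\nu_2^\theta\to(\pi_\theta(h)/\pi_\theta(l))^{m-3}$. Combining with the previous step, $\mu_m^\theta/\mu_1^\theta\to(\pi_\theta(h)/\pi_\theta(l))^{m-2}\,k_2 C_2/(k_1 C_1)$. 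Substituting $C_1,C_2$ and the definition $\gamma=\tfrac{\pi_H(h)\pi_L(l)}{\pi_H(l)\pi_L(h)}$, this collapses to $\tfrac{k_2}{k_1}\gamma^{(m-2)/2}$ for $\theta=H$ and, after taking reciprocals, $\mu_1^L/\mu_m^L\to\tfrac{k_1}{k_2}\gamma^{(m-2)/2}$ for $\theta=L$; feeding these into $x=r/(1+r)$ delivers the two displayed limits.

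The main obstacle is the convergence argument in the third paragraph: one must verify that the modified chain remains irreducible for all small $\epsilon>0$ and that its transition matrix converges to that of the reflecting walk, so that the stationary ratio $\nu_{m-1}^\theta/\nu_2^\theta$ converges to the clean geometric value, and that the intermediate-signal self-loops rescale $\nu^\theta$ uniformly and thus leave the adjacent detailed-balance ratios intact. The remaining effort is the bookkeeping in the last step, where the exponents built into $C_1$ and $C_2$ are exactly what convert $(\pi_\theta(h)/\pi_\theta(l))^{m-2}$ into $\gamma^{(m-2)/2}$. As an independent check I would also solve the problem directly: writing $u_i$ for the probability of absorption at $m$ from state $i$, the interior first-step equations force $u_i=A+B(\pi_\theta(l)/\pi_\theta(h))^i$, the two leaky boundary equations at states $2$ and $m-1$ determine $A$ and $B$, and letting $\epsilon\to0$ sends $B\to0$ while $A$ converges to the same limit, confirming the result.
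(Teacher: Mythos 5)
Your proof is correct and follows essentially the same route as the paper's: both apply the absorption formula \eqref{lem:absorbing-formula} to $R=\{1\}$ and $R=\{m\}$, cancel the common factors $\mathbb{E}[\tau^\theta]$ and $\epsilon$ in the ratio $\mu_m^\theta/\mu_1^\theta$, invoke irreducibility of the modified chain on $\{2,\dots,m-1\}$ for all $\epsilon\geq 0$ to pass its stationary distribution to the $\epsilon=0$ reflecting-walk limit, and then combine the balance equations $\pi_\theta(l)\nu_{i+1}^\theta=\pi_\theta(h)\nu_i^\theta$ with $\mu_1^\theta+\mu_m^\theta=1$ to extract the stated limits. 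Your explicit self-loop completion for non-extreme signals and the first-step-equation cross-check are harmless additions on top of the paper's argument.
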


\begin{proof}
  For each state of nature $\theta$, the protocol $\Pi(k_1\epsilon,k_2\epsilon)$ defines a Markov chain on $M$ with the initiate state $m_0=2$. Let $\nu^\theta(\epsilon)$ be the stationary distribution of the modified Markov chain obtained by restricting $\Pi(k_1\epsilon,k_2\epsilon)$ to $T(\Pi(k_1\epsilon,k_2\epsilon))=\{2,...,m-1\}$ when the state of nature is $\theta$. By the definition of $\Pi(k_1\epsilon,k_2\epsilon)$, the modified Markov chain on $\{2,...,m-1\}$ is irreducible for any $\epsilon\geq 0$. Therefore, $\nu^\theta(\epsilon)\rightarrow\nu^\theta(0)$ if $\epsilon\rightarrow 0$. Considering the balance equations on the modified Markov chain,  $\pi_\theta(l)\nu_{i+1}^\theta(0)=\pi_\theta(h)\nu_i^\theta(0)$ for $i=2,...,m-2$, we have $\nu_{i+1}^\theta(0)=\frac{\pi_\theta(h)}{\pi_\theta(l)}\nu_i^\theta(0)$. Therefore,
  \begin{equation}\label{eqn:balance}
  \nu^\theta_{m-1}(0)=\left(\frac{\pi_\theta(h)}{\pi_\theta(l)}\right)^{m-3}\nu^\theta_2(0).
  \end{equation}
Let $\tau^\theta(\epsilon)$ be the random time the Markov process $\Pi(k_1\epsilon,k_2\epsilon)$ escapes $\{2,...,m-1\}$ when the state of nature is $\theta$. Now applying \Cref{lem:absorbing-formula} to $R=\{1\}$ and $R=\{m\}$, respectively, we obtain
  \begin{equation}\label{eqn:steady-state-k}
\begin{aligned}
\mu_1^\theta(\Pi(k_1\epsilon, k_2\epsilon)) 
&= \mathbb{E}[\tau^\theta(\epsilon)]  \nu^\theta_{2}(\epsilon)   \pi_\theta(l) \left(\pi_H(h)\pi_L(h)\right)^{\frac{m-2}{2}} k_1 \epsilon, \\
\mu_m^\theta(\Pi(k_1\epsilon, k_2\epsilon)) 
&= \mathbb{E}[\tau^\theta(\epsilon)]   \nu^\theta_{m-1}(\epsilon)   \pi_\theta(h) \left(\pi_H(l)\pi_L(l)\right)^{\frac{m-2}{2}} k_2 \epsilon.
\end{aligned}
\end{equation}
    Since $\mu_1^\theta(\Pi(k_1\epsilon, k_2\epsilon))+\mu_m^\theta(\Pi(k_1\epsilon, k_2\epsilon))=1$, for $i\neq j\in\{1,m\}$, we have $$\mu_i^\theta(\Pi(k_1\epsilon, k_2\epsilon))=\frac{\mu_i^\theta(\Pi(k_1\epsilon, k_2\epsilon))}{\mu_i^\theta(\Pi(k_1\epsilon, k_2\epsilon))+\mu_j^\theta(\Pi(k_1\epsilon, k_2\epsilon))}.$$ 
Using \eqref{eqn:balance} and then \eqref{eqn:steady-state-k}, we have
\begin{equation*}
\begin{aligned}
\mu_m^H(\Pi(k_1\epsilon, k_2\epsilon)) 
&= \frac{ \nu^H_{m-1}(\epsilon) \, \pi_H(h) \left(\pi_H(l)\pi_L(l)\right)^{\frac{m-2}{2}} k_2 }
         { \nu^H_{m-1}(\epsilon) \, \pi_H(h) \left(\pi_H(l)\pi_L(l)\right)^{\frac{m-2}{2}} k_2 
         + \nu^H_{2}(\epsilon) \, \pi_H(l) \left(\pi_H(h)\pi_L(h)\right)^{\frac{m-2}{2}} k_1 } \\[0.5em]
&= \frac{ \frac{k_2}{k_1} \cdot \frac{\nu^H_{m-1}(\epsilon)}{\nu^H_2(\epsilon)} 
\left( \frac{\pi_H(l)\pi_L(l)}{\pi_H(h)\pi_L(h)} \right)^{\frac{m-3}{2}} 
\left({ \frac{ \pi_H(h)\pi_L(l) }{ \pi_H(l)\pi_L(h) } }\right)^{\frac{1}{2}} }
{ \frac{k_2}{k_1} \cdot \frac{\nu^H_{m-1}(\epsilon)}{\nu^H_2(\epsilon)} 
\left( \frac{\pi_H(l)\pi_L(l)}{\pi_H(h)\pi_L(h)} \right)^{\frac{m-3}{2}} 
\left({ \frac{ \pi_H(h)\pi_L(l) }{ \pi_H(l)\pi_L(h) }} \right)^{\frac{1}{2}} + 1 } \\[0.5em]
&\rightarrow \frac{ \frac{k_2}{k_1} \left( \frac{\pi_H(h)}{\pi_H(l)} \right)^{m-3}
\left( \frac{\pi_H(l)\pi_L(l)}{\pi_H(h)\pi_L(h)} \right)^{\frac{m-3}{2}} 
\gamma^{\frac{1}{2}} }
{ \frac{k_2}{k_1} \left( \frac{\pi_H(h)}{\pi_H(l)} \right)^{m-3}
\left( \frac{\pi_H(l)\pi_L(l)}{\pi_H(h)\pi_L(h)} \right)^{\frac{m-3}{2}} 
\gamma^{\frac{1}{2}} + 1 } \\[0.5em]
&= \frac{ \frac{k_2}{k_1} \gamma^{\frac{m-2}{2}} }
{ \frac{k_2}{k_1} \gamma^{\frac{m-2}{2}} + 1 }.
\end{aligned}
\end{equation*}
    The limit for $\mu_1^L(\Pi(k_1\epsilon, k_2\epsilon))$ follows symmetrically.
\end{proof}

\begin{proof}[Proof of \Cref{lem:setting_k2/k1}]
By setting
\[
\frac{k_2}{k_1}
   =\frac{\sqrt{\gamma^{m-2}}-\sqrt{\tfrac{1-p}{p}}}%
          {\sqrt{\gamma^{m-2}}\sqrt{\tfrac{1-p}{p}}-1}
\] in \Cref{lem:upper-bound}, we have $\mu_m^H(\Pi(k_1\epsilon, k_2\epsilon)) \rightarrow \alpha^* $ and $\mu_1^L(\Pi(k_1\epsilon, k_2\epsilon)) \rightarrow \beta^*$, 
and hence $$U^R(\Pi(k_1\epsilon, k_2\epsilon))\rightarrow p\alpha^*+(1-p)\beta^*=1-\frac{2\sqrt{p(1-p)\gamma^{m-2}}-1}{\gamma^{m-2}-1}.$$
\end{proof}

\section{Proof of \Cref{thm:sender_payoff}}
\begin{proof}For any parsimonious sequence \(\{\Pi^n\}\) such that \(\lim_{n \to \infty} U^R(\Pi^n) = \sup_\Pi U^R(\Pi)\), the limit points of \(\mu_m^H(\Pi^n)\) and \(\mu_1^L(\Pi^n)\) solve the optimization problem \eqref{eqn:receiver_opt_relax}. By \Cref{lem:auxiliary-optimization}, the solution \((\alpha^*, \beta^*)\) is unique in all three cases, so \(\lim_{n \to \infty} \mu_m^H(\Pi^n) = \alpha^*\) and \(\lim_{n \to \infty} \mu_1^L(\Pi^n) = \beta^*\). Since
\begin{equation*}
\begin{aligned}
U^S(\Pi^n) &= p\, \mu_m^H(\Pi^n) + (1 - p)\, \mu_m^L(\Pi^n) \\
           &= p\, \mu_m^H(\Pi^n) + (1 - p)(1 - \mu_1^L(\Pi^n)),
\end{aligned}
\end{equation*}
we have
\begin{equation} \label{eqn:seller_value}
\lim_{n \to \infty} U^S(\Pi^n) = p\, \alpha^* + (1 - p)(1 - \beta^*).
\end{equation}
The claims then follow by substituting the expressions for \((\alpha^*, \beta^*)\) from \Cref{proof:lem:auxiliary-optimization} into \eqref{eqn:seller_value}. 
\end{proof}

\section{Proof of \Cref{thm:behavioral_implication}}\label{appendix: proof-behavior}

We first provide a set of sufficient and necessary conditions for the limit of the receiver's payoffs from a sequence of parsimonious protocols to be the optimal payoff.

\begin{lemma}\label{lem:3-conditions}
    Suppose $\gamma^{m-2}>\kappa$ and $\{\Pi^n\}_{n=1}^\infty$ is a sequence of parsimonious protocols. Then $\lim\limits_{n \rightarrow \infty} U^R(\Pi^n) = \sup_\Pi U^R(\Pi)$ if and only if the following hold:

    \textup{(i)} \textbf{Optimal Absorption}:

    \begin{equation*}
        \frac{\nu^\theta_2(\Pi^n)\pi_\theta(l)f^n(2,l)(1)}{\sum_{i=2}^{m-1} \sum_{s\in S}\nu^\theta_i(\Pi^n)\pi_\theta(s)f^n(i,s)(1)}\rightarrow 1
    \end{equation*}

    \begin{equation*}
        \frac{\nu^\theta_{m-1}(\Pi^n)\pi_\theta(h)f^n(m-1,h)(m)}{\sum_{i=2}^{m-1} \sum_{s\in S}\nu^\theta_i(\Pi^n)\pi_\theta(s)f^n(i,s)(m)}\rightarrow 1
    \end{equation*}

    \textup{(ii)} \textbf{Optimal Mixing}: Define

    \begin{equation*}
        g(i,s)(j)=\begin{cases}
            f(i,s)(j)+f(i,s)(1)+f(i,s)(m), & \text{if $j= i_0$}\\
            f(i,s)(j), & \text{if $j\neq i_0$}\\
        \end{cases}
    \end{equation*}
    for $2\leq i,j\leq m-1$ (i.e. the transition function of the modified Markov chain). Then
    
    \textup{(ii.a)}

    \begin{equation*}
        \frac{\sum_{j=k+1}^{m-1}\sum_{s\in S}\nu^\theta_k(\Pi^n)\pi_\theta(s)g^n(k,s)(j)}{\sum_{i=2}^{k}\sum_{j=k+1}^{m-1}\sum_{s\in S}\nu^\theta_i(\Pi^n)\pi_\theta(s)g^n(i,s)(j)}\rightarrow 1;
    \end{equation*}

    \begin{equation*}
        \frac{\sum_{j=2}^{k}\sum_{s\in S}\nu^\theta_{k+1}(\Pi^n)\pi_\theta(s)g^n(k+1,s)(j)}{\sum_{i=k+1}^{m-1}\sum_{j=2}^{k}\sum_{s\in S}\nu^\theta_i(\Pi^n)\pi_\theta(s)g^n(i,s)(j)}\rightarrow 1.
    \end{equation*}

   \textup{(ii.b)}

    \begin{equation*}
        \frac{\sum_{j=k+1}^{m-1}\sum_{s\in S} \pi_H(s)g^n(k,s)(j)}{\sum_{j=k+1}^{m-1}\sum_{s\in S} \pi_L(s)g^n(k,s)(j)}\rightarrow \bar{\ell};
    \end{equation*}

    \begin{equation*}
        \frac{\sum_{j=2}^{k}\sum_{s\in S} \pi_H(s)g^n(k+1,s)(j)}{\sum_{j=2}^{k}\sum_{s\in S} \pi_L(s)g^n(k+1,s)(j)}\rightarrow \underline{\ell}.
    \end{equation*}

    \textup{(iii)} \textbf{Optimal Bias}: $  \mu^H_m(\Pi^n)\rightarrow \alpha^*.$
   
\end{lemma}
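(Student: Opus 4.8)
The plan is to convert the payoff convergence $\lim_{n}U^R(\Pi^n)=\sup_\Pi U^R(\Pi)$ into the asymptotic tightness of the spread inequality of \Cref{lem:spread}, and then to decompose that tightness, factor by factor, into the three stated conditions. First I would pass to the absorption probabilities. Writing $\alpha_n=\mu_m^H(\Pi^n)$ and $\beta_n=\mu_1^L(\Pi^n)$, so that $\mu_1^H(\Pi^n)=1-\alpha_n$ and $\mu_m^L(\Pi^n)=1-\beta_n$, \Cref{lem:spread} says $(\alpha_n,\beta_n)$ is feasible for \eqref{eqn:receiver_opt_relax}, whose optimizer $(\alpha^*,\beta^*)$ is unique and interior under $\gamma^{m-2}>\kappa$ by \Cref{lem:auxiliary-optimization}. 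Since $U^R(\Pi^n)=p\alpha_n+(1-p)\beta_n$ is exactly the objective of \eqref{eqn:receiver_opt_relax}, convergence of the objective to the unique maximum is equivalent to $(\alpha_n,\beta_n)\to(\alpha^*,\beta^*)$. As $(\alpha^*,\beta^*)$ lies on the binding constraint $\alpha^*\beta^*=\gamma^{m-2}(1-\alpha^*)(1-\beta^*)$ with both coordinates in $(0,1)$, this in turn is equivalent to the conjunction of (a) $\tfrac{\alpha_n\beta_n}{(1-\alpha_n)(1-\beta_n)}\to\gamma^{m-2}$ and (b) $\alpha_n\to\alpha^*$. Statement (b) is precisely condition (iii), so it remains to show that the spread ratio in (a) tends to $\gamma^{m-2}$ if and only if (i) and (ii) hold.

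Next I would expand the spread ratio using \eqref{lem:absorbing-formula}; the factors $\mathbb{E}[\tau^\theta]$ cancel, leaving
\[
\frac{\alpha_n\beta_n}{(1-\alpha_n)(1-\beta_n)}
=\underbrace{\frac{\sum_{i,s}\nu_i^H\pi_H(s)f^n(i,s)(m)}{\sum_{i,s}\nu_i^L\pi_L(s)f^n(i,s)(m)}}_{=:F_m}
\cdot\underbrace{\frac{\sum_{i,s}\nu_i^L\pi_L(s)f^n(i,s)(1)}{\sum_{i,s}\nu_i^H\pi_H(s)f^n(i,s)(1)}}_{=:1/F_1}.
\]
Writing $M_H=\max_i\nu_i^H/\nu_i^L=\nu_{m-1}^H/\nu_{m-1}^L$ and $M_L=\min_i\nu_i^H/\nu_i^L=\nu_2^H/\nu_2^L$ under the likelihood-ratio labelling, each of $F_m$ and $1/F_1$ is a weighted average of single-$(i,s)$ ratios $\tfrac{\nu_i^H}{\nu_i^L}\tfrac{\pi_H(s)}{\pi_L(s)}$, so $F_m\le\bar\ell M_H$ and $F_1\ge\underline\ell M_L$. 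I would then factor the spread ratio as $\gamma\,A\,B\,C$, where $A:=F_m/(\bar\ell M_H)\le1$, $C:=(\underline\ell M_L)/F_1\le1$, and $B:=M_H/M_L\le\gamma^{m-3}$ by \Cref{lem:condition-for-irreducible}; this recovers the bound $\gamma^{m-2}$ of \Cref{lem:spread}. By the elementary fact that nonnegative factors with $A,C\le1$ and $B\le\gamma^{m-3}$ have $\gamma ABC\to\gamma^{m-2}$ \emph{iff} $A\to1$, $C\to1$ and $B\to\gamma^{m-3}$ simultaneously, it suffices to match each factor's tightness to a stated condition.

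For $A\to1$ (and symmetrically $C\to1$), tightness of the weighted average $F_m$ at its upper bound $\bar\ell M_H$ forces the $L$-weight in the flux into state $m$ to concentrate on the unique maximizing pair $(m-1,h)$; since the $H$-weight is the $L$-weight times a ratio bounded by $\bar\ell M_H$, it concentrates there as well, which is exactly condition (i) for both $\theta$ (and conversely, (i) for both $\theta$ returns $F_m\to M_H\bar\ell$, i.e. $A\to1$). For $B\to\gamma^{m-3}$ I would telescope $B=\prod_{k=2}^{m-2}\dfrac{\nu_{k+1}^H/\nu_{k+1}^L}{\nu_k^H/\nu_k^L}$ and bound each factor by $\gamma$ through the cut-balance identity $\Phi_k^\theta=\Psi_k^\theta$ of the modified chain, where $\Phi_k^\theta$ and $\Psi_k^\theta$ are the forward and backward fluxes across $\{2,\dots,k\}$: monotonicity of $\nu_i^H/\nu_i^L$ gives $\Phi_k^H/\Phi_k^L\le(\nu_k^H/\nu_k^L)\bar\ell$ and $\Psi_k^H/\Psi_k^L\ge(\nu_{k+1}^H/\nu_{k+1}^L)\underline\ell$, and cut balance sandwiches them, forcing each telescoping factor $\le\gamma$. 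When the product attains $\gamma^{m-3}$ every factor tends to $\gamma$, squeezing both inequalities to equalities; the weighted-average-to-max argument then concentrates the forward flux across each cut on state $k$ via signal $h$ and the backward flux on state $k+1$ via signal $l$, which is precisely (ii.a) together with (ii.b). Reading the same identities in reverse shows (ii.a)–(ii.b) imply $\Phi_k^H/\Phi_k^L\to(\nu_k^H/\nu_k^L)\bar\ell$, $\Psi_k^H/\Psi_k^L\to(\nu_{k+1}^H/\nu_{k+1}^L)\underline\ell$, hence each telescoping factor $\to\gamma$ and $B\to\gamma^{m-3}$.

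The main obstacle is this last step---the converse extraction of (ii.a) and (ii.b) from $B\to\gamma^{m-3}$---because it requires an \emph{asymptotic} version of the equality characterization in \Cref{lem:condition-for-irreducible}, which I obtain here by the telescoping-plus-cut-balance squeeze rather than by invoking the exact equality conditions. Two technical points must be handled along the way. First, the likelihood-ratio labelling of the transient states may vary with $n$ and can exhibit ties; I would restrict to the cofinite subsequence on which $\nu_2^H/\nu_2^L<\dots<\nu_{m-1}^H/\nu_{m-1}^L$ is strict, as justified by \Cref{rem:equal-likelihood-transient-states}, so that the maximizing pair in each concentration step is unique. Second, one must verify that all cut fluxes and absorption probabilities are eventually positive, so that the ratios and weighted-average arguments are well defined; in the forward direction this follows because $U^R(\Pi^n)\to U^{RR}(p,\gamma,m)>\max\{p,1-p\}$ forces $\alpha_n,\beta_n\in(0,1)$ (using $\mu_i^H=0\iff\mu_i^L=0$ from the proof of \Cref{lem:spread}), and in the backward direction positivity is already encoded in the hypotheses (i)–(iii), whose ``$\to1$'' ratios presuppose an irreducible modified chain with positive fluxes.
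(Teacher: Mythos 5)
Your proposal is correct, and its skeleton matches the paper's: reduce payoff convergence to convergence of $(\mu_m^H,\mu_1^L)$ to the unique interior optimizer of \eqref{eqn:receiver_opt_relax} (which splits into the spread ratio tending to $\gamma^{m-2}$ plus condition (iii)), expand the spread ratio via \eqref{lem:absorbing-formula} with the $\mathbb{E}[\tau^\theta]$ factors cancelling, and match factor-by-factor tightness of the resulting product $\gamma\cdot A\cdot B\cdot C$ to conditions (i) and (ii). Where you genuinely depart from the paper is the step linking condition (ii) to $B=\frac{\nu_{m-1}^H\nu_2^L}{\nu_{m-1}^L\nu_2^H}\to\gamma^{m-3}$: the paper treats this equivalence as a black box, citing Theorem 6 of Hellman--Cover through \Cref{rem:equal-likelihood-transient-states}, and uses it in both directions of the proof, whereas you re-derive it self-containedly by telescoping $B$ into adjacent-state factors and squeezing each factor with the cut-balance identity of the modified chain. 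This buys two things: the argument for \Cref{lem:3-conditions} no longer leans on an external asymptotic version of the equality characterization in \Cref{lem:condition-for-irreducible}, and the telescoping immediately delivers the asymptotic strict separation $\frac{\nu_{k+1}^H/\nu_{k+1}^L}{\nu_k^H/\nu_k^L}\to\gamma$ that both you and the paper need for the concentration (weighted-average-to-maximum) arguments behind the necessity of (i)---a point the paper handles somewhat loosely with chained strict inequalities. One small inconsistency worth fixing: you invoke \Cref{rem:equal-likelihood-transient-states} to dispose of ties in the likelihood-ratio labelling, but that remark itself rests on the Hellman--Cover citation you are trying to avoid; this is harmless, since finiteness of ties follows from your own telescoping conclusion (all adjacent factors tend to $\gamma>1$), so you should cite your own step there instead.
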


\begin{remark}
    The optimal absorption  and the optimal mixing conditions together are equivalent to

    \begin{equation*}
        \frac{\mu^H_m(\Pi^n)\mu^L_1(\Pi^n)}{\mu^L_m(\Pi^n)\mu^H_1(\Pi^n)}\rightarrow \gamma^{m-2}.
    \end{equation*}
The optimal bias condition ensures that the absorbing probabilities tend to the solutions to \eqref{eqn:receiver_opt_relax}.
\end{remark}

\begin{remark}\label{rem:equal-likelihood-transient-states}
    By Theorem 6 of \citet{hellman1970learning}, the optimal mixing condition is equivalent to 
    \begin{equation*}
        \frac{\nu_{m-1}^H(\Pi^n) \nu_{2}^L(\Pi^n)}{\nu_{m-1}^L(\Pi^n)\nu_{2}^H(\Pi^n)}\rightarrow\gamma^{m-3}.
    \end{equation*}
    If this holds, then
    \begin{equation*}
        \frac{\nu_{i+1}^H(\Pi^n) \nu_{i}^L(\Pi^n)}{\nu_{i+1}^L(\Pi^n)\nu_{i}^H(\Pi^n)}\rightarrow\gamma.
    \end{equation*}
    for every $2\leq i\leq m-2$. This implies that there can only be finite number of terms in the sequence where two transient states have equal likelihood ratios $\frac{\nu^H_i}{\nu^L_i}$.
\end{remark}

\begin{proof}[Proof of \Cref{lem:3-conditions}]
    By \Cref{lem:absorbing-formula},
 \begin{equation}\label{eqn:absorbing_prob}
    \begin{aligned}
        \frac{\mu_m^{H}(\Pi^n)\mu_1^{L}(\Pi^n)}{\mu_1^{H}(\Pi^n)\mu_m^{L}(\Pi^n)}&=\frac{\mathbb{E}[\tau^H(\Pi^n)](\sum_{2\leq i\leq m-1} \nu^H_i p^H_{im})\mathbb{E}[\tau^L(\Pi^n)](\sum_{2\leq i\leq m-1} \nu^L_i p^L_{i1})}{\mathbb{E}[\tau^L(\Pi^n)](\sum_{2\leq i\leq m-1} \nu^L_i p^L_{im})\mathbb{E}[\tau^H(\Pi^n)](\sum_{2\leq i\leq m-1} \nu^H_i p^H_{i1})}\\
        &=\frac{\sum_{i=2}^{m-1} \sum_{s\in S}\nu^H_i(\Pi^n)\pi_H(s)f^n(i,s)(m)}{\sum_{i=2}^{m-1} \sum_{s\in S}\nu^L_i(\Pi^n)\pi_L(s)f^n(i,s)(m)}\frac{\sum_{i=2}^{m-1} \sum_{s\in S}\nu^L_i(\Pi^n)\pi_L(s)f^n(i,s)(1)}{\sum_{i=2}^{m-1} \sum_{s\in S}\nu^H_i(\Pi^n)\pi_H(s)f^n(i,s)(1)}
    \end{aligned}
    \end{equation}
    Now we are ready to prove the if direction. If (i) and (ii) hold, then
    \allowdisplaybreaks
    \begin{align*}
        \frac{\mu_m^{H}(\Pi^n)\mu_1^{L}(\Pi^n)}{\mu_1^{H}(\Pi^n)\mu_m^{L}(\Pi^n)}
        &=\frac{\sum_{i=2}^{m-1} \sum_{s\in S}\nu^H_i(\Pi^n)\pi_H(s)f^n(i,s)(m)}{\sum_{i=2}^{m-1} \sum_{s\in S}\nu^L_i(\Pi^n)\pi_L(s)f^n(i,s)(m)}\frac{\sum_{i=2}^{m-1} \sum_{s\in S}\nu^L_i(\Pi^n)\pi_L(s)f^n(i,s)(1)}{\sum_{i=2}^{m-1} \sum_{s\in S}\nu^H_i(\Pi^n)\pi_H(s)f^n(i,s)(1)}\\
        &=\frac{\pi_H(h)\pi_L(l)}{\pi_L(h)\pi_H(l)} \frac{\nu_{m-1}^H(\Pi^n) \nu_{2}^L(\Pi^n)}{\nu_{m-1}^L(\Pi^n)\nu_{2}^H(\Pi^n)}\frac{\nu^L_{m-1}(\Pi^n)\pi_L(l)f^n(m-1,h)(m)}{\sum_{i=2}^{m-1} \sum_{s\in S}\nu^L_i(\Pi^n)\pi_L(s)f^n(i,s)(m)}\\
        &\frac{\sum_{i=2}^{m-1} \sum_{s\in S}\nu^H_i(\Pi^n)\pi_H(s)f^n(i,s)(m)} {\nu^H_{m-1}(\Pi^n)\pi_H(l)f^n(m-1,h)(m)}\frac{\nu^H_2(\Pi^n)\pi_H(l)f^n(2,l)(1)}{\sum_{i=2}^{m-1} \sum_{s\in S}\nu^H_i(\Pi^n)\pi_H(s)f^n(i,s)(1)}\\
        &\frac{\sum_{i=2}^{m-1} \sum_{s\in S}\nu^L_i(\Pi^n)\pi_L(s)f^n(i,s)(1)} {\nu^L_2(\Pi^n)\pi_L(l)f^n(2,l)(1)}\\
        &\rightarrow \gamma \gamma^{m-3}=\gamma^{m-2}.
    \end{align*}
    This together with (iii) gives $\mu^L_1(\Pi^n)\rightarrow \beta^*$ and hence $U^R(\Pi^n)\rightarrow \sup_{\Pi} U^R(\Pi)$.

    For the only if direction, (iii) is immediately necessary. Noticing that
    \begin{equation*}
        \frac{\nu^H_i(\Pi^n) \pi_H(s)}{\nu^L_i(\Pi^n) \pi_L(s)}\leq \frac{\nu^H_{m-1}(\Pi^n) \pi_H(h)}{\nu^L_{m-1}(\Pi^n) \pi_L(h)}
    \end{equation*}
    for all $(i,s)$ and the inequality is strict if $(i,s)\neq (m-1,h)$. Symmetrically, 
    \begin{equation*}
        \frac{\nu^L_i(\Pi^n) \pi_L(s)}{\nu^H_i(\Pi^n) \pi_H(s)}\leq \frac{\nu^L_{2}(\Pi^n) \pi_L(l)}{\nu^H_{2}(\Pi^n) \pi_H(l)}
    \end{equation*}
    for all $(i,s)$ and the inequality is strict if $(i,s)\neq (2,l)$. 
    
    Now from \eqref{eqn:absorbing_prob}, 
    \begin{equation*}
    \begin{aligned}
        \frac{\mu_m^{H}(\Pi^n)\mu_1^{L}(\Pi^n)}{\mu_1^{H}(\Pi^n)\mu_m^{L}(\Pi^n)}&\leq \frac{\pi_H(h)\pi_L(l)}{\pi_L(h)\pi_H(l)} \frac{\nu_{m-1}^H(\Pi^n) \nu_{2}^L(\Pi^n)}{\nu_{m-1}^L(\Pi^n)\nu_{2}^H(\Pi^n)} \\
        &= \gamma \frac{\nu_{m-1}^H(\Pi^n) \nu_{2}^L(\Pi^n)}{\nu_{m-1}^L(\Pi^n)\nu_{2}^H(\Pi^n)}.
    \end{aligned}
    \end{equation*}
    But by \eqref{hellman-spread-bound} 
    \begin{equation*}
        \frac{\nu_{m-1}^H(\Pi^n) \nu_{2}^L(\Pi^n)}{\nu_{m-1}^L(\Pi^n)\nu_{2}^H(\Pi^n)}\leq \gamma^{m-3}.
    \end{equation*}
    So if $\frac{\nu_{m-1}^H(\Pi^n) \nu_{2}^L(\Pi^n)}{\nu_{m-1}^L(\Pi^n)\nu_{2}^H(\Pi^n)}$ does not tend to $\gamma^{m-3}$, then $\frac{\mu_m^{H}(\Pi^n)\mu_1^{L}(\Pi^n)}{\mu_1^{H}(\Pi^n)\mu_m^{L}(\Pi^n)}$ does not tend to $\gamma^{m-2}$, hence $U^R(\Pi^n)$ does not tend to $\sup U^R(\Pi)$. Therefore, (ii) must also hold.

    Now notice that
    \begin{align*}
        \frac{\nu^H_i(\Pi^n) \pi_H(s)}{\nu^L_i(\Pi^n) \pi_L(s)}&\leq \max\left\{\frac{\nu^H_{m-1}(\Pi^n) }{\nu^L_{m-1}(\Pi^n)}\max_{s\neq h}\frac{\pi_H(s)}{\pi_L(s)},\frac{\nu^H_{m-2}(\Pi^n) }{\nu^L_{m-2}}(\Pi^n)\frac{\pi_H(h)}{\pi_L(h)}\right\}\\
        &\leq  \frac{\nu^H_{m-1}(\Pi^n) \pi_H(h)}{\nu^L_{m-1}(\Pi^n) \pi_L(h)} \max\left\{\frac{\max_{s\neq h}\frac{\pi_H(s)}{\pi_L(s)}}{\frac{\pi_H(h)}{\pi_L(h)}},\frac{1}{\gamma}\right\}\\
        &< \frac{\nu^H_{m-1}(\Pi^n) \pi_H(h)}{\nu^L_{m-1}(\Pi^n) \pi_L(h)}.
    \end{align*}
  
Similarly,
    \begin{align*}
        \frac{\nu^L_i(\Pi^n) \pi_L(s)}{\nu^H_i(\Pi^n) \pi_H(s)}&\leq \max\left\{\frac{\nu^L_{2}(\Pi^n) }{\nu^H_{2}(\Pi^n)}\max_{s\neq l}\frac{\pi_L(s)}{\pi_H(s)},\frac{\nu^L_{3}(\Pi^n) }{\nu^H_{3}(\Pi^n)}\frac{\pi_L(l)}{\pi_H(l)}\right\}\\
        &\leq  \frac{\nu^L_{2}(\Pi^n) \pi_L(l)}{\nu^H_{2}(\Pi^n) \pi_H(l)} \max\left\{\frac{\max_{s\neq l}\frac{\pi_L(s)}{\pi_H(s)}}{\frac{\pi_L(l)}{\pi_H(l)}},\frac{1}{\gamma}\right\}\\
        &<   \frac{\nu^L_{2}(\Pi^n) \pi_L(l)}{\nu^H_{2}(\Pi^n) \pi_H(l)}
    \end{align*}

  Thus, if either one in (i) of \Cref{lem:3-conditions} does not hold, by inspecting \eqref{eqn:absorbing_prob}, there must exist some $c<1$ such that
    \begin{equation*}
    \begin{aligned}
        \frac{\mu_m^{H}(\Pi^n)\mu_1^{L}(\Pi^n)}{\mu_1^{H}(\Pi^n)\mu_m^{L}(\Pi^n)}
        &=c\frac{\pi_H(h)\pi_L(l)}{\pi_L(h)\pi_H(l)} \frac{\nu_{m-1}^H(\Pi^n) \nu_{2}^L(\Pi^n)}{\nu_{m-1}^L(\Pi^n)\nu_{2}^H(\Pi^n)}\\
        &\leq c\gamma \gamma^{m-3}\\
        &=c\gamma^{m-2}.
    \end{aligned}  
    \end{equation*}
Therefore, $\frac{\mu_m^{H}(\Pi^n)\mu_1^{L}(\Pi^n)}{\mu_1^{H}(\Pi^n)\mu_m^{L}(\Pi^n)}$ does not tend to $\gamma^{m-2}$, and hence $U^R(\Pi^n)$ does not tend to $\sup U^R(\Pi)$, a contradiction. Thus, (i) in \Cref{lem:3-conditions} must hold. This completes the proof of \Cref{lem:3-conditions}.
\end{proof}

\begin{remark}
    The optimal absorption condition is equivalent to part (i) of \Cref{thm:behavioral_implication}.
If part (ii) of \Cref{thm:behavioral_implication} is not true, then there must exist subsequence of protocols $\{\Pi^{n_k}\}_{k=1}^\infty$,  $\epsilon>0$,  $2\leq i_0,i^*\leq m-1$,  $j^*\in \{1,m\}$, and $s^*\in S$, such that $i_0(\Pi^{n_k})=i_0$ and $f^{n_k}(i^*,s^*)(j^*)>\epsilon$ for all $k$. Since the receiver's payoffs from the subsequence also tend to the receiver's optimal payoff, for notational simplicity, we will write the subsequence simply as $\{\Pi^n\}_{n=1}^\infty$.
The following lemma covers all possible cases for part (ii) of \Cref{thm:behavioral_implication}. This completes the proof of \Cref{thm:behavioral_implication}.
\end{remark}

\begin{lemma}\label{lem:part_ii_thm}
    Suppose $\{\Pi^n\}_{n=1}^\infty$ is a sequence of parsimonious protocols and $m\geq 4$. Suppose $i_0(\Pi^{n})=i_0$ and $f^n(i^*,s^*)(j^*)>\epsilon$ for some $\epsilon>0$ for all $n$,
    
    \begin{enumerate}
        \item If $i^*\notin \{2,m-1\}$ and $i_0\neq i^*$, then the optimal absorption, the optimal mixing and the optimal bias conditions cannot all hold.
    
        \item If $i^*\notin \{2,m-1\}$ and $i_0=i^*$, then optimal absorption condition cannot hold.

        \item If $(i^*,j^*)=(2,m) \text{ or } (m-1,1)$, then the optimal absorption and the optimal bias conditions cannot both hold.

        \item If $(i^*,j^*)=(2,1)$, $s^*\neq l$, or $(i^*,j^*)=(m-1,m)$, $s^*\neq h$, then the optimal absorption condition cannot hold.

        \item If $(i^*,j^*,s^*)=(2,1,l) \text{ or } (m-1,m,h)$ and $i_0\neq i^*$, then optimal mixing condition cannot hold.

        \item If $(i^*,j^*,s^*)=(2,1,l) \text{ or } (m-1,m,h)$ and $i_0= i^*$, then the optimal absorption, the optimal mixing, and the optimal bias conditions cannot all hold.
    \end{enumerate}
\end{lemma}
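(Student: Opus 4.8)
The plan is to argue by contradiction and lean entirely on the characterization in \Cref{lem:3-conditions}: since the subsequence $\{\Pi^n\}$ still attains the receiver's optimal payoff, the optimal absorption, optimal mixing, and optimal bias conditions must all hold along it. In each of the six cases I would take the persistent transition $f^n(i^*,s^*)(j^*)>\epsilon$ and exhibit a fixed obstruction to the condition(s) named in that case, contradicting optimality. The six cases are exhaustive in the tuple (location of $i^*$; target $j^*$; signal $s^*$; whether $i_0=i^*$), so verifying them settles part (ii) of \Cref{thm:behavioral_implication}.

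Three ingredients are already available and do all the heavy lifting. First, the exact decomposition \eqref{eqn:absorbing_prob} writes the absorption ratio $\mu_m^H\mu_1^L/(\mu_1^H\mu_m^L)$ as a product of a ``flux into $m$'' ratio and a ``flux into $1$'' ratio, each a $\nu$-and-$\pi$ weighted average of the state--signal likelihood ratios $\nu_i^H\pi_H(s)/(\nu_i^L\pi_L(s))$. Second, the strict inequalities established in the proof of \Cref{lem:3-conditions} bound every such ratio with $(i,s)\neq(m-1,h)$ (resp. $(2,l)$) below the extremal one by a multiplicative factor $\max\{\cdots,\,1/\gamma\}<1$ that is uniform in $n$; this is the source of every ``fixed obstruction.'' Third, \eqref{hellman-spread-bound} caps the stationary spread by $\gamma^{m-3}$, with equality requiring, in the modified chain, that every up-step use $h$, every down-step use $l$, and no step skip a state. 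I would also use repeatedly that the modified chain funnels every original transition into $\{1,m\}$ to the single recycling state $i_0$, and that because all escape mass returns to $i_0$, the recycling state carries non-negligible stationary weight.

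With these, each case reduces to locating a uniform multiplicative loss, and the split by whether $i_0=i^*$ is exactly the split by whether the redirect $i^*\to i_0$ is a benign self-loop or a structure-breaking jump. In Cases~2 and~4 the persistent mass sits either on state $2$ (or $m-1$) with a non-extremal signal, or on an interior state with $i_0=i^*$ whose self-loop leaves the modified chain intact; in both, since $\nu^\theta_{i^*}$ stays non-negligible (it is state $2$, or it is $i_0$ itself), the persistent term keeps the concentration ratio of optimal absorption bounded away from one, so optimal absorption alone fails. In Case~5 the near redirect $(2,l)\to i_0$ with $i_0\neq2$ is a forbidden up-via-$l$ or skipping step, so optimal mixing alone fails. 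Cases~1 and~3 yield only the joint impossibility: in Case~1 the redirect either skips a state (killing mixing) or is a legitimate birth--death step, in which case, granting mixing so that $\nu^\theta_{i^*}/\nu^\theta_2$ is a fixed positive constant, the interior flux ruins absorption; in Case~3 the wrong-target flux from the $\nu$-heavy end overloads the wrong absorbing state under the unfavorable $\theta$, which, once absorption is granted, is incompatible with optimal bias.

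The main obstacle is Case~6, where the persistent transition is the \emph{correct} one $(2,l)\to1$ (or $(m-1,h)\to m$) and $i_0=i^*$, so its redirect is a harmless self-loop and neither flux concentration nor the local birth--death structure is disturbed. Here the contradiction must be extracted indirectly. Granting optimal absorption and optimal mixing, the clean flux formulas reduce $\mu_m^H$ to a fixed constant times $f^n(m-1,h)(m)/f^n(2,l)(1)$; optimal bias, $\mu_m^H\to\alpha^*\in(0,1)$, then forces $f^n(m-1,h)(m)$ to be bounded below, since $f^n(2,l)(1)>\epsilon$. But that opposite-end transition is redirected to the single recycling state $i_0=2$, producing a step $m-1\to2$ of non-vanishing probability in the modified chain---a skip when $m\geq5$, and a down-via-$h$ step when $m=4$---which violates the no-skip/extremal-signal requirement behind the spread $\gamma^{m-3}$ and hence contradicts optimal mixing. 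Carefully tracking this ``bias forces the far transition, whose single-target redirect then breaks mixing'' chain, together with the symmetric top-end statement, is where the real work lies.
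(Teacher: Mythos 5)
Your overall architecture is the paper's: take the persistent transition $f^n(i^*,s^*)(j^*)>\epsilon$ and, case by case, play it against the three conditions of \Cref{lem:3-conditions} to contradict exactly the combination named in each case. Cases 4 and 5 are essentially correct as you describe them (in Case 4 the competing fluxes into $j^*$ share the factor $\nu^\theta_2$, so the absorption concentration ratio is capped at $\frac{\pi_\theta(l)}{\pi_\theta(l)+\pi_\theta(s^*)\epsilon}<1$; in Case 5 the redirected $(2,l)$ mass crosses the cut at $k=2$ with likelihood ratio $\underline{\ell}$, capping the (ii.b) ratio strictly below $\bar{\ell}$). Case 2 is also salvageable, but only under the flux reading of your claim: $\nu_{i_0}$ itself can vanish (it is bounded below by $1/\mathbb{E}[\tau^\theta]$, not by a constant), so ``non-negligible stationary weight'' must mean non-negligible relative to the total absorption flux; with that reading your argument works, and the paper gets the same conclusion more simply from the first-period absorption probability $\epsilon\pi_\theta(s^*)$ out of the initial state $i_0=i^*$.

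The genuine gap is a recurring conflation, fatal as written in Cases 1 and 6: you treat optimal mixing and optimal bias as pinning down \emph{within-$\theta$ magnitudes} of $\nu$ and \emph{transition probabilities}, when they only constrain \emph{fluxes} and \emph{cross-$\theta$ likelihood ratios}. In Case 1(b), ``granting mixing, $\nu^\theta_{i^*}/\nu^\theta_2$ is a fixed positive constant'' does not follow: mixing (via \Cref{lem:condition-for-irreducible} and \Cref{rem:equal-likelihood-transient-states}) drives $\frac{\nu^H_i/\nu^L_i}{\nu^H_j/\nu^L_j}$ to powers of $\gamma$ but says nothing about $\nu^\theta_{i^*}/\nu^\theta_2$, which may vanish. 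In Case 1(a), a persistent skip out of a state whose stationary weight vanishes is compatible with asymptotic mixing, so ``skipping kills mixing'' is false unconditionally; the right dichotomy is $i_0>i^*$ versus $i_0<i^*$ (the latter is where bias is needed), because the violation of (ii.a) at the cut near $i^*$ comes from the redirect of the \emph{dominant} flux (from $(2,l)$ or $(m-1,h)$) crossing that cut, not from the skip itself. In Case 6, ``absorption and mixing reduce $\mu^H_m$ to a fixed constant times $f^n(m-1,h)(m)/f^n(2,l)(1)$, so bias forces $f^n(m-1,h)(m)$ bounded below'' fails for the same reason: bias and absorption only force $\frac{\nu^H_{m-1}\pi_H(h)f^n(m-1,h)(m)}{\nu^H_2\pi_H(l)f^n(2,l)(1)}\to\frac{\alpha^*}{1-\alpha^*}$, which is consistent with $f^n(m-1,h)(m)\to 0$ and $\nu^H_{m-1}/\nu^H_2\to\infty$. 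The paper's proof replaces all of these steps by flux comparisons: in Case 6 it invokes the balance equation at $i_0=2$ (down-flux into $2$ equals up-flux out of $2$, hence is at most $\nu^H_2$), so bias and absorption alone make the redirected $(m-1,h)$ flux at least the fraction $\epsilon\pi_H(l)\frac{\alpha^*}{2(1-\alpha^*)}$ of all down-flux into $2$, contradicting (ii.a) when $m\geq 5$ and (ii.b) when $m=4$. Your endpoints are the correct ones, but the magnitude and probability claims in the middle are not derivable and must be rewritten as these flux inequalities.
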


\begin{proof}

\begin{enumerate}

    \item Without loss of generality, assume $j=1$.     If $i_0>i^*$, then by the optimal absorption condition,

    \begin{equation*}
        \frac{\nu^\theta_{i^*} (\Pi^n) \pi_\theta(s^*) f^{n}(i^*,s^*)(1)}{\nu^\theta_2 (\Pi^n) \pi_\theta(l) f^{n}(2,l)(1)}\rightarrow 0.
    \end{equation*}

    Since $f^{n}(i^*,s^*)(1)>\epsilon$ for all $n$,

    \begin{equation*}
        \frac{\nu^\theta_{i^*} (\Pi^n)}{\nu^\theta_2 (\Pi^n) \pi_\theta(l) f^{n}(2,l)(1)}\rightarrow 0.
    \end{equation*}
    Therefore,
    \begin{equation*}
    \begin{aligned}
        \frac{\sum_{j=i^*+1}^{m-1}\sum_{s\in S}\nu^\theta_{i^*}(\Pi^n)\pi_\theta(s)g^n(i^*,s)(j)}{\sum_{i=2}^{i^*}\sum_{j=i^*+1}^{m-1}\sum_{s\in S}\nu^\theta_i(\Pi^n)\pi_\theta(s)g^n(i,s)(j)}&\leq \frac{\nu^\theta_{i^*} (\Pi^n)}{\nu^\theta_2 (\Pi^n) \pi_\theta(l) f^{n}(2,l)(1)} \\
        &\rightarrow 0,
    \end{aligned}    
    \end{equation*}
contradicting the optimal mixing condition (ii.a).

    If $i_0>i^*$, similar to the previous case, we have
   \begin{equation}\label{eqn:stationary_ratio}
        \frac{\nu^\theta_{i^*} (\Pi^n)}{\nu^\theta_2 (\Pi^n) \pi_\theta(l) f^{n}(2,l)(1)}\rightarrow 0.
    \end{equation}
    By the optimal bias condition (or that the receiver's payoff tends to his optimal payoff), we have
    \begin{equation*}
        \frac{\mu^H_m(\Pi^n)}{\mu^H_1(\Pi^n)}\rightarrow \frac{\alpha^*}{1-\alpha^*};
    \end{equation*}
    \begin{equation*}
        \frac{\mu^L_m(\Pi^n)}{\mu^L_1(\Pi^n)}\rightarrow \frac{1-\beta^*}{\beta^*}.
    \end{equation*}

    By \eqref{lem:absorbing-formula}, 
    \begin{equation*}
        \frac{\mu^\theta_m(\Pi^n)}{\mu^\theta_1(\Pi^n)}=\frac{\sum_{i=2}^{m-1} \sum_{s\in S}\nu^\theta_i(\Pi^n)\pi_\theta(s)f^n(i,s)(m)}{\sum_{i=2}^{m-1} \sum_{s\in S}\nu^\theta_i(\Pi^n)\pi_\theta(s)f^n(i,s)(1)}.
    \end{equation*}
    By the optimal absorption condition,
    \begin{equation*}
        \frac{\nu^H_{m-1} (\Pi^n) \pi_H(h) f^{n}(m-1,h)(m)}{\nu^H_2 (\Pi^n) \pi_H(l) f^{n}(2,l)(1)}\rightarrow \frac{\alpha^*}{1-\alpha^*};
    \end{equation*}
    \begin{equation*}
        \frac{\nu^L_{m-1} (\Pi^n) \pi_L(h) f^{n}(m-1,h)(m)}{\nu^L_2 (\Pi^n) \pi_L(l) f^{n}(2,l)(1)}\rightarrow \frac{1-\beta^*}{\beta^*}.
    \end{equation*}
Together with \Cref{eqn:stationary_ratio}, we have
    \begin{equation*}
        \frac{\nu^\theta_{i^*} (\Pi^n)}{\nu^\theta_{m-1} (\Pi^n) \pi_\theta(h) f^{n}(m-1,h)(m)}\rightarrow 0
    \end{equation*}
Similar to the previous case,
    \begin{equation*}
    \begin{aligned}
        \frac{\sum_{j=2}^{i^*-1}\sum_{s\in S}\nu^\theta_{i^*}(\Pi^n)\pi_\theta(s)g^n(i^*,s)(j)}{\sum_{i=i^*}^{m-1}\sum_{j=2}^{i^*-1}\sum_{s\in S}\nu^\theta_i(\Pi^n)\pi_\theta(s)g^n(i,s)(j)}&\leq \frac{\nu^\theta_{i^*} (\Pi^n)}{\nu^\theta_{m-1} (\Pi^n) \pi_\theta(h) f^{n}(m-1,h)(m)} \\
        &\rightarrow 0,
    \end{aligned}    
    \end{equation*} 
violating the optimal mixing condition (ii.a).

     \item By considering the probability of being absorbed in the first period,
    \begin{equation*}
        \textup{P}^\theta(m_{\tau^\theta(\Pi^n)-1}=2, s_ {\tau^\theta(\Pi^n)-1}=l | m_{\tau^\theta(\Pi^n)}=1)< 1-\epsilon \pi_\theta(s^*)
    \end{equation*}

    since the probability of receiving signal $s^*$ and being absorbed immediately is $\epsilon \pi_\theta(s^*)$. This violates the optimal absorption condition.

    \item Without loss of generality, we only prove the case $(i^*,j^*)=(2,m)$.
 By the optimal absorption condition,
    \begin{equation*}
        \frac{\nu^H_{2} (\Pi^n) \pi_H(s^*) f^{n}(2,s^*)(m)}{\nu^H_{m-1} (\Pi^n) \pi_H(h) f^{n}(m-1,h)(m)}\rightarrow 0.
    \end{equation*}
However,
    \begin{align*}
        \frac{\nu^H_{2} (\Pi^n) \pi_H(l) f^{n}(2,l)(1)}{\nu^H_{m-1} (\Pi^n) \pi_H(h) f^{n}(m-1,h)(m)}&= \frac{\pi_H(l) f^{n}(2,l)(1)}{\pi_H(s^*) f^{n}(2,s^*)(m)}\frac{\nu^H_{2} (\Pi^n) \pi_H(h) f^{n}(2,s^*)(m)}{\nu^H_{m-1} (\Pi^n) \pi_H(h) f^{n}(m-1,h)(m)}\\
        &\leq \frac{\pi_H(l)}{\pi_H(s^*)\epsilon}\frac{\nu^H_{2} (\Pi^n) \pi_H(h) f^{n}(2,s^*)(m)}{\nu^H_{m-1} (\Pi^n) \pi_H(h) f^{n}(m-1,h)(m)}\\
        &\rightarrow 0.
    \end{align*} 
Therefore, $\frac{\mu^H_m(\Pi^n)}{\mu^H_1(\Pi^n)}\rightarrow 0,$   contradicting the optimal bias condition.

    \item  Without loss of generality, we only prove the case $(i^*,j^*)=(2,1)$, $s^*\neq l$. Notice that
    \begin{equation*}
    \begin{aligned}
        \frac{\nu^\theta_2(\Pi^n)\pi_\theta(l)f^n(2,l)(1)}{\sum_{i=2}^{m-1} \sum_{s\in S}\nu^\theta_i(\Pi^n)\pi_\theta(s)f^n(i,s)(1)}&\leq \frac{\nu^\theta_2(\Pi^n)\pi_\theta(l)f^n(2,l)(1)}{\nu^\theta_2(\Pi^n)\pi_\theta(l)f^n(2,l)(1)+\nu^\theta_2(\Pi^n)\pi_\theta(s^*)f^n(2,s^*)(1)}\\
        &\leq \frac{\pi_\theta(l)f^n(2,l)(1)}{\pi_\theta(l)f^n(2,l)(1)+\pi_\theta(s^*)\epsilon}\\
        &\leq \frac{\pi_\theta(l)}{\pi_\theta(l)+\pi_\theta(s^*)\epsilon}\\
        &<1.
    \end{aligned}
    \end{equation*}
   So the optimal absorption condition is violated.

    \item Without loss of generality, we only prove the case $(i^*,j^*,s^*)=(2,1,l)$. Notice that
    \begin{equation*}
    \begin{aligned}
        \frac{\sum_{j=3}^{m-1}\sum_{s\in S} \pi_H(s)g^n(2,s)(j)}{\sum_{j=3}^{m-1}\sum_{s\in S} \pi_L(s)g^n(2,s)(j)}&\leq \frac{\pi_H(l)f(2,l)(1)+\bar{\ell}}{\pi_L(l)f(2,l)(1)+1}\\
        &\leq \frac{\pi_H(l)\epsilon+\bar{\ell}}{\pi_L(l)\epsilon+1}\\
        &<\bar{\ell},
    \end{aligned}
    \end{equation*}
 violating the optimal mixing condition (ii.b).

    \item Without loss of generality, we only prove the case $(i^*,j^*,s^*)=(2,1,l)$, $i_0=i^*$. By the optimal bias and the optimal absorption conditions,
    \begin{equation*}
        \frac{\nu^H_{m-1} (\Pi^n)\pi_H(h)f^n(m-1,h)(m)}{\nu^H_{2}(\Pi^n)\pi_H(l)f^n(2,l)(1)}\rightarrow \frac{\alpha^*}{1-\alpha^*}.
    \end{equation*}
Therefore, for large enough $n$,
    \begin{equation*}
        \frac{\nu^H_{m-1}(\Pi^n)\pi_H(h)f^n(m-1,h)(m)}{\nu^H_{2}(\Pi^n)\pi_H(l)f^n(2,l)(1)}> \frac{\alpha^*}{2(1-\alpha^*)}.
    \end{equation*}
Now consider the balance equation for ${T}(\Pi^n)$ between sets $\{2\}$ and $\{3,...,m-1\}$:
    \begin{equation*}
        \nu^H_2 (\Pi^n)\sum_{i=3}^{m-1} \sum_{s\in S} \pi_H(s)f^n(2,s)(i)= \sum_{i=3}^{m-1} \sum_{s\in S} \nu^H_i(\Pi^n) \pi_H(s)g^n(i,s)(2).
    \end{equation*}
So for large enough $n$,
    \begin{align*}
        \frac{\nu^H_{m-1}(\Pi^n)\pi_H(h)f^n(m-1,h)(m)}{\sum_{i=3}^{m-1} \sum_{s\in S} \nu^H_i(\Pi^n) \pi_H(s)g^n(i,s)(2)}&=\frac{\nu^H_{m-1}(\Pi^n)\pi_H(h)f^n(m-1,h)(m)}{\nu^H_2(\Pi^n) \sum_{i=3}^{m-1} \sum_{s\in S} \pi_H(s)f^n(2,s)(i)}\\
        &\geq \frac{\nu^H_{m-1}(\Pi^n)\pi_H(h)f^n(m-1,h)(m)}{\nu^H_2(\Pi^n)}\\
        &=\frac{\nu^H_{m-1}(\Pi^n)\pi_H(h)f^n(m-1,h)(m)}{\nu^H_2(\Pi^n) \pi_H(l)f^n(2,l)(1)} \pi_H(l)f^n(2,l)(1)\\
        &> \epsilon \pi_H(l)\frac{\alpha^*}{2(1-\alpha^*)}.
    \end{align*}
    If $m-1\neq 3$, this implies that
    \begin{equation*}
        \frac{\nu^L_{3}(\Pi^n)\sum_{s\in S}\pi_L(s)g^n(3,s)(2)}{\sum_{i=3}^{m-1} \sum_{s\in S} \nu^L_i(\Pi^n) \pi_L(s)g^n(i,s)(2)}<1-\epsilon \pi_H(l)\frac{\alpha^*}{2(1-\alpha^*)},
    \end{equation*}
contradicting the optimal mixing condition (ii.a).

    If $m-1=3$, let $c=\epsilon \pi_H(l)\frac{\alpha^*}{2(1-\alpha^*)}$, and then

    \begin{equation*}
    \begin{aligned}
        \frac{\sum_{s\in S}\pi_H(s)g^n(3,s)(2)}{\sum_{s\in S}\pi_L(s)g^n(3,s)(2)}&\geq \frac{c\bar{\ell}+(1-c)\underline{\ell}}{c+1-c}\\
        &=c\bar{\ell}+(1-c)\underline{\ell}\\
        &>\underline{\ell},
    \end{aligned}
    \end{equation*}
violating the optimal mixing condition (ii.b).
\end{enumerate}
\end{proof}

\printbibliography

\end{document}